\DeclarePairedDelimiter\bbrac{\{}{\}}
\newcommand{\bbr}[1]{\bbrac*{#1}}
\def\R{\mathbb{R}}
\def\calX{\mathcal{X}}
\def\rmX{\mathrm{X}}
\def\Xcstr{\mathrm{X}_{\mathrm{cstr}}}
\def\Xinit{\mathrm{X}_{\mathrm{init}}}
\def\Xedls{\mathrm{X}_{\mathrm{edls}}}
\def\Xinitg{\mathrm{X}_{\mathrm{init}}^g}
\def\Xedlsg{\mathrm{X}_{\mathrm{edls}}^g}
\def\Xedlss{\mathrm{X}_{\mathrm{edls}}^*}
\def\calU{\mathcal{U}}
\def\pmb#1{\mathbf{#1}}
\def\st{\text{ s.t. }}
\newcommand{\change}[1]{#1}
\title{The Feasibility Theory of Constrained Reinforcement Learning: \\A Tutorial Study}
\author[1]{Yujie Yang}
\author[2]{Zhilong Zheng}
\author[3]{Masayoshi Tomizuka}
\author[4]{Changliu Liu}
\author[5]{Shengbo Eben Li}
\affil[1]{School of Vehicle and Mobility, Tsinghua University; yangyj21@mails.tsinghua.edu.cn}
\affil[2]{School of Vehicle and Mobility, Tsinghua University; zheng-zl22@mails.tsinghua.edu.cn}
\affil[3]{Department of Mechanical Engineering, University of California, Berkeley; tomizuka@berkeley.edu}
\affil[4]{Robotics Institute, Carnegie Mellon University; cliu6@andrew.cmu.edu}
\affil[5]{School of Vehicle and Mobility \& College of AI, Tsinghua University; lishbo@tsinghua.edu.cn}
\begin{document}

\makeabstracttitle

\begin{abstract}
Satisfying safety constraints is a priority concern when solving optimal control problems (OCPs). Due to the existence of infeasibility phenomenon, where a constraint-satisfying solution cannot be found, it is necessary to identify a feasible region before implementing a policy. Existing feasibility theories in model predictive control (MPC) only apply to the case where the policy is a solution to a specific OCP, either optimal or suboptimal. Such feasibility is essentially feasibility of OCPs, not that of policies or states. However, reinforcement learning (RL), as another important control method, represents a policy as a mapping from state to action, which itself is decoupled with OCPs. An RL policy may be not a solution to a specific OCP, i.e., not satisfying the OCP's constraints, especially at an early stage of training. Feasibility analysis of such inadequately trained policies is necessary for safety improvement in RL; but that is not available under existing MPC feasibility theories. This paper proposes a feasibility theory that applies to both MPC and RL by decoupling states, constraints, and policies. Starting from a state, different constraints can be constructed, and different policies can be applied. We reveal that feasibility depends on the combination of these three elements, extending the traditional viewpoint of OCP-specific feasibility theories in MPC. The basis of our theory is to distinguish initial and endless, state and policy feasibility, and their corresponding feasible regions. Based on these concepts, we analyze the containment relationships between different feasible regions, which enables us to describe feasibility under arbitrary combinations of states, constraints, and policies. We further provide virtual-time constraint design rules along with a practical design tool called feasibility function that helps to achieve the maximum feasible region. The feasibility function either represents a control invariant set or aggregates infinite steps of constraints into a single one. We review most of existing constraint formulations and point out that they are essentially applications of feasibility functions in different forms. We demonstrate our feasibility theory by visualizing different feasible regions under both MPC and RL policies in an emergency braking control task.
\end{abstract}



\chapter{Introduction}
Optimal control is an important theoretical framework for sequential decision-making and control. The goal of solving an optimal control problem (OCP) is to find a policy that maximizes some performance index, usually measured through cumulative rewards. In many real-world control tasks, such as robotics \citep{brunke2022safe}, aerospace engineering \citep{ravaioli2022safe}, and autonomous driving \citep{guan2022integrated}, policies \change{must optimize performance while also guaranteeing safety}. \change{These} problems \change{are} formulated as constrained OCPs, where certain state constraints must be strictly satisfied at every time step.

In a constrained OCP, satisfying constraints at a single time step is \change{insufficient} to ensure long-term safety. \change{A} policy may still \change{reach} a state where \change{no future constraint-satisfying actions exist}. This issue is called the infeasibility phenomenon \citep{li2023reinforcement}. To avoid this problem, the feasible region of a policy, where there is always a constraint-satisfying solution, must be identified before real-world deployment. \change{Rigorously defining and analyzing feasible regions requires a general theory of feasibility for constrained OCPs.}

Existing feasibility theories are \change{primarily developed} for model predictive control (MPC), which \change{computes optimal actions} online through receding horizon optimization. A central concept in these theories is \textit{persistent feasibility} \citep{zhang2016switched, borrelli2017predictive}, also \change{known as} \textit{recursive feasibility} \citep{lofberg2012oops, boccia2014stability}, which describes long-term constraint satisfaction in a receding horizon control (RHC) problem. Specifically, an RHC problem is persistently feasible if its initially feasible set equals its maximal positive invariant set. \change{Under this condition, any initial constraint-satisfying solution ensures that subsequent receding horizon optimizations will not encounter infeasibility.}
Persistent feasibility only applies to the optimal policy of a constrained OCP because its maximal positive invariant set is defined \change{for} the closed-loop system under the optimal MPC controller. The concept of \textit{strong feasibility} \citep{gondhalekar2009controlled, gondhalekar2011mpc} and suboptimal MPC \citep{scokaert1999suboptimal, pannocchia2011conditions} expand persistent feasibility from the optimal policy to any initially feasible policy. \change{They guarantee long-term safety not by requiring an optimal solution, but merely a feasible solution to the RHC problem at each step.} Specifically, an RHC problem is strongly feasible if, from every state in the initially feasible set, the trajectory under any feasible solution remains in this set.

A \change{key} limitation of \change{MPC-based} feasibility theories is that they only apply to \change{policies that are solutions to a specific OCP---either optimal or suboptimal---meaning the policy inherently satisfies the OCP's constraints. In this  context, the policy is coupled with the OCP, and the feasibility analysis pertains to the problem itself, not to the policy or state in isolation}. However, reinforcement learning (RL), as another important control method, represents a policy as a mapping from state to action, \change{decoupling it from any single OCP}. Although RL \change{also solves an OCP, it does not obtain the optimal solution until training converges. During early training stages, an RL policy may not even be a feasible solution to the intended OCP.} Feasibility analysis of such inadequately trained policies is \change{crucial} for safety improvement in RL, \change{yet this falls outside the scope of existing MPC feasibility theories}.

\change{There are a few existing works on safe RL that mention the concept of feasibility and leverage it to guarantee long-term safety. For example, \citet{ma2021feasible} define feasible states as those from which the cumulative cost constraints can be satisfied under a given policy, and introduce the feasible region as the set of all states that are feasible under any policy. \citet{yu2022reachability} define the feasible set of a policy as the set of states where infinite-horizon state constraints can be satisfied, and introduce the largest feasible set as the union of feasible sets over all policies. \cite{zheng2024safe} follow the definition of \citet{yu2022reachability} and extend the application of feasibility to offline settings.
However, these works have not yet formed a systematic feasibility framework and use varying, often inconsistent, terminology. More importantly, their notion of feasibility refers to the satisfaction of real-world safety constraints during policy execution, which differs from the classical MPC notion of feasibility, i.e., the existence of a solution to the OCP. In the context of RL, feasibility should also consider the existence of solutions to the underlying OCP during the training phase, where algorithmic constraints may differ from those in deployment. This aspect remains under-explored in existing safe RL literature and constitutes a major focus of our work.}

\begin{figure}
    \centering
    \includegraphics[width=\linewidth, trim=0 20 0 10, clip]{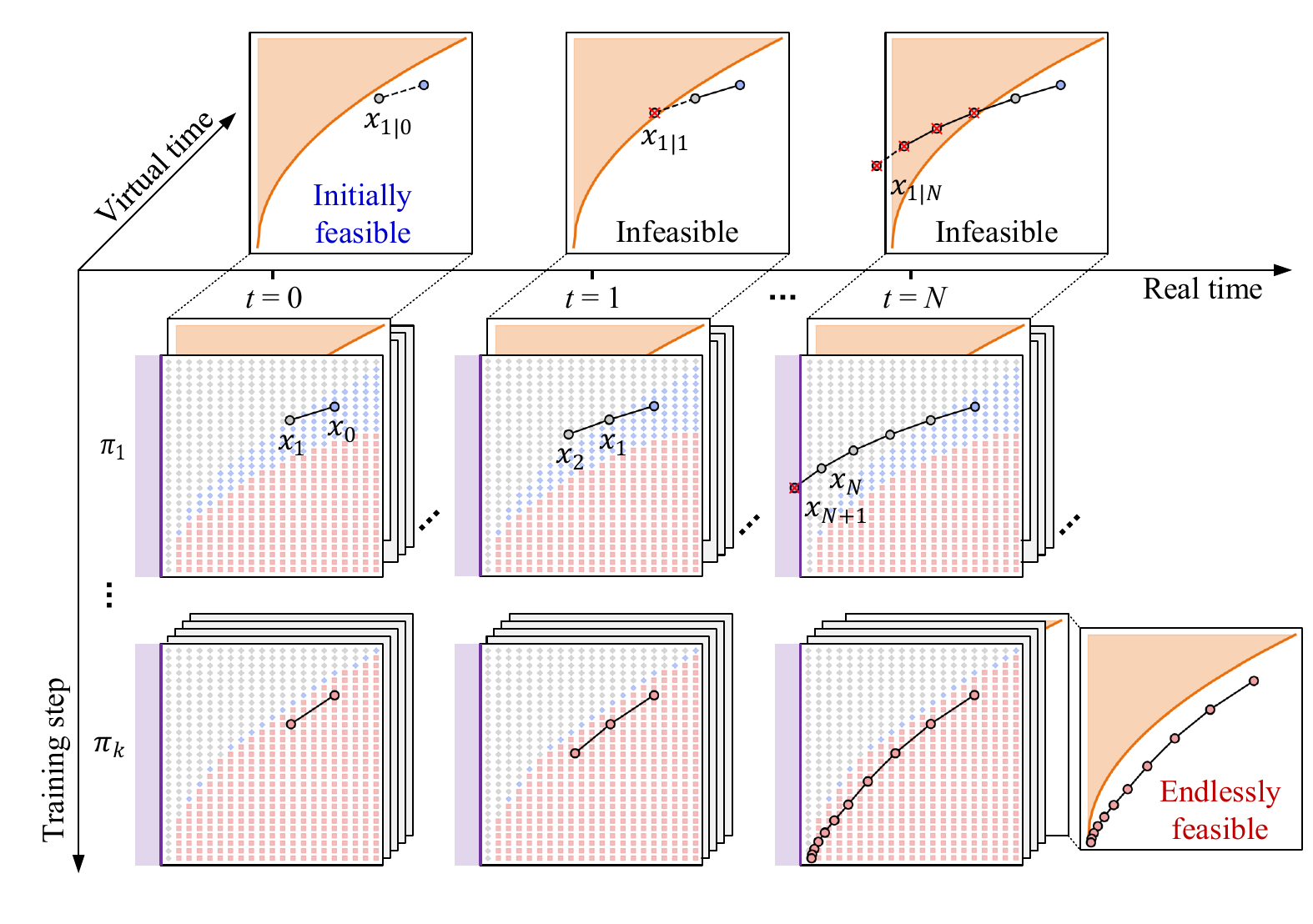}
    \caption{\change{Illustration of core concepts in feasibility theory. The second row shows state trajectories and feasible regions of an RL policy trained for 50 steps under a Hamilton-Jacobi reachability constraint, with the first row demonstrating the first virtual-time step corresponding to each real-time step. The rest grey shaded squares are stacked along the virtual-time axis. The third row shows the results of 10000 training steps. The purple-shaded areas represent real-time constraints, and the orange-shaded areas represent virtual-time constraints. The circles stand for the states at every step on a trajectory. The squares and diamonds stand for regular states so as to show different feasibility regions. Regardless of the shape, all blue marks represent initially feasible states, all red marks for endlessly feasible states, and all gray marks for infeasible states. Intuitively, ``initially feasible'' means that there exists a policy satisfying the virtual-time constraint at the current step, while ``endlessly feasible'' means that such a policy will always exist. A state with a red cross indicates violation of real-time or virtual-time constraints. The data in the figure comes from a numerical example of emergency braking control, and details can be found in Chapter~\ref{sec: experiments}.}}
    \label{fig: framework}
\end{figure}

In order to fill this gap, we propose a set of feasibility theories for constrained OCPs that apply to both MPC and RL. The key to our theory is a decoupling view of states, constraints, and policies in constrained OCPs. Starting from a state, different constraints can be constructed, and different policies can be applied. We reveal that feasibility depends on the combination of these three elements instead \change{of} just on \change{the} OCP itself, which extends the traditional viewpoint of OCP-specific feasibility theories in MPC. Figure \ref{fig: framework} illustrates the core concepts of our feasibility theory. Any optimal control method follows the practice of solving \change{an} OCP in \change{the} virtual-time domain and implementing \change{a} policy in \change{the} real-time domain. Each real-time step corresponds to a virtual-time constrained OCP, \change{solved either at each step online (as in MPC) or offline before deployment (as in RL)}.
\change{
In the real-time domain, the control task naturally has some intrinsic constraints, termed real-time constraints, that are immutable by human design, such as collision avoidance.
However, the original real-time constraints may be challenging to solve.
As a result, in the virtual-time domain, it is typically necessary to design the virtual-time constraints differing from the real-time ones to facilitate solving.
The satisfiability of virtual-time constraints over time defines different types of feasibility and feasible regions.
}
For a typical safe RL algorithm, the feasible region of its policy starts as a small set and gradually expands during training, \change{eventually approaching that} of the optimal policy.
\change{This is because the policy is typically initialized with random parameters, leading to unpredictable and often unsafe actions, which corresponds to a feasible region. As training proceeds, the policy is explicitly optimized to maximize reward while minimizing constraint violations. This process gradually guides the policy towards safer behaviors, effectively expanding the feasible region.}
\change{In contrast, MPC, which directly computes optimal control sequences, typically maintains a fixed feasible region.}
The main contributions of this paper are summarized as follows.
\begin{itemize}
    \item We propose a set of feasibility theories for constrained OCPs that apply to both MPC and RL. \change{By decoupling} states, constraints, and policies, we \change{define separate notions of initial and endless feasibility for both states and policies, characterizing short-term and long-term safety with and without a specific policy}. \change{Based on these}, we derive four kinds of feasible regions and \change{establish} the relationship between the maximum feasible region and real-time constraints.
    \item We \change{derive} and prove containment relationships \change{among} all feasible regions, which provides a tool for analyzing feasibility \change{for} arbitrary combinations of states, constraints, and policies. In particular, we analyze the relationship between a policy-specific feasible region and the maximum feasible region, which is a \change{primary} concern when solving a constrained OCP. \change{We also show that for any policy, its feasible region is bounded above and below by known feasible regions.}
    \item We provide \change{practical rules for designing virtual-time constraints and introduce a tool called feasibility function to help achieve the maximum feasible region}. Specifically, the maximum feasible region \change{is achieved by} designing a maximum initially feasible region when initially and endlessly feasible regions are equal. The feasibility function \change{guarantees} this equality by either constructing a control invariant set or aggregating \change{infinite-horizon} constraints into a \change{single step}. We review existing constraint formulations and \change{show} that they are essentially \change{specific} applications of feasibility functions.
\end{itemize}

\change{It is worth emphasizing that what we intend to propose in this paper is a feasibility theory regarding how different designs of virtual-time constraint influence the feasibility property of the solution of an constrained OCP, rather than a specific solving method.
Hence, this paper does not cover discussions on how to obtain or to what extent we can obtain the optimal solution of an constrained OCP.}

The rest of this paper is organized as follows.
Chapter \ref{two temporal domains} introduces a decoupling view of two temporal domains.
Chapter \ref{sec: infeasibility phenomenon} illustrates the infeasibility phenomenon.
We define feasibility and feasible regions in Chapter \ref{feasibility and feasible region} and \change{analyze their containment relationships} in Chapter \ref{feasible region property}, \change{we then introduce the} feasibility function for constraint design in Chapter \ref{feasibility function} and review existing constraint formulations in Chapter \ref{constraint review}.
Finally, Chapter \ref{sec: experiments} demonstrates feasible regions with experiments, and Chapter \ref{conclusion} concludes the paper.

\chapter{Constrained Optimal Control Problems in Two Temporal Domains}
\label{two temporal domains}
\section{Constrained optimal control problems}
A constrained OCP can be described by a deterministic Markov decision process $(\mathcal{X},\mathcal{U},f,r,\gamma,d_{\mathrm{init}})$, where $\mathcal{X}\subseteq\mathbb{R}^n$ is the state space, $\mathcal{U}\subseteq\mathbb{R}^m$ is the action space, $f: \mathcal{X}\times\mathcal{U}\to\mathcal{X}$ is the dynamic model, $r: \mathcal{X}\times\mathcal{U}\to\mathbb{R}$ is the reward function, $\gamma\in(0,1]$ is a discounting factor and $d_{\mathrm{init}}$ is the initial state distribution.

The constraint is specified through inequalities
\begin{equation}
\label{real-time constraint}
    h(x_{t+i})\le0, i=0,1,2,\dots,\infty,
\end{equation}
where $h: \mathcal{X}\to\mathbb{R}$ is the constraint function. The constrained set is defined as $\Xcstr=\{x\in\mathcal{X}|h(x)\le0\}$. \change{Its complement, $\bar{\mathrm{X}}_\mathrm{cstr}=\mathcal{X}\setminus \Xcstr$, contains all states violating the constraints.} Our aim is to find a policy $\pi: \mathcal{X}\to\mathcal{U}$ that maximizes the expected cumulative rewards under the state constraints,
\begin{equation}
\label{real-time OCP}
\begin{aligned}
    \max_\pi \quad& \mathbb{E}_{x_t\sim d_{\mathrm{init}}(x)}\left\{\sum_{i=0}^\infty \gamma^i r(x_{t+i},u_{t+i})\right\}, \\
    \st \quad& x_{t+i+1}=f(x_{t+i},u_{t+i}), \\
    & h(x_{t+i})\le0, i=0,1,2,\dots,\infty.
\end{aligned}
\end{equation}

\section{Real-time domain and virtual-time domain}
To explain feasibility, it is necessary to recall the working mechanism of optimal control. In essence, any optimal controller works in two separated temporal domains: virtual-time domain and real-time domain. In virtual-time domain, an OCP is solved to obtain an optimal policy, and in real-time domain, the optimal policy is implemented. Virtual-time domain and real-time domain have their own OCP. The above OCP \eqref{real-time OCP} is actually defined in real-time domain. The motivation for defining another OCP in virtual-time domain is that OCP \eqref{real-time OCP} is difficult to solve due to its infinite-step constraint. Introducing virtual-time OCP allows us to replace this constraint with a finite-step one:
\begin{equation}
\label{virtual-time constraint}
    g(x_{i|t})\le0, i=0,1,2,\dots,n,
\end{equation}
where $g(\cdot)$ is the virtual-time constraint function, and $n$ is its \change{finite} horizon length. 
With virtual-time constraint \eqref{virtual-time constraint}, we can define the virtual-time OCP as follows.
\begin{equation}
\label{virtual-time OCP}
\begin{aligned}
    \max_\pi \quad& \mathbb{E}_{x_{0|t}\sim d_{\mathrm{init}}(x)}\left\{\sum_{i=0}^N \gamma^i r(x_{i|t},u_{i|t})\right\}, \\
    \st \quad& x_{i+1|t}=f(x_{i|t},u_{i|t}), \\
    & g(x_{i|t})\le0, i=0,1,2,\dots,n,
\end{aligned}
\end{equation}
where $N$ is the horizon length of virtual-time objective function, which can be either finite (as in MPC) or infinite (as in RL).

It must be noted that OCP \eqref{real-time OCP} is defined in the real-time domain, and OCP \eqref{virtual-time OCP} is defined in the virtual-time domain. The two domains are distinguished by their subscripts: $t+i$ represents the $(t+i)$-th step in the real-time domain, and $i|t$ represents the $i$-th point in the virtual-time domain starting from time $t$. The virtual-time constraint \eqref{virtual-time constraint} can be different from the real-time constraint \eqref{real-time constraint} in two aspects: 1) constraint function and 2) time horizon. The real-time constraint is determined by the control task itself and cannot be modified by algorithm designers. Moreover, it always has an infinite horizon because any real-world control task is continuous without termination. In contrast, the virtual-time constraint is constructed by algorithm designers and has a finite horizon \change{$n$, which is different from the horizon length of virtual-time objective function $N$}.
There is only one requirement for virtual-time constraint: it must be not weaker than the real-time constraint in the current step, i.e., if $g(x_{0|t})\le0$, then $h(x_t)\le0$. 

\change{Recall the commonly used constrained optimal control formulation in MPC:
\begin{equation}
\label{eq: MPC formulation}
\begin{aligned}
    \min_{u_{0|t},u_{1|t},\dots,u_{N|t}} \quad& \sum_{i=0}^{N-1} l(x_{i|t},u_{i|t})+l_T(x_{N|t},u_{N|t}), \\
    \st \quad& x_{i+1|t}=f(x_{i|t},u_{i|t}), \\
    & h(x_{i|t})\le0, i=0,1,2,\dots,N-1, \\
    & g_T(x_{N|t})\le0,
\end{aligned}
\end{equation}
where $l(\cdot,\cdot)$ is the utility function. Problem \eqref{eq: MPC formulation} is actually a variant of the virtual-time OCP \eqref{virtual-time OCP}. Specifically, 1) the horizon length of the constraint function in problem \eqref{eq: MPC formulation} equals that of the  objective function, i.e., $n=N$, 2) the maximization of rewards is changed to the minimization of cost, 3) the first $n$ steps of the virtual-time constraints equals the real-time constraints, i.e., $g(x_{i|t})=h(x_{t+i}),i=0,1,2,\dots,N-1$, 4) the last utility is a specially designed terminal cost $l_T(\cdot)$, and 5) the last constraint is a specially designed terminal constraint $g_T(\cdot)$, typically stronger than $h$ to ensure that $h$ is satisfied at every real-time step.}

\change{We visualize the trajectories of an MPC controller obtained by solving problem \eqref{eq: MPC formulation} in an emergency braking task in Figure \ref{fig: real virtual trajectory}. Details of the task can be found in Chapter \ref{sec: experiments}. The prediction horizon is 5 and there is no terminal constraint. It can be observed that the two trajectories always coincide at the first two steps since the first virtual-time action is implemented in the real-time domain. Beyond that, the two trajectories diverge when the virtual-time trajectory starts from $t=2$ and $t=3$. This is because the prediction horizon is not long enough to take into account all future constraints, leading to a short-sighted objective-seeking policy. The virtual-time trajectory starting from $t=4$ and $t=5$ overlap with the real-time trajectory because in both cases the vehicle is doing best-effort control to avoid constraint violation, which is already impossible given the current state and control limits. This illustrates the fundamental difference between virtual-time and real-time domains in constrained OCPs.}

\begin{figure}
    \centering
    \subfloat{
        \includegraphics[trim=10 20 10 20, width=0.22\linewidth]{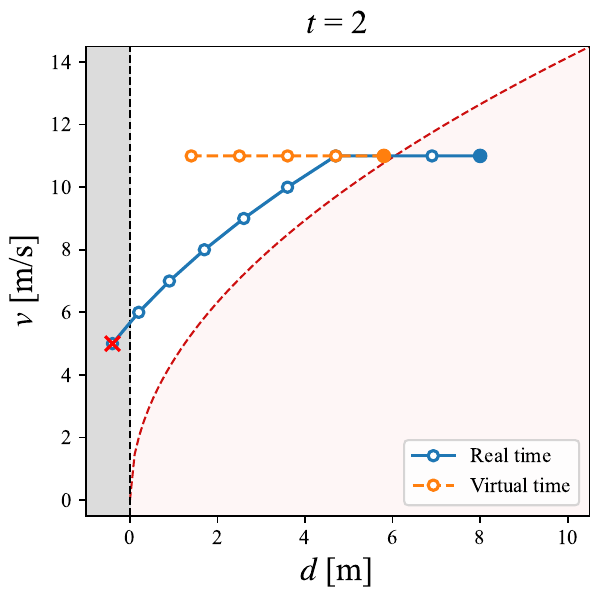}
    }
    \subfloat{
        \includegraphics[trim=10 20 10 20, width=0.22\linewidth]{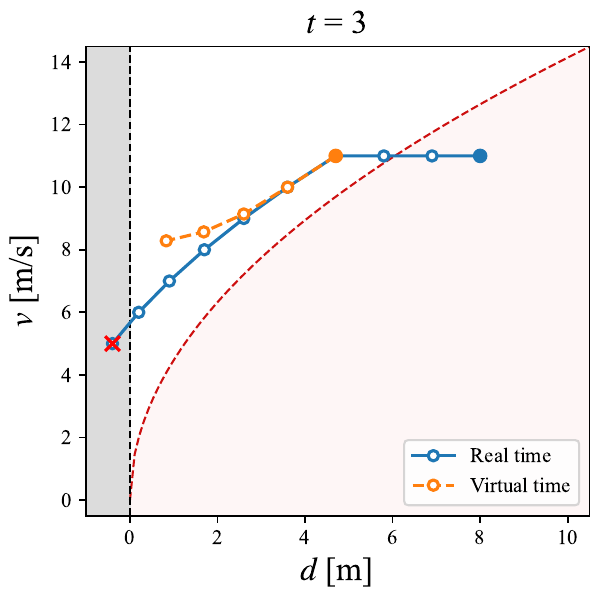}
    }
    \subfloat{
        \includegraphics[trim=10 20 10 20, width=0.22\linewidth]{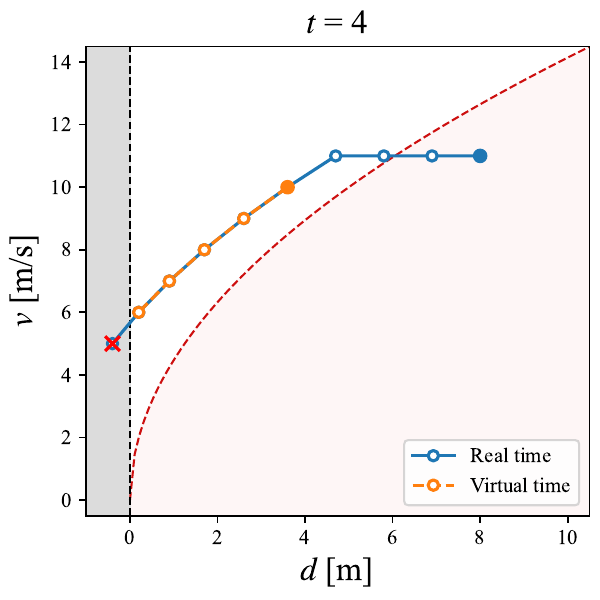}
    }
    \subfloat{
        \includegraphics[trim=10 20 10 20, width=0.22\linewidth]{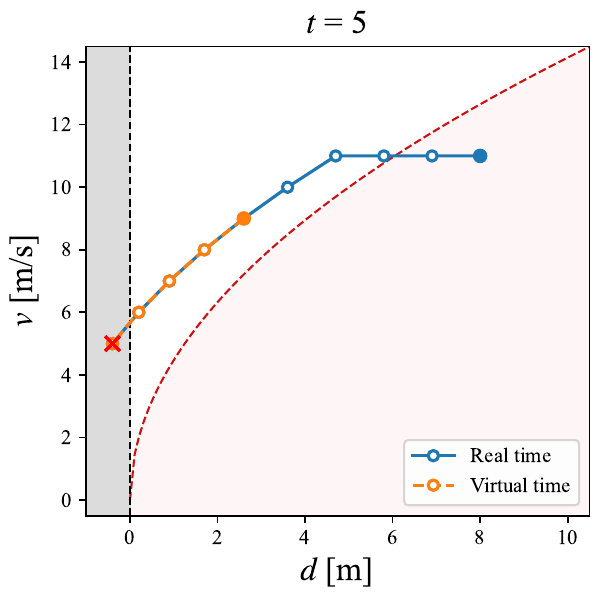}
    }
    \caption{\change{Real-time and virtual-time trajectories of an MPC controller in an emergency braking task. The real-time trajectories are the same across the four figures. The virtual-time trajectories starts at different real-time steps. The gray-shaded area represents the constraint-violating set. The red-shaded area represents the set where there exists a policy to ensure infinite-horizon safety. A state with a red cross indicates violation of the real-time constraint.}}
    \label{fig: real virtual trajectory}
\end{figure}

\chapter{Illustration of the Infeasibility Phenomenon}
\label{sec: infeasibility phenomenon}
Infeasibility is the most important concept in constrained OCPs. 
It describes the phenomenon that the constraint of a virtual-time OCP cannot be satisfied.
Formally, we define infeasibility as follows.
\change{
\begin{definition}[Infeasibility]
\label{def:infeasibility}
Consider a virtual-time OCP with virtual-time constraint $g(\cdot)\le0$ over horizon $i=0,\dots,n$.
\begin{enumerate}
    \item A state $x$ is infeasible (w.r.t. $g$) if there exists no policy $\pi$ that is a solution to the virtual-time OCP.
    \item A policy $\pi$ is infeasible in state $x$ (w.r.t. $g$) if it is not a solution to the virtual-time OCP.
\end{enumerate}
\end{definition}
}
There are two reasons for the infeasibility phenomenon: 1) imperfect policy solving and 2) improper constraint design.

To better explain how the infeasibility phenomenon occurs, let us consider the following example. Figure \ref{fig: framework} shows an infeasible emergency braking control of an RL policy under a one-step constraint that requires the state not to enter the orange region. 
In the first two rows, at $t=0$, policy $\pi_1$ satisfies the virtual-time constraint, i.e., $x_{1|0}$ is not in the orange region. The state is transferred to $x_1$ in the real-time domain, which equals $x_{1|0}$. At $t=1$, $\pi_1$ cannot satisfy the virtual-time constraint, i.e., $x_{1|1}$ enters the orange region. This is how the infeasibility phenomenon occurs. In this situation, we say $\pi_1$ is infeasible in state $x_1$. Note that constraint violation in virtual-time domain does not mean that there is no admissible action in real-time domain, where ``admissible'' means leading to a constraint-satisfying state. At $t=1$, the action is still admissible in real-time domain because the resulting next state $x_2$ is still inside the real-time constrained set. However, since the virtual-time constraint is already violated, the violation of the real-time constraint is inevitable sometime in the future. As shown in the figure, at time $t=N$, the next state $x_{N+1}$ finally goes out of the real-time constrained set.

The above infeasibility phenomenon is mainly caused by imperfect policy solving. Extending the training step from 50 to 10000 basically solves the infeasibility problem. As shown in the third row in Figure \ref{fig: framework}, the state trajectory is entirely contained in the virtual-time constrained set.
\change{More detailed examples can be found in Chapter \ref{sec: experiments}, where we demonstrate through an emergency braking task that the feasible regions of RL policies expand as training proceeds, indicating that higher solution quality leads to better feasibility.}
Another reason for the infeasibility phenomenon, i.e., improper constraint design, is illustrated in Figure \ref{fig: infeasibility PW}. Here, the virtual-time constraint function is a pointwise constraint, i.e., $g=h$ with a finite horizon $n$. The policy is an MPC controller. It shows that the infeasibility phenomenon occurs when $n$ is too small and disappears when $n$ is large enough.
\change{More detailed examples on the effect of constraint design can also be found in Chapter \ref{sec: experiments}, where we show that with an MPC controller, different virtual-time constraints leads to different feasible regions.}
\change{Intuitively, weaker and fewer steps of virtual-time constraints have worse feasibility in the long run because they cannot provide sufficient confinement to future state trajectories.
This intuition is formally stated and proved in Section \ref{sec: containment relationship}.}

Regarding virtual-time constraint design, we provide a collection of design rules and a practical design tool called feasibility function in Chapter \ref{feasibility function}. This tool helps avoid the infeasibility phenomenon and achieve the maximum feasible region.

\begin{figure}
    \centering
    \subfloat[$n=2$]{
        \includegraphics[trim=10 20 10 20, clip, width=0.3\linewidth]{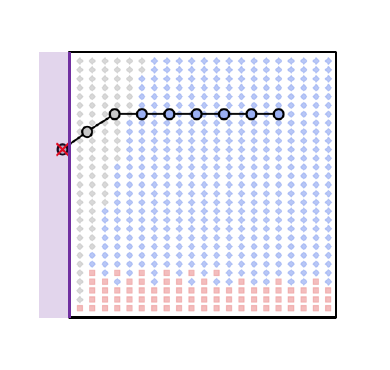}
    }
    \quad
    \subfloat[$n=10$]{
        \includegraphics[trim=10 20 10 20, clip, width=0.3\linewidth]{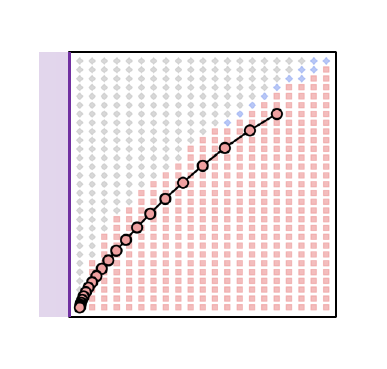}
    }
    \caption{\change{State trajectories and state feasibility of MPC under pointwise constraints. The circles stand for the states at every step on the trajectories, which are obtained by solving problem \eqref{eq: MPC formulation} in an receding-horizon manner. The squares and diamonds stand for regular states so as to show different feasibility regions. Regardless of the shape, all red marks represent endlessly feasible states, all blue marks for initially feasible states, and all gray marks for infeasible states. The purple-shaded areas represent real-time constraint.}}
    \label{fig: infeasibility PW}
\end{figure}

\chapter{Feasibility and Feasible Regions}
\label{feasibility and feasible region}
We have seen that improper design of virtual-time constraints will result in the infeasibility phenomenon. Now, let us move to the formal definition of feasibility, which allows us to analyze what kind of virtual-time constraint can guarantee safety. The aforementioned example of pointwise constraint \change{suggests} us that depending on the choice of constraint, virtual-time OCPs may suddenly lose feasibility at some time points or keep feasibility forever. These two results bring about the need to distinguish between 1) initial feasibility and 2) endless feasibility. As suggested by their names, initial feasibility is only temporary, while endless feasibility regards all virtual-time OCPs in the everlasting future.

\section{Initial feasibility}
The property \change{that a state yields} a feasible OCP at the current step is termed as initial feasibility. This property is just a shortsighted feature and may not give any guarantee in the future (even just one step later!). To understand this, recall the evolution of infeasibility phenomenon in Figure \ref{fig: framework}, where $x_t$ is a feasible state but still evolves into an infeasible state $x_{t+2}$. The definition of initial feasibility is as follows.

\begin{definition}[Initial feasibility of a state]
\label{def: initial feasibility of state}~
\begin{enumerate}
    \item A state $x$ is initially feasible if there exists a policy that satisfies the virtual-time constraint starting from $x$, i.e., $\exists\pi$, s.t. $g(x_{i|t})\le0, i=0,1,2,\dots,n$, where $x_{0|t}=x$. 
    \item The initially feasible region (shortened as IFR), denoted as $\Xinitg$, is the set of all states that are initially feasible.
\end{enumerate}
\end{definition}

The subscript ``init'' of initial feasible region $\Xinitg$ stands for ``initial'', and the superscript ``$g$'' emphasizes that it is related to the virtual-time constraint. For a state to be initially feasible, the only requirement is that the virtual-time OCP starting from it has a solution. 
In this definition, the policy $\pi$ can be arbitrarily chosen to meet the existence statement. Supposing that we have already been given a policy, it is also a natural need to check whether the policy satisfies the virtual-time constraint, which leads to the definition of initial feasibility of a policy. 

\begin{definition}[Initial feasibility of a policy]
\label{def: initial feasibility of policy}~
\begin{enumerate}
    \item A policy $\pi$ is initially feasible in a state $x$ if $\pi$ satisfies the virtual-time constraint starting from $x$, i.e., $g(x_{i|t})\le0, i=0,1,2,\dots,n$, where $x_{0|t}=x$.
    \item The set of all policies that are initially feasible in $x$ is denoted as $\Pi_\mathrm{init}^g(x)$.
    \item The initially feasible region of a policy $\pi$, denoted as $\Xinitg(\pi)$, is the set of all states in which $\pi$ is initially feasible.
\end{enumerate}
\end{definition}

The notation $\Xinitg(\pi)$ means that this region is a function of policy $\pi$. The initial feasibility of a policy is closely related to the initial feasibility of a state. If a state is initially feasible, then there must exist a policy that is initially feasible in this state. On the other hand, if a policy is initially feasible in a state, then there exists at least one solution to virtual-time OCP starting from this state, which indicates that this state is initially feasible. 
Examples of IFR of a policy are the union of blue and red regions in Figure \ref{fig: framework} and Figure \ref{fig: infeasibility PW}. Within these regions, the virtual-time constraints are satisfied by the policy, but there is no guarantee that they will still be satisfied in successive states.

\section{Endless feasibility}
With initial feasibility, we are able to describe what we actually care about, i.e., a long-term property taking infinite horizon constraints into account. This property is termed as endless feasibility. Intuitively, endless feasibility guarantees that all future virtual-time OCPs are feasible. Here is an example of the distinction between initial and endless feasibility. In a car-following scenario, the ego vehicle is required to follow a leading vehicle while keeping a safe distance from it. Suppose that the virtual-time constraint requires a positive distance between vehicles only at the next time step. Then, a state with an overly high speed may be initially feasible since the vehicles will not collide immediately. However, due to the limited braking capability, the car will inevitably reach a state where collision is doomed at the next time step no matter what action is taken. At this time, the virtual-time OCP becomes infeasible. Like initial feasibility, endless feasibility is also separately defined for a state and a policy.

\begin{definition}[Endless feasibility of a state]
\label{def: endless feasibility of state}~
\begin{enumerate}
    \item A state $x$ is endlessly feasible if it is initially feasible and its successive states under any initially feasible policy in each time step are initially feasible, i.e., $x_t\in\Xinitg$ and $\forall\pi_{t+i}\in\Pi_\mathrm{init}^g(x_{t+i}), x_{t+i+1}\in\Xinitg, i=0,1,2,\dots,\infty$, where $x_t=x$.
    \item The endlessly feasible region (shortened as EFR), denoted as $\Xedlsg$, is the set of all states that are endlessly feasible.
\end{enumerate}
\end{definition}

The subscript ``edls'' in the above symbols is an abbreviation of ``endless''. 
The arbitrariness of policy selection is critical to the definition of endless feasibility. This is because in any initially feasible state, there may exist more than one initially feasible policy. Which one is the solution to the virtual-time OCP depends on the objective function. The arbitrariness of policy selection ensures that the solution can always ensure the successive states to be initially feasible, regardless of the objective function. 

\begin{definition}[Endless feasibility of a policy]
\label{def: endless feasibility of policy}~
\begin{enumerate}
    \item A policy $\pi$ is endlessly feasible in a state $x$ if $\pi$ is initially feasible in both $x$ and the successive states of $x$ under $\pi$, i.e., $\pi\in\Pi_\mathrm{init}^g(x_{t+i}),i=0,1,2,\dots,\infty$, where $x_t=x$.
    \item The endlessly feasible region of a policy $\pi$, denoted as $\Xedlsg(\pi)$, is the set of all states in which $\pi$ is endlessly feasible.
\end{enumerate}
\end{definition}

Note that the definition of endless feasibility is built upon initial feasibility, requiring all successive states in a trajectory to be initially feasible. For endless feasibility of a policy, this trajectory is naturally induced by the given policy. For endless feasibility of a state, this trajectory can be induced by any initial feasible policy. With such a definition, it naturally holds that for an endlessly feasible state, there must also be an endlessly feasible policy in that state, i.e., EFR is a subset of any policy's EFR. This conclusion will be formally stated and proved in Theorem \ref{thm: Feasible region containment}. Examples of EFR of a policy are the red regions in Figure \ref{fig: framework} and Figure \ref{fig: infeasibility PW}. Within these regions, the policy is initially feasible in both the current and all successive states.

\change{Readers who are familiar with MPC may find the concept of endless feasibility similar to the persistent feasibility in MPC.
A persistently feasible set is a set that is forward invariant with respect to the MPC feedback law $\pi_{\text{MPC}}$ \citep{boccia2014stability}.
The relation between these two concepts lies in that, given a constrained OCP in MPC, the endlessly feasible region of a $\pi_{\text{MPC}}$, i.e., $\Xedls^g(\pi_{\text{MPC}})$, equals the maximum persistently feasible set.
This is because $\Xedls^g(\pi_{\text{MPC}})$ is forward invariant under $\pi_{\text{MPC}}$, and for any $x$ in the maximum persistently feasible set, $\pi_{\text{MPC}}$ must be initially feasible on $x$ and all the subsequent states.
However, this feedback law is a specific policy that is made up with the solutions of a series of OCPs, which makes it closely tied to the constrained OCP. Hence, traditional feasibility analysis pertains to the problem itself, not to the state or policy in isolation, which is inadequate in the RL cases.
Note the subtle distinction: persistent feasibility in MPC is typically an existential property, i.e., there exists a control action or policy that keeps successive states initially feasible, whereas our notion of endless feasibility is stronger and is achieved only when the conditions hold for all initially feasible policies.
The proposed endless feasibility (and the whole feasibility theory) therefore offers a more fine-grained discussion of feasibility at both state and policy levels. At the state level, the definition of feasibility is independent of any specific policy. At the policy level, the policy under discussion is not constrained to be the optimal one or the (suboptimal) solution of an OCP.}

Different choices of virtual-time constraints will lead to different feasible regions. A natural question is how to design a virtual-time constraint that induces an EFR as large as possible. To discuss this, we first need to define the maximum EFR.
\begin{definition}
\label{def: maximum EFR}
The maximum endlessly feasible region (shortened as maximum EFR), denoted as $\Xedlss$, is the union of EFRs under all possible virtual-time constraints.
\end{definition}
\noindent Maximum EFR is closely related to the notion of maximum control invariant set in MPC. The latter refers to the maximum set in which for any state, there exists a control sequence that keeps its successive states still in this set. The maximum EFR further requires that all states in it must satisfy the real-time constraint.
Either too strong or too weak a virtual-time constraint will induce an EFR $\Xedlsg$ smaller than $\Xedlss$. For the former case, less state is considered initially feasible and hence outside of $\Xedlsg$. For the latter case, one may take too aggressive actions at the beginning due to inadequate restriction of virtual-time constraint, leading to the infeasibility phenomenon.
Examples of these two cases can be found in Chapter \ref{sec: experiments}, where a pointwise constraint with $n=2$ is too weak, and a control barrier function constraint with $k=0.5$ is too strong.
\change{The following theorem provides an equivalent characterization of the maximum EFR (Definition \ref{def: maximum EFR})}, explicitly relating endless feasibility to constraint satisfaction in the real-time domain.

\begin{theorem}
\label{thm: maximum EFR equivalence}
For an arbitrary state $x$, the following two statements 1) and 2) are equivalent:
\begin{enumerate}
    \item[1)] $x\in\Xedlss$.
    \item[2)] $\exists\pi$, s.t. $h(x_{t+i})\le0, i=0,1,2,\dots,\infty, x_t=x$.
\end{enumerate}
\end{theorem}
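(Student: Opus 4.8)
The plan is to prove both implications by identifying the maximum EFR with the maximal control invariant set contained in $\Xcstr$, namely
$\mathrm{S}\coloneqq\{x\in\mathcal{X}\mid \exists\pi,\ h(x_{t+i})\le0,\ i=0,1,2,\dots,\infty,\ x_t=x\}$,
which is exactly the set described by statement 2). So the whole proof amounts to showing $\Xedlss=\mathrm{S}$.

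For $2)\Rightarrow1)$ I would exhibit one particular virtual-time constraint whose EFR already contains $x$. The candidate is the one-step constraint ($n=1$) given by $g$ with $g(y)=0$ for $y\in\mathrm{S}$ and $g(y)=1$ otherwise, so that $g(y)\le0\iff y\in\mathrm{S}$. First I would record two facts about $\mathrm{S}$: it is contained in $\Xcstr$ (take $i=0$ in its defining condition) and it is control invariant (applying the witness policy for one step shows $f(x,\pi(x))\in\mathrm{S}$ whenever $x\in\mathrm{S}$). The inclusion $\mathrm{S}\subseteq\Xcstr$ also certifies the ``not weaker than the real-time constraint'' requirement for $g$. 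Then I would compute the feasible regions of this $g$: $\Xinitg=\mathrm{S}$, because a state $x$ is initially feasible iff $x\in\mathrm{S}$ and some action sends $x$ into $\mathrm{S}$, and the latter is automatic from control invariance; and $\Xedlsg=\mathrm{S}$, because for $n=1$ any policy $\pi_{t+i}\in\Pi_\mathrm{init}^g(x_{t+i})$ satisfies by definition $g(f(x_{t+i},\pi_{t+i}(x_{t+i})))\le0$, i.e.\ $x_{t+i+1}\in\mathrm{S}=\Xinitg$, so the successor-containment clause of Definition~\ref{def: endless feasibility of state} holds vacuously and endless feasibility reduces to membership in $\mathrm{S}$. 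Hence $x\in\mathrm{S}$ gives $x\in\Xedlsg\subseteq\Xedlss$.

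For $1)\Rightarrow2)$, suppose $x\in\Xedlsg$ for some virtual-time constraint $g$ (of some horizon $n$). I would show that $\Xedlsg$ is itself control invariant and contained in $\Xcstr$, and then pick a policy inside it. Containment in $\Xcstr$: $x\in\Xinitg$ yields a policy with $g(x_{0|t})\le0$, hence $h(x)\le0$ by the not-weaker assumption. Control invariance: choose any $\pi\in\Pi_\mathrm{init}^g(x)$ (nonempty since $x\in\Xinitg$) and set $x'=f(x,\pi(x))$; the $i=0$ instance of Definition~\ref{def: endless feasibility of state} already gives $x'\in\Xinitg$, and a concatenation argument upgrades this to $x'\in\Xedlsg$ --- given any sequence of initially feasible policies starting from $x'$, prepending $\pi$ produces an admissible sequence starting from $x$, whose reached states are precisely those of the $x'$-trajectory, so endless feasibility of $x$ forces all of them into $\Xinitg$. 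Finally, defining $\pi^\ast$ by selecting, for each $y\in\Xedlsg$, some $\pi_y\in\Pi_\mathrm{init}^g(y)$ and putting $\pi^\ast(y)=\pi_y(y)$, the trajectory from $x$ under $\pi^\ast$ stays in $\Xedlsg\subseteq\Xcstr$ for all steps, which is exactly statement 2).

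I expect the concatenation step in $1)\Rightarrow2)$ to be the main obstacle: endless feasibility of a \emph{state} quantifies over all sequences of initially feasible policies, and one must check carefully that the ``prefix $\pi$ plus arbitrary suffix'' sequence is genuinely of that form at every index (each $\pi_{t+i}$ initially feasible at the state actually reached), so that the universal quantifier in Definition~\ref{def: endless feasibility of state} applies to it. A secondary but essential point is the horizon choice $n=1$ in $2)\Rightarrow1)$: with $n=0$ every policy is trivially initially feasible wherever the constraint holds, which would collapse the EFR to a set invariant under all actions rather than under a chosen action, and would not recover all of $\mathrm{S}$; the single look-ahead step is precisely what lets the virtual-time constraint ``see'' the control invariance of $\mathrm{S}$.
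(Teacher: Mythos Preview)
Your proof is correct and takes a somewhat different, more careful route than the paper's. For $2)\Rightarrow1)$ the paper simply chooses the infinite-horizon virtual-time constraint $g(x_{i|t})=h(x_{i|t})$, $i=0,1,\dots,\infty$, and notes that any $x$ satisfying 2) lies in $\Xedlsg$; your one-step indicator construction has the advantage of staying within the finite-horizon setup the paper declares for virtual-time constraints, and it is essentially the second sufficient condition of Theorem~\ref{thm: necessary cond and sufficient cond}, anticipated here. For $1)\Rightarrow2)$ the paper just asserts that $x\in\Xedlss$ yields some $\pi$ and $g$ with $x_{t+i}\in\Xinitg$ for all $i$, then invokes $\Xinitg\subseteq\Xcstr$; you supply what this really needs---forward invariance of $\Xedlsg$ under any initially feasible policy (your concatenation argument, which the paper later isolates as Lemma~\ref{lem: forward invariance of EFR}) plus an explicit construction of a single stationary $\pi^\ast$. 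That is the honest way to pass from the universal quantifier in Definition~\ref{def: endless feasibility of state} to the existential one in statement~2), and the paper's terse proof elides it. One small point: your $\pi^\ast$ should be extended arbitrarily outside $\Xedlsg$ so that it is a genuine policy on $\mathcal{X}$; this is harmless since the trajectory never leaves $\Xedlsg$.
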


\begin{proof}
First, we prove 1) $\Rightarrow$ 2).
According to Definition \ref{def: maximum EFR}, 
$$
\forall x\in\Xedlss, \exists\pi \text{ and } g, \st x_{t+i}\in\Xinitg,i=0,1,2,\dots,\infty.
$$
Since $\Xinitg\subseteq\Xcstr$, we have $x_{t+i}\in\Xcstr$, i.e., $h(x_{t+i})\le0$. 

Next, we prove 2) $\Rightarrow$ 1).
For an arbitrary state $x$ that satisfies 2), we can choose 
$$
g(x_{i|t})=h(x_{i|t}),i=0,1,2,\dots,\infty
$$
as the virtual-time constraint. In this case, we have $x\in\Xedlsg$. Since $\Xedlsg\subseteq\Xedlss$, we conclude that $x\in\Xedlss$.
\end{proof}

The above theorem demonstrates the importance of maximum EFR: it exactly contains all states that can be rendered safe in real-time domain. 
We have a similar conclusion for EFR of a policy: if a state $x$ is in the EFR of a policy $\pi$, it can be rendered safe by this policy. This is formally stated in the following theorem, which theoretically justifies applying a policy in its EFR.

\begin{theorem}
\label{thm: EFR sufficient condition}
For any state $x$, statement 1) is a sufficient condition for statement 2):
\begin{enumerate}
    \item[1)] $\exists g, \st x\in\Xedlsg(\pi)$.
    \item[2)] $h(x_{t+i})\le0, i=0,1,2,\dots,\infty, x_t=x, u_t=\pi(x_t)$.
\end{enumerate}
\end{theorem}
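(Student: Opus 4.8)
The plan is to unwind the definition of the endlessly feasible region of a policy and then invoke the single structural requirement imposed on every virtual-time constraint, namely that $g(x_{0|t})\le 0$ forces $h(x_t)\le 0$. This is essentially the policy-level analogue of the $1)\Rightarrow 2)$ direction of Theorem \ref{thm: maximum EFR equivalence}. Concretely, I would assume statement 1): there exists a virtual-time constraint $g$ with $x\in\Xedlsg(\pi)$. Let $\{x_{t+i}\}_{i=0}^\infty$ denote the real-time closed-loop trajectory generated by $\pi$ from $x_t=x$, i.e. $x_{t+i+1}=f(x_{t+i},\pi(x_{t+i}))$; by construction $u_t=\pi(x_t)$, so this is exactly the trajectory appearing in statement 2).

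By Definition \ref{def: endless feasibility of policy}, $x\in\Xedlsg(\pi)$ means $\pi\in\Pi_\mathrm{init}^g(x_{t+i})$ for every $i=0,1,2,\dots$, where the states $x_{t+i}$ are precisely the successive states of $x$ under $\pi$. Fix such an $i$ and read off Definition \ref{def: initial feasibility of policy}: initial feasibility of $\pi$ at $x_{t+i}$ says that the virtual-time rollout of $\pi$ started at $x_{0|t+i}=x_{t+i}$ satisfies $g(x_{j|t+i})\le 0$ for $j=0,1,\dots,n$. Specializing to $j=0$ gives $g(x_{0|t+i})\le 0$, i.e. $g$ evaluated at the real-time state $x_{t+i}$ is nonpositive.

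The final step is to pass from the virtual-time constraint back to the real-time constraint using the standing requirement on $g$: applied at real time $t+i$, ``$g(x_{0|t+i})\le 0 \Rightarrow h(x_{t+i})\le 0$''. Hence $h(x_{t+i})\le 0$, and since $i$ was arbitrary this yields $h(x_{t+i})\le 0$ for all $i=0,1,2,\dots,\infty$, which is exactly statement 2). Equivalently, one can phrase the same step set-theoretically as $x_{t+i}\in\Xinitg(\pi)\subseteq\Xinitg\subseteq\Xcstr$, reusing the inclusion already exploited in the proof of Theorem \ref{thm: maximum EFR equivalence}.

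I do not expect a serious obstacle here; the only point requiring care is bookkeeping with the two temporal indices — making sure the ``successive states under $\pi$'' in the definition of $\Xedlsg(\pi)$ are identified with the real-time trajectory, and that the $i=0$ slot of each virtual-time rollout coincides with $x_{t+i}$ so that the one-step ``not weaker than $h$'' property is applicable at every real-time step. No appeal to the objective function horizon $N$, to optimality of $\pi$, or to the length $n$ of the virtual-time constraint is needed, which also explains why the converse fails and the statement is only a sufficient condition rather than an equivalence.
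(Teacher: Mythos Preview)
Your proposal is correct and follows essentially the same approach as the paper: unwind Definition~\ref{def: endless feasibility of policy} to obtain $\pi\in\Pi_\mathrm{init}^g(x_{t+i})$ along the $\pi$-trajectory, then use the ``not weaker than $h$'' requirement at the $j=0$ slot to conclude $h(x_{t+i})\le 0$. The paper simply compresses your argument into the set-theoretic chain $x_{t+i}\in\Xinitg\subseteq\Xcstr$, which you also mention at the end as an equivalent phrasing.
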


\begin{proof}
According to Definition \ref{def: endless feasibility of policy},
$$
\forall x\in\Xedlsg(\pi), x_{t+i}\in\Xinitg, i=0,1,2,\dots,\infty,
$$
where $x_{t+i}$ are obtained by $\pi$. Since $\Xinitg\subseteq\Xcstr$, we have $x_{t+i}\in\Xcstr$, i.e., $h(x_{t+i})\le0$.
\end{proof}

Theorem \ref{thm: EFR sufficient condition} tells us that the EFR of a policy is where we can safely apply the policy in real-time domain. This is the reason why we need to find not only the policy but also its EFR when solving constrained OCPs. In literature, this EFR is found by identifying a safety certificate, e.g., control barrier function (CBF) \citep{ames2014control}, safety index (SI) \citep{liu2014control}, and Hamilton-Jacobi (HJ) reachability function \citep{mitchell2005time}. 
When solving a constrained OCP, we aim to find a policy with a maximum EFR. This is because the size of EFR not only determines the policy's working region but also affects its performance. If we want a policy to be optimal, its EFR must contain all optimal trajectories. A larger EFR has a greater chance of containing optimal trajectories and therefore increases the probability of finding the optimal policy.

\change{However, it is important to note that existing methods still face significant challenges in finding the maximum EFR.
For HJ reachability, the primary issue is scalability. While recent approaches using neural networks and RL offer promising paths to mitigate this \citep{fisac2019bridging, yu2022reachability}, they often lack formal guarantees. Methods that combine neural network verification with HJ analysis aim to provide these guarantees but often face scalability limitations again \citep{yang2025scalable}, making it an open problem.
For CBFs and SIs, these approaches do not in general provide methods to compute the maximum EFR. Synthesizing a CBF or SI that certifies the maximum EFR typically requires guidance from other tools, such as HJ reachability, to shape or tune its parameters \citep{yang2023synthesizing}, since these methods inherently rely on an a priori characterization of the EFR.}

\chapter{Some Properties of Feasible Regions}
\label{feasible region property}
In this chapter, we formally explore the properties of feasible regions, as well as their relationships and how they are informative for virtual-time constraint design. Specifically, we will pay attention to containment relationships and equivalence relationships of different feasible regions. 

\section{Containment relationships of feasible regions}
\label{sec: containment relationship}
The definitions of feasible regions can be categorized by two dimensions: a) initial ones or endless ones and b) with a given policy or without a given policy. The four combinations correspond to 1) IFR $\Xinitg$, 2) IFR of a policy $\Xinitg (\pi)$, 3) EFR $\Xedlsg$, and 4) EFR of a policy $\Xedlsg (\pi)$. Besides these, we also need to pay attention to the constrained set $\Xcstr$ and the maximum EFR $\Xedlss$. Among these six regions, $\Xcstr$, $\Xinitg$, and $\Xinitg (\pi)$ are related only to the current step, while $\Xedlsg$, $\Xedlsg (\pi)$, and $\Xedlss$ are related to infinite future steps, and finding them are our ultimate goals. Since the latter three regions are related to long-term feasibility, their identification is much more difficult than that of the former. The following theorem gives some containment relationships between these six regions so that we can let those easy-to-identify regions cast some light on our goal regions.
\begin{theorem}[Feasible region containment]
    \label{thm: Feasible region containment}~
    \begin{enumerate}
        \item $\Xedlsg \subseteq \Xinitg \subseteq \Xcstr$.
        \item $\forall\pi, \Xedlsg (\pi) \subseteq \Xinitg (\pi) \subseteq \Xinitg$.
        \item $\forall\pi, \Xedlsg (\pi) \subseteq \Xedlss$.
	\item $\forall\pi$, if $\Xinitg (\pi) = \Xinitg$, then $\Xedlsg \subseteq \Xedlsg (\pi)$.
    \end{enumerate}
\end{theorem}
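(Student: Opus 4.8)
The plan is to prove each of the four containments in turn, working directly from the definitions, since the statement is really a bundle of four elementary set-inclusion facts.

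For part 1, I would start with the inner inclusion $\Xinitg\subseteq\Xcstr$: if $x\in\Xinitg$ then by Definition~\ref{def: initial feasibility of state} there is a policy with $g(x_{i|t})\le0$ for $i=0,\dots,n$, in particular $g(x_{0|t})=g(x)\le0$; invoking the standing requirement that $g(x_{0|t})\le0$ implies $h(x_t)\le0$ gives $x\in\Xcstr$. For $\Xedlsg\subseteq\Xinitg$, this is immediate because Definition~\ref{def: endless feasibility of state} explicitly begins with ``$x$ is initially feasible.'' Part~2 is handled the same way: $\Xinitg(\pi)\subseteq\Xinitg$ because a state where a \emph{specific} $\pi$ satisfies the virtual-time constraint is a fortiori a state where \emph{some} policy does; and $\Xedlsg(\pi)\subseteq\Xinitg(\pi)$ because endless feasibility of $\pi$ at $x$ requires, among other things, that $\pi$ be initially feasible at $x$ itself (the $i=0$ case in Definition~\ref{def: endless feasibility of policy}).

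For part 3, the cleanest route is to invoke Theorem~\ref{thm: EFR sufficient condition}: if $x\in\Xedlsg(\pi)$ then statement 1) of that theorem holds for this particular $g$, hence statement 2) holds, i.e.\ there exists a policy (namely $\pi$) rendering $h(x_{t+i})\le0$ for all $i$; by Theorem~\ref{thm: maximum EFR equivalence} this is exactly the condition $x\in\Xedlss$. Alternatively one could argue directly from Definition~\ref{def: maximum EFR} by noting $\Xedlsg(\pi)\subseteq\Xedlsg$ (which itself needs a short argument comparing Definitions~\ref{def: endless feasibility of state} and~\ref{def: endless feasibility of policy}) and then $\Xedlsg\subseteq\Xedlss$ by definition of the maximum EFR; but the Theorem~\ref{thm: EFR sufficient condition}/\ref{thm: maximum EFR equivalence} chain is shorter.

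Part 4 is the substantive one and the place I expect the real work. Assume $\Xinitg(\pi)=\Xinitg$ and take $x\in\Xedlsg$; the goal is $x\in\Xedlsg(\pi)$, i.e.\ $\pi\in\Pi_\mathrm{init}^g(x_{t+i})$ for all $i\ge0$ along the trajectory \emph{generated by $\pi$}. The natural approach is induction on $i$. Base case: $x=x_t\in\Xedlsg\subseteq\Xinitg=\Xinitg(\pi)$, so $\pi$ is initially feasible at $x_t$; in particular $x_t\in\Xinitg$. Inductive step: suppose $x_{t+i}\in\Xinitg$. Since $\Xinitg=\Xinitg(\pi)$, $\pi$ is initially feasible at $x_{t+i}$, so $\pi\in\Pi_\mathrm{init}^g(x_{t+i})$; now the definition of $\Xedlsg$ applied to the original endlessly feasible state $x_t$ says that successive states under \emph{any} choice of initially feasible policy at each step remain in $\Xinitg$. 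The delicate point is making sure the trajectory we are following --- the one driven by $\pi$ at every step --- is admissible as one of those ``any initially feasible policy at each time step'' trajectories in Definition~\ref{def: endless feasibility of state}; this works precisely because at each step $\pi$ lies in $\Pi_\mathrm{init}^g(x_{t+i})$, which is what the hypothesis $\Xinitg(\pi)=\Xinitg$ together with the inductive hypothesis $x_{t+i}\in\Xinitg$ gives us. Hence $x_{t+i+1}\in\Xinitg$, closing the induction and showing $\pi$ is initially feasible at every $x_{t+i}$, i.e.\ $x\in\Xedlsg(\pi)$. I would write this carefully because the quantifier structure in Definition~\ref{def: endless feasibility of state} (``under any initially feasible policy in each time step'') is exactly what must be matched against the single fixed policy $\pi$, and sloppiness there is the one way this proof can go wrong.
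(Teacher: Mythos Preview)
Your proposal is correct and matches the paper's proof almost exactly: parts 1 and 2 are argued directly from the definitions as you do, part 3 in the paper inlines the argument of Theorem~\ref{thm: EFR sufficient condition} (showing $x_{t+i}\in\Xinitg\subseteq\Xcstr$ and then invoking Theorem~\ref{thm: maximum EFR equivalence}) rather than citing it, and part 4 is exactly your induction, stated more tersely in the paper as ``choosing $\pi_{t+i}=\pi$'' at every step.

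One caution on your parenthetical alternative for part 3: the inclusion $\Xedlsg(\pi)\subseteq\Xedlsg$ is \emph{false} in general, not merely in need of a short argument. Endless feasibility of a state requires \emph{every} initially feasible policy at each step to preserve initial feasibility, whereas $\Xedlsg(\pi)$ only requires the single policy $\pi$ to do so; hence $\Xedlsg(\pi)$ is typically the \emph{larger} set (cf.\ part 4 and Corollary~\ref{cor: containment relationship}(1)). Your preferred route via Theorems~\ref{thm: EFR sufficient condition} and~\ref{thm: maximum EFR equivalence} is the right one.
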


\begin{proof}~
\begin{enumerate}
    \item The first containment relationship $\Xedlsg \subseteq \Xinitg$ can be directly concluded from Definition \ref{def: endless feasibility of state}. For the second one, according to Definition \ref{def: initial feasibility of state}, 
    $$
    \forall x \in\Xinitg, g(x)=g(x_{0|t})\le0.
    $$
    Since $g(x_{0|t})$ is not weaker than $h(x_t)$, we have $h(x)\le0$, i.e., $\forall x \in\Xcstr$. Therefore, $\Xinitg \subseteq \Xcstr$.
    
    \item This holds by Definition \ref{def: endless feasibility of policy} and Definition \ref{def: initial feasibility of policy}.
    
    \item As stated in Definition \ref{def: endless feasibility of policy}, 
    $$
    \forall x \in\Xedlsg (\pi), \pi \in \Pi_\mathrm{init}^g(x_{t+i}),i=0,1,2,\dots,\infty,
    $$
    where $x_t=x$. Thus, $x_{t+i} \in \Xinitg (\pi)$. Since $\Xinitg (\pi) \subseteq \Xinitg$, it follows that $x_{t+i} \in \Xinitg$. Thus, $x_{t+i} \in \Xcstr$, i.e., $h(x_{t+i})\le 0$. Recall Theorem \ref{thm: maximum EFR equivalence} and we reach $\Xedlsg (\pi) \subseteq \Xedlss$.
    
    \item Since $\Xinitg = \Xinitg (\pi)$, it holds that $\pi$ is initially feasible in all states in $\Xinitg$. That is to say, $\pi \in \Pi_\mathrm{init}^g(x)$,$\forall x \in\Xinitg$. Starting from any $x \in\Xedlsg$ and choosing 
    $$
    \pi_{t+i}=\pi \in \Pi_\mathrm{init}^g(x_{t+i}),i=0,1,2,\dots,\infty.
    $$
    By definition of endlessly feasibility, we have 
    $$
    x_{t+i} \in \Xinitg = \Xinitg (\pi) \text{ i.e., } \pi \in \Pi_\mathrm{init}^g(x_{t+i}).
    $$
    This means $\pi$ is endlessly feasible in $x$, i.e., $\forall x \in\Xedlsg (\pi)$. Thus, $\Xedlsg \subseteq \Xedlsg (\pi)$.
\end{enumerate}

\end{proof}

Figure \ref{fig: containment relationships}(a)-(d) illustrate the four containment relationships in Theorem \ref{thm: Feasible region containment} respectively. The light-colored circles correspond to virtual-time states, and the dark-colored ones stand for real-time states.
The proof of the first two relationships can be easily understood by definition, while that of the last two needs some additional explanation.
The left of Figure \ref{fig: containment relationships}(c) shows the zero-sublevel sets of $h(x)$ and $g(x)$, which correspond to the maximum EFR $\Xedlss$ and the EFR of a policy $\Xedlsg (\pi)$ on the right, respectively. To see how $\{x|h(x)\le0\}$ is related to $\Xedlss$, recall Theorem \ref{thm: maximum EFR equivalence} which states that $x \in\Xedlss$ is equivalent to the existence of a policy rendering $x$ safe in real-time domain. Likewise, $x \in\Xedlsg (\pi)$ means that policy $\pi$ renders $x$ safe in virtual-time domain, without violating virtual-time constraint $g(x)\le0$. Naturally, a larger valid state set ($\{x|h(x)\le0\}$ versus $\left\{x|g(x)\le0\right\}$) yields a larger EFR.
The left of Figure \ref{fig: containment relationships}(d) corresponds to the condition $\Xinitg (\pi) = \Xinitg$, with the red region representing both $\Xinitg$ and $\Xinitg (\pi)$. This condition implies that $\pi \in \Pi_\mathrm{init}^g(x), \forall x \in\Xinitg$. The right picture illustrates how $\Xedlsg$ is a subset of $\Xedlsg (\pi)$. For a state in $\Xedlsg$, every initially feasible policy at every time step must lead to an initially feasible state. As illustrated in the figure, policy $\pi^\prime$ (in purple) and $\pi^{\prime\prime}$ (in red) are two arbitrarily chosen policies, and $\pi$ is also a possible choice for $\pi^\prime$ and $\pi^{\prime\prime}$ (and all future policies). This guarantees that $\pi$ renders any state in $\Xedlsg$ endlessly feasible. 

\begin{figure}[htbp]
    \centering
    \subfloat[]{
        \includegraphics[trim=18 0 18 0, clip, width=0.35\linewidth]{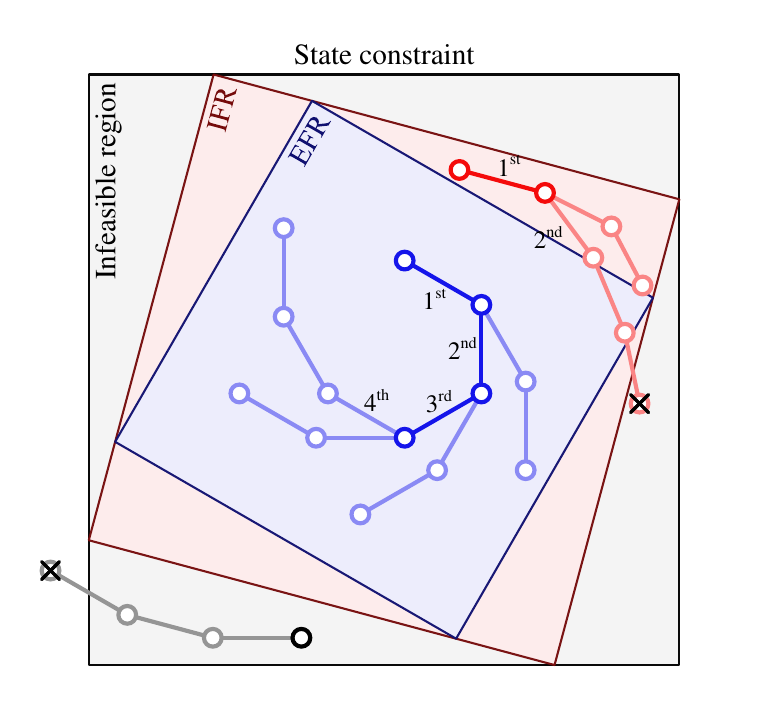}
    }
    \subfloat[]{
        \includegraphics[trim=42 0 -5 0, clip, width=0.35\linewidth]{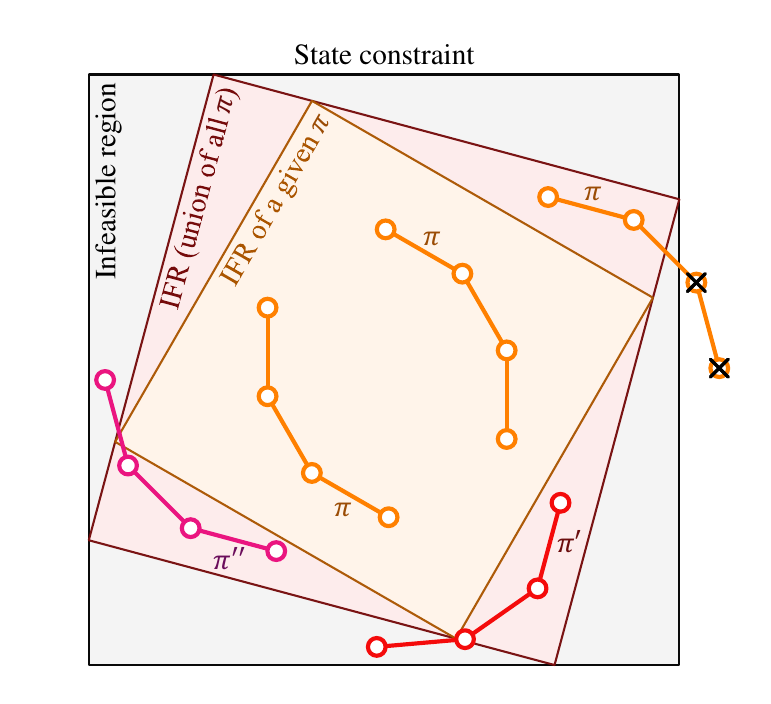}
    }
    \\
    \subfloat[]{
        \includegraphics[width=0.7\linewidth]{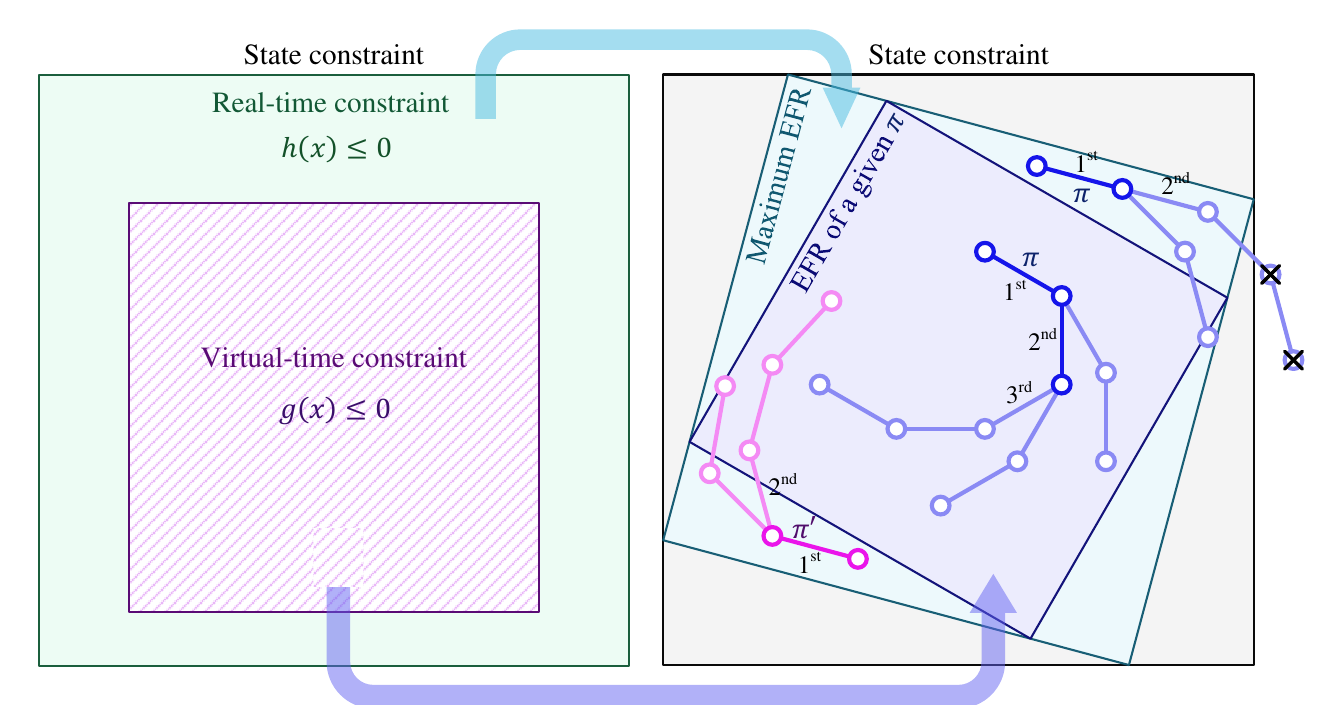}
    }
    \\
    \subfloat[]{
        \includegraphics[width=0.7\linewidth]{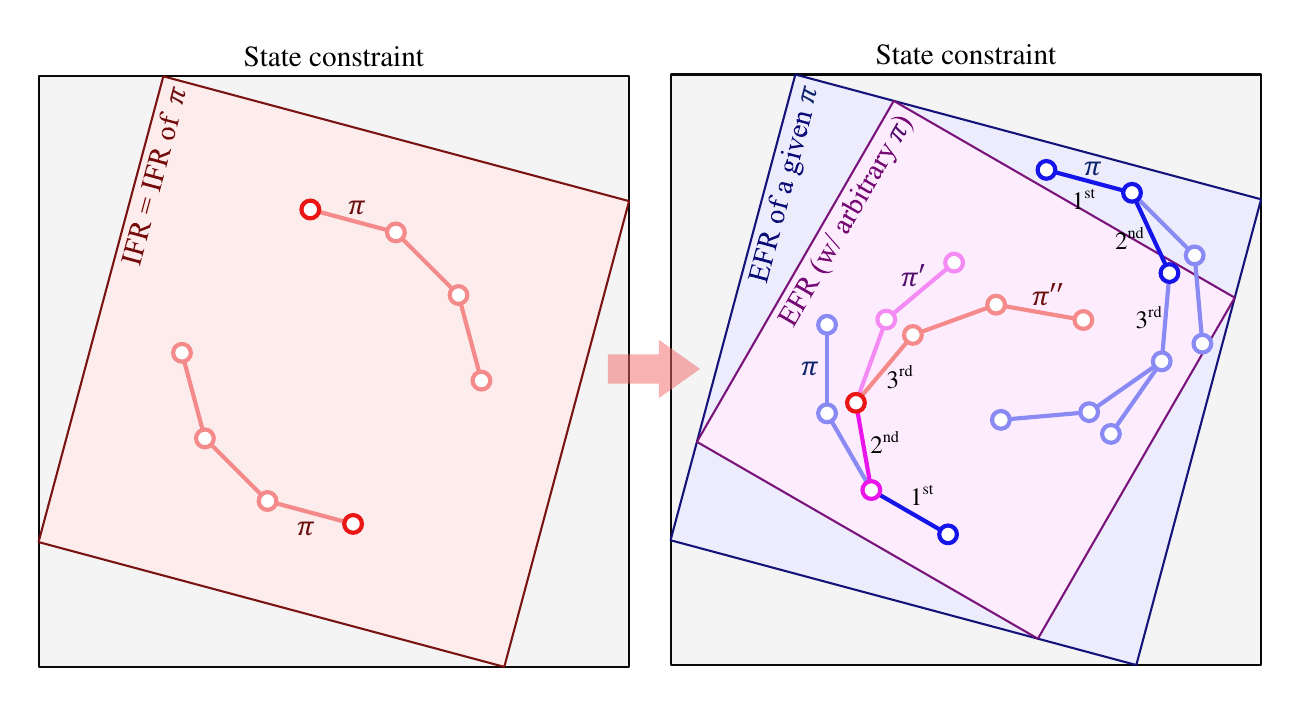}
    }
    \caption{Illustration of containment relationships.}
    \label{fig: containment relationships}
\end{figure}

Talking about $\Xinitg (\pi)$ and $\Xedlsg (\pi)$, the previous theorem only discusses the containment relationships for an arbitrary policy $\pi$. This policy $\pi$ may not be the solution to a virtual-time OCP. In safe RL, besides trying to identify the maximum EFR, one also seeks to find the optimal policy, i.e., $\pi^*$. Let us start from the IFR of $\pi^*$, i.e., $\Xinitg(\pi^*)$. For any state $x \in\Xinitg$, its corresponding virtual-time OCP must have feasible solutions, and $\pi^*$ is one of them. Therefore, $x$ must be initially feasible under $\pi^*$, i.e., $x \in\Xinitg(\pi^*)$. This leads to an important condition $\Xinitg(\pi^*) = \Xinitg$, and therefore a series of useful properties.

\begin{corollary}
\label{cor: containment relationship}~
\begin{enumerate}
    \item $\Xedlsg \subseteq \Xedlsg(\pi^*)\subseteq \Xinitg(\pi^*) = \Xinitg$. \label{cor: containment relationship-1}

    \item $\Xedlsg \subseteq \Xedlsg(\pi^*)\subseteq \Xedlss$. \label{cor: containment relationship-2}

    \item If $\Xedlsg = \Xinitg$, then $\Xedlsg = \Xedlsg(\pi^*) = \Xinitg(\pi^*) = \Xinitg \subseteq \Xedlss$. \label{cor: containment relationship-3}

    \item If $\Xedlsg = \Xedlss$, then $\Xedlsg = \Xedlsg(\pi^*) = \Xedlss \subseteq \Xinitg(\pi^*) = \Xinitg$. \label{cor: containment relationship-4}
\end{enumerate}
\end{corollary}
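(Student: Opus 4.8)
The plan is to obtain all four items as short consequences of Theorem~\ref{thm: Feasible region containment} and Definition~\ref{def: maximum EFR}, with everything hinging on one observation: the optimal policy $\pi^*$ is initially feasible wherever \emph{any} policy is. So I would first establish this linchpin, namely $\Xinitg(\pi^*) = \Xinitg$. For every $x \in \Xinitg$, by Definition~\ref{def: initial feasibility of state} the virtual-time OCP started at $x$ admits at least one feasible solution; since $\pi^*$ maximizes the objective over exactly that feasible set, $\pi^*$ is in particular one of those feasible solutions, hence $x \in \Xinitg(\pi^*)$. Combined with $\Xinitg(\pi^*) \subseteq \Xinitg$ (Theorem~\ref{thm: Feasible region containment}, item 2), this gives the equality. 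Once this is in hand, the rest is assembly.

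For item 1, I would read $\Xedlsg(\pi^*) \subseteq \Xinitg(\pi^*)$ directly off Theorem~\ref{thm: Feasible region containment} item 2, substitute $\Xinitg(\pi^*) = \Xinitg$, and obtain $\Xedlsg \subseteq \Xedlsg(\pi^*)$ by invoking Theorem~\ref{thm: Feasible region containment} item 4 with $\pi = \pi^*$---whose hypothesis $\Xinitg(\pi^*) = \Xinitg$ is precisely what was just proved. Item 2 then follows by chaining $\Xedlsg \subseteq \Xedlsg(\pi^*)$ from item 1 with $\Xedlsg(\pi^*) \subseteq \Xedlss$, which is Theorem~\ref{thm: Feasible region containment} item 3 instantiated at $\pi^*$.

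Items 3 and 4 are collapses of the inclusion chains already produced. Under the hypothesis $\Xedlsg = \Xinitg$, the chain $\Xedlsg \subseteq \Xedlsg(\pi^*) \subseteq \Xinitg(\pi^*) = \Xinitg$ from item 1 is squeezed between two equal endpoints, forcing all four sets to coincide; the trailing inclusion $\Xedlsg \subseteq \Xedlss$ is immediate from Definition~\ref{def: maximum EFR} (or from item 2 via $\Xedlsg = \Xedlsg(\pi^*)$). Symmetrically, under $\Xedlsg = \Xedlss$, the chain $\Xedlsg \subseteq \Xedlsg(\pi^*) \subseteq \Xedlss$ from item 2 collapses, and $\Xedlss = \Xedlsg(\pi^*) \subseteq \Xinitg(\pi^*) = \Xinitg$ reuses item 1.

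I do not expect a genuine obstacle: the substance is entirely in Theorem~\ref{thm: Feasible region containment}, and the argument is bookkeeping on set inclusions. The one point that demands care is the justification of $\Xinitg(\pi^*) = \Xinitg$---specifically, the claim that the OCP's optimizer is itself a \emph{feasible} solution. This relies on the (mild) standing assumption that whenever the feasible set of the virtual-time OCP is nonempty the maximization attains its value at some admissible $\pi$, so that ``$\pi^*$'' is well defined and lies in that set; I would state this attainment assumption explicitly before using it, since without it $\Xinitg(\pi^*)$ could in principle be a strict subset of $\Xinitg$ and items 1, 3, 4 would have to be rephrased with ``$\subseteq$'' in place of the equalities.
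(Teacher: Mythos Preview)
Your proposal is correct and follows essentially the same route as the paper: establish $\Xinitg(\pi^*)=\Xinitg$ from optimality of $\pi^*$, then invoke Theorem~\ref{thm: Feasible region containment} items 2, 3, and 4 to assemble item~1 and item~2, with items~3 and~4 obtained by collapsing those chains under the respective hypotheses. Your explicit flagging of the attainment assumption behind $\Xinitg(\pi^*)=\Xinitg$ is a worthwhile addition that the paper leaves implicit.
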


\begin{proof}~
\begin{enumerate}
    \item According to Theorem \ref{thm: Feasible region containment}(2), it holds that $\Xedlsg(\pi^*)\subseteq \Xinitg(\pi^*)\subseteq \Xinitg$. What left is to show that $\Xinitg \subseteq \Xinitg(\pi^*)$ and $\Xedlsg \subseteq \Xedlsg(\pi^*)$. Since
    $$
    \forall x \in\Xinitg, \pi^* \in \Pi_\mathrm{init}^g(x), \text{ i.e., } x \in\Xinitg(\pi^*),
    $$
    we have $\Xinitg \subseteq \Xinitg(\pi^*)$, which yields $\Xinitg(\pi^*) = \Xinitg$. Now we can choose $\pi=\pi^*$ in Theorem \ref{thm: Feasible region containment}(4) and conclude that $\Xedlsg \subseteq \Xedlsg(\pi^*)$. Thus, we have 
    $$
    \Xedlsg \subseteq \Xedlsg(\pi^*)\subseteq \Xinitg(\pi^*) = \Xinitg.
    $$
    \item This follows directly from Corollary \ref{cor: containment relationship}(1) and Theorem \ref{thm: Feasible region containment}(3).
    \item This follows directly from Corollary \ref{cor: containment relationship}(1) and (2).
    \item This follows directly from Corollary \ref{cor: containment relationship}(1) and (2).
\end{enumerate}
\end{proof}

Corollary \ref{cor: containment relationship}(1) reveals that the optimal policy's EFR is lower bounded by EFR and upper bounded by its own IFR. 
Corollary \ref{cor: containment relationship}(2) further clarifies that the optimal policy's EFR is a subset of the maximum EFR. 
Corollary \ref{cor: containment relationship}(3) gives a method for obtaining the optimal policy's EFR under a certain condition that IFR equals EFR. This condition also serves as a rule for designing virtual-time constraints. 
Corollary \ref{cor: containment relationship}(4) gives a condition when the optimal policy's EFR equals the maximum EFR: the EFR equals the maximum EFR. Combining with Corollary \ref{cor: containment relationship}(3), we arrive at an important rule for designing virtual-time constraints: $\Xedlsg = \Xinitg = \Xedlss$, i.e., EFR, IFR, and the maximum EFR are equal. Following this rule, we can guarantee that the optimal policy's EFR equals the maximum EFR.

\change{The sizes of feasible regions are largely determined by virtual constraints. As we mentioned in Chapter \ref{sec: infeasibility phenomenon}, weaker virtual-time constraints have worse feasibility in the long run. Here, we formally state and prove how the strength of virtual-time constraints affects the sizes of feasible regions. We begin with the following lemma.
\begin{lemma}
\label{lem: forward invariance of EFR}
For all $x\in\Xedlsg$ and $\pi\in\Pi_{\text{init}}^g(x)$, $x_{t+1}\in\Xedlsg$.
\end{lemma}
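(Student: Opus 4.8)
The plan is to verify directly the two defining conditions of endless feasibility (Definition~\ref{def: endless feasibility of state}) for the state $x_{t+1}:=f(x,\pi(x))$ (with $x_t=x$): namely that $x_{t+1}\in\Xinitg$, and that every successive state reached from $x_{t+1}$ under a sequence of initially feasible policies again lies in $\Xinitg$. The whole argument is a ``shift'' (forward-invariance) argument: any admissible future trajectory out of $x_{t+1}$, once prepended with the single step $x\to x_{t+1}$ taken by $\pi$, becomes an admissible trajectory out of $x$, so the endless feasibility of $x$ transfers to $x_{t+1}$.

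First I would establish $x_{t+1}\in\Xinitg$. Since $x\in\Xedlsg$, the case $i=0$ of Definition~\ref{def: endless feasibility of state} states that for every $\pi_t\in\Pi_\mathrm{init}^g(x)$ the successor $f(x,\pi_t(x))$ lies in $\Xinitg$. By hypothesis $\pi\in\Pi_\mathrm{init}^g(x)$, so taking $\pi_t=\pi$ gives $x_{t+1}\in\Xinitg$. Next I would handle the successive states. Fix an arbitrary sequence of policies $\pi_{t+1+i}\in\Pi_\mathrm{init}^g(x_{t+1+i})$, $i=0,1,2,\dots,\infty$, defining the trajectory $x_{t+1+i+1}=f(x_{t+1+i},\pi_{t+1+i}(x_{t+1+i}))$ starting from $x_{t+1}$. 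Prepend the step from $x$ by setting $\pi_t=\pi$, which is consistent with $f(x_t,\pi_t(x_t))=x_{t+1}$. Because $\pi\in\Pi_\mathrm{init}^g(x_t)$ and each $\pi_{t+1+i}\in\Pi_\mathrm{init}^g(x_{t+1+i})$, the sequence $(\pi_t,\pi_{t+1},\pi_{t+2},\dots)$ is an admissible choice of initially feasible policies along the trajectory emanating from $x$. Since $x\in\Xedlsg$, Definition~\ref{def: endless feasibility of state} forces $x_{t+i+1}\in\Xinitg$ for all $i=0,1,2,\dots,\infty$; in particular $x_{t+1+i+1}\in\Xinitg$ for all $i=0,1,2,\dots,\infty$, and these are exactly the successive states reached from $x_{t+1}$. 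As the policy sequence was arbitrary, $x_{t+1}$ meets the second requirement of Definition~\ref{def: endless feasibility of state}, and combining with the first part yields $x_{t+1}\in\Xedlsg$.

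I do not anticipate a genuine obstacle here; the only points requiring care are the bookkeeping of the quantifier ``for all initially feasible policy sequences'' and the observation that a trajectory built by concatenating one $\pi$-step with an arbitrary admissible tail is itself admissible for $x$. One should also note in passing that at each stage the reached state being in $\Xinitg$ guarantees the set of initially feasible policies at that state is nonempty (by the relationship between initial feasibility of a state and of a policy noted after Definition~\ref{def: initial feasibility of policy}), so the trajectories under discussion are well defined and can always be extended.
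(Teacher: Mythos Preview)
Your proof is correct and takes the same approach as the paper, just spelled out in full detail: the paper's own proof is the single sentence ``This holds from the arbitrariness of policies and the infinity in Definition~\ref{def: endless feasibility of state},'' and your shift argument is precisely the unpacking of that remark.
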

\begin{proof}
    This holds from the arbitrariness of policies and the infinity in Definition \ref{def: endless feasibility of state}.
\end{proof}
The following theorem states that a weaker virtual-time constraint leads to a smaller EFR.
\begin{theorem}
\label{thm: feasibility region monotonicity}
Given two virtual-time constraints $\text{VC}_1$ and $\text{VC}_2$, we term $\text{VC}_1$ is weaker than $\text{VC}_2$ if starting from any state $x$, a policy satisfying $\text{VC}_2$ also satisfies $\text{VC}_1$. Let $\text{VC}_h$ be a special virtual-time constraint that is exactly the same with the real-time constraint.
Suppose that $\text{VC}_1$ is weaker than $\text{VC}_2$, and $\text{VC}_2$ is weaker than $\text{VC}_h$, then $\Xedls^{\text{VC}_1} \subseteq \Xedls^{\text{VC}_2} \subseteq \Xedls^{\text{VC}_h} = \Xedlss = \Xinit^{\text{VC}_h} \subseteq \Xinit^{\text{VC}_2} \subseteq \Xinit^{\text{VC}_1}$.
\end{theorem}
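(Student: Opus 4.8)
The plan is to break the seven-set chain into three segments and prove each separately: the \emph{initially-feasible tail} $\Xinit^{\text{VC}_h}\subseteq\Xinit^{\text{VC}_2}\subseteq\Xinit^{\text{VC}_1}$, the \emph{central block of equalities} $\Xedls^{\text{VC}_h}=\Xedlss=\Xinit^{\text{VC}_h}$, and the \emph{endlessly-feasible head} $\Xedls^{\text{VC}_1}\subseteq\Xedls^{\text{VC}_2}\subseteq\Xedls^{\text{VC}_h}$. The one piece of bookkeeping I would set up first is: if $g_1$ is weaker than $g_2$, then any policy satisfying $g_2$ from a state $x$ also satisfies $g_1$ from $x$, i.e. $\Pi_\mathrm{init}^{g_2}(x)\subseteq\Pi_\mathrm{init}^{g_1}(x)$ for every $x$, and hence $\Xinit^{g_2}\subseteq\Xinit^{g_1}$. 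Applying this to the pairs $(\text{VC}_h,\text{VC}_2)$ and $(\text{VC}_2,\text{VC}_1)$ immediately yields the initially-feasible tail.

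For the central block, I would first identify $\Xinit^{\text{VC}_h}$ with $\Xedlss$ by reading Definition~\ref{def: initial feasibility of state} against Theorem~\ref{thm: maximum EFR equivalence}: $x\in\Xinit^{\text{VC}_h}$ says precisely that some policy keeps $h\le0$ over the infinite horizon starting from $x$, which is statement 2) of Theorem~\ref{thm: maximum EFR equivalence}, i.e. $x\in\Xedlss$. Next, $\Xedls^{\text{VC}_h}\subseteq\Xinit^{\text{VC}_h}$ is Theorem~\ref{thm: Feasible region containment}(1) specialized to $g=\text{VC}_h$, and for the reverse inclusion I would show that $\Xinit^{\text{VC}_h}$ is forward invariant under its own initially-feasible policies: if $\pi$ keeps $h\le0$ forever from $x$, then by determinism and time-invariance of the dynamics it keeps $h\le0$ forever from the successor $f(x,\pi(x))$ as well, so that successor lies again in $\Xinit^{\text{VC}_h}$; an induction over real-time steps then matches both clauses of Definition~\ref{def: endless feasibility of state} and shows $\Xinit^{\text{VC}_h}\subseteq\Xedls^{\text{VC}_h}$. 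Combined with $\Xedls^{\text{VC}_h}\subseteq\Xedlss$ (Definition~\ref{def: maximum EFR}), the three sets coincide.

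The endlessly-feasible head is the heart of the proof. The inclusion $\Xedls^{\text{VC}_2}\subseteq\Xedls^{\text{VC}_h}$ is immediate, since $\Xedls^{\text{VC}_2}\subseteq\Xedlss$ by Definition~\ref{def: maximum EFR} and $\Xedlss=\Xedls^{\text{VC}_h}$ by the central block. For $\Xedls^{\text{VC}_1}\subseteq\Xedls^{\text{VC}_2}$ I would take $x\in\Xedls^{\text{VC}_1}$ and verify the two clauses of Definition~\ref{def: endless feasibility of state} for $\text{VC}_2$. The first clause, $x\in\Xinit^{\text{VC}_2}$, I would obtain by routing through the maximum EFR rather than through $\Xinit^{\text{VC}_1}$: by Definition~\ref{def: maximum EFR} and the central block, $\Xedls^{\text{VC}_1}\subseteq\Xedlss=\Xinit^{\text{VC}_h}$, and since $\text{VC}_2$ is weaker than $\text{VC}_h$ the tail inclusion gives $\Xinit^{\text{VC}_h}\subseteq\Xinit^{\text{VC}_2}$, so $x\in\Xinit^{\text{VC}_2}$. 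For the second clause, given any $\pi\in\Pi_\mathrm{init}^{\text{VC}_2}(x_{t+i})$ along the trajectory, weakness of $\text{VC}_1$ relative to $\text{VC}_2$ places $\pi$ in $\Pi_\mathrm{init}^{\text{VC}_1}(x_{t+i})$, so Lemma~\ref{lem: forward invariance of EFR} applied with $g=\text{VC}_1$ returns the successor state to $\Xedls^{\text{VC}_1}$; iterating this, the whole trajectory stays in $\Xedls^{\text{VC}_1}$, hence in $\Xinit^{\text{VC}_2}$ by the inclusion just established, which is exactly what Definition~\ref{def: endless feasibility of state} requires for $x\in\Xedls^{\text{VC}_2}$.

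I expect the main obstacle to be precisely the first clause above --- showing that endless $\text{VC}_1$-feasibility is strong enough to guarantee \emph{initial} $\text{VC}_2$-feasibility of $x$ and of all states reachable under $\text{VC}_2$-initially-feasible policies --- because $\Xinit^{\text{VC}_2}$ is the \emph{smaller} of the two initially-feasible regions, so the elementary bound $\Xedls^{\text{VC}_1}\subseteq\Xinit^{\text{VC}_1}$ points the wrong way. The resolution is the detour $\Xedls^{\text{VC}_1}\subseteq\Xedlss=\Xinit^{\text{VC}_h}\subseteq\Xinit^{\text{VC}_2}$, a containment not listed in Theorem~\ref{thm: Feasible region containment} that must be assembled from Definition~\ref{def: maximum EFR} and Theorem~\ref{thm: maximum EFR equivalence}. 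Everything else is routine given Lemma~\ref{lem: forward invariance of EFR}: the tail, the free inclusion $\Xedls^{\text{VC}_2}\subseteq\Xedls^{\text{VC}_h}$, and the inductive forward-invariance bookkeeping. It is also worth noting that the objective horizon $N$ plays no role throughout, since every set in the chain depends only on the constraint functions and the dynamics.
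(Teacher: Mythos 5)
Your proposal is correct and follows essentially the same route as the paper: the tail and central block via the policy-set inclusion $\Pi_\mathrm{init}^{\text{VC}_2}(x)\subseteq\Pi_\mathrm{init}^{\text{VC}_1}(x)$ and Theorem~\ref{thm: maximum EFR equivalence}, and the key containment $\Xedls^{\text{VC}_1}\subseteq\Xedls^{\text{VC}_2}$ via exactly the detour $\Xedls^{\text{VC}_1}\subseteq\Xedlss=\Xinit^{\text{VC}_h}\subseteq\Xinit^{\text{VC}_2}$ combined with Lemma~\ref{lem: forward invariance of EFR}, which is the paper's argument verbatim. The only cosmetic differences are that you spell out the forward-invariance induction for $\Xinit^{\text{VC}_h}\subseteq\Xedls^{\text{VC}_h}$ (which the paper leaves implicit) and obtain $\Xedls^{\text{VC}_2}\subseteq\Xedls^{\text{VC}_h}$ directly from $\Xedls^{\text{VC}_2}\subseteq\Xedlss$ rather than repeating the procedure.
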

\begin{proof}
It holds by Definition \ref{def: initial feasibility of state}, Definition \ref{def: endless feasibility of state} and Theorem \ref{thm: maximum EFR equivalence} that $\Xedls^{\text{VC}_h} = \Xedlss = \Xinit^{\text{VC}_h} \subseteq \Xinit^{\text{VC}_2} \subseteq \Xinit^{\text{VC}_1}$. 
The left is to show that $\Xedls^{\text{VC}_1} \subseteq \Xedls^{\text{VC}_2} \subseteq \Xedls^{\text{VC}_h}$. We only proof the first containment and the second one follows the same procedure.
For all $x_t\in \Xedls^{\text{VC}_1}$, by Definition \ref{def: endless feasibility of state} and Lemma \ref{lem: forward invariance of EFR}, we have $\forall\pi_{t+i}\in\Pi_\mathrm{init}^{\text{VC}_1}(x_{t+i}),$ $x_{t+i+1}\in\Xedls^{\text{VC}_1}, i=0,1,2,\dots,\infty$.
Together with the facts that $\Pi_\mathrm{init}^{\text{VC}_2}(x_{t+i}) \subseteq \Pi_\mathrm{init}^{\text{VC}_1}(x_{t+i})$ and that $\Xedls^{\text{VC}_1} \subseteq \Xedlss \subseteq \Xinit^{\text{VC}_2}$, we can conclude that $x_t\in \Xinit^{\text{VC}_2}$ and $\forall\pi_{t+i}\in\Pi_\mathrm{init}^{\text{VC}_2}(x_{t+i}),$ $x_{t+i+1}\in\Xinit^{\text{VC}_2}, i=0,1,2,\dots,\infty$, which means $x_t\in \Xedls^{\text{VC}_2}$.
Thus, $\Xedls^{\text{VC}_1} \subseteq \Xedls^{\text{VC}_2}$.
\end{proof}
\begin{remark}
It is worth distinguishing the case where virtual-time constraints are tighter than the real-time constraint $\text{VC}_h$. Suppose that $\text{VC}_h$ is weaker than $\text{VC}_1$, and $\text{VC}_1$ is weaker than $\text{VC}_2$ (i.e., $\text{VC}_2$ is the strictest). In this case, although $\Xinit^{\text{VC}_2} \subseteq \Xinit^{\text{VC}_1}$, the relationship $\Xedls^{\text{VC}_2} \subseteq \Xedls^{\text{VC}_1}$ does not necessarily hold. This is because $\Xedls$ requires the safety condition to hold for \textit{all} initially feasible policies. Since $\Pi_\mathrm{init}^{\text{VC}_2}(x) \subseteq \Pi_\mathrm{init}^{\text{VC}_1}(x)$, the fact that all policies in the smaller set $\Pi_\mathrm{init}^{\text{VC}_2}(x)$ maintain feasibility does not guarantee that the additional policies in $\Pi_\mathrm{init}^{\text{VC}_1}(x)$ also do so. Therefore, satisfaction of the endless feasibility condition under $\text{VC}_2$ does not imply satisfaction under $\text{VC}_1$.
\end{remark}
}

\section{Equivalence conditions of two feasible regions}
The perfect design for a virtual-time constraint should result in an EFR that equals the maximum EFR, i.e., $\Xedlsg = \Xedlss$. However, $\Xedlsg$ is difficult to compute because it requires checking every initially feasible policy in every time step. To avoid this problem, we hope that $\Xedlsg = \Xinitg$ so that we only need to compute $\Xinitg$, which is much easier than computing $\Xedlsg$. In this case, if $\Xinitg = \Xedlss$, then $\Xedlsg(\pi^*) = \Xedlss$. The following theorem helps us understand what kind of condition $\Xedlsg = \Xinitg$ needs.

\begin{theorem}[Conditions for $\Xedlsg = \Xinitg$]
\label{thm: necessary cond and sufficient cond}~
\\
Necessary conditions:
\begin{enumerate}
    \item $\Xinitg \subseteq \Xedlss$.
    \item $\forall x_t \in \Xinitg$, $\exists u_t \in \mathcal{U},\st x_{t+1}\in\Xinitg$.
\end{enumerate}
Sufficient conditions (examples of virtual-time constraints):
\begin{enumerate}
    \item $h(x_{i|t})\le0,i=0,1,2,\dots,\infty$.
    \item $h(x_{0|t})\le0,x_{1|t}\in\Xedlss$.
\end{enumerate}
\end{theorem}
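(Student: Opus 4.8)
The plan is to handle the four assertions independently, since each reduces quickly to results already in hand. For necessary condition~1, I would combine the hypothesis $\Xedlsg = \Xinitg$ with the inclusion $\Xedlsg \subseteq \Xedlss$, which is immediate from Definition~\ref{def: maximum EFR} because the maximum EFR is the union of EFRs over all virtual-time constraints; this gives $\Xinitg = \Xedlsg \subseteq \Xedlss$. For necessary condition~2, I would fix $x_t \in \Xinitg = \Xedlsg$; since $x_t$ is initially feasible there is at least one $\pi \in \Pi_\mathrm{init}^g(x_t)$, and applying Lemma~\ref{lem: forward invariance of EFR} to $x_t \in \Xedlsg$ and this $\pi$ yields $x_{t+1} \in \Xedlsg = \Xinitg$, so $u_t = \pi(x_t)$ is the required action.

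For the sufficient conditions I would first isolate a convenient equivalence: $\Xedlsg = \Xinitg$ if and only if $\Xinitg$ is forward invariant under every initially feasible policy, i.e. $f(x,\pi(x)) \in \Xinitg$ whenever $x \in \Xinitg$ and $\pi \in \Pi_\mathrm{init}^g(x)$. The ``only if'' direction is Lemma~\ref{lem: forward invariance of EFR} combined with $\Xedlsg = \Xinitg$. For ``if'', a step-by-step induction on the horizon index shows that any trajectory starting in $\Xinitg$ and driven by (possibly time-varying) initially feasible policies remains in $\Xinitg$; by Definition~\ref{def: endless feasibility of state} this gives $\Xinitg \subseteq \Xedlsg$, and the reverse inclusion $\Xedlsg \subseteq \Xinitg$ is Theorem~\ref{thm: Feasible region containment}(1). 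With this equivalence in place, verifying each sufficient condition amounts to (a) identifying $\Xinitg$ and (b) checking the forward-invariance property.

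For the first example ($g=h$ over an infinite horizon), a virtual-time trajectory under a policy coincides with the corresponding real-time trajectory, so $x \in \Xinitg$ exactly when some $\pi$ makes $h(x_{t+i}) \le 0$ for all $i$; by Theorem~\ref{thm: maximum EFR equivalence} this is $\Xinitg = \Xedlss$. Forward invariance follows because if $\pi \in \Pi_\mathrm{init}^g(x)$ then the whole trajectory from $x$ under $\pi$ satisfies $h \le 0$, hence so does its tail starting at $x_{t+1}$, giving $\pi \in \Pi_\mathrm{init}^g(x_{t+1})$ and thus $x_{t+1} \in \Xinitg$. For the second example ($h(x_{0|t}) \le 0$, $x_{1|t} \in \Xedlss$), I would first prove $\Xinitg = \Xedlss$: a state lies in $\Xinitg$ iff there is $u \in \mathcal{U}$ with $h(x) \le 0$ and $f(x,u) \in \Xedlss$; expanding $\Xedlss$ via Theorem~\ref{thm: maximum EFR equivalence}, one direction prepends $u$ to an infinite $h$-safe trajectory out of $f(x,u)$ and the other reads off the first action and the safe tail of a trajectory out of $x$, so this set is exactly $\Xedlss$. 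Forward invariance is then immediate: if $\pi \in \Pi_\mathrm{init}^g(x)$ with $x \in \Xinitg$, then $f(x,\pi(x)) \in \Xedlss = \Xinitg$ directly from the form of $g$. In both examples the equivalence then yields $\Xedlsg = \Xinitg$.

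I expect the main obstacle to be careful bookkeeping in two spots rather than any genuine difficulty. First, in the ``if'' direction of the equivalence the policy may change at each step, so the induction must re-apply forward invariance state-by-state instead of following a single policy, and unwinding Definition~\ref{def: endless feasibility of state} correctly is the crux. Second, the identity $\Xinitg = \Xedlss$ for the second example genuinely uses Theorem~\ref{thm: maximum EFR equivalence} in both directions, together with an explicit concatenation of a one-step action and an infinite safe tail. Everything else is a direct unwinding of the definitions of initial and endless feasibility.
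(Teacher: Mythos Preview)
Your proposal is correct and follows essentially the same route as the paper: both reduce the sufficient conditions to showing that $\Xinitg$ is forward invariant under every initially feasible policy, and both handle the necessary conditions by combining the hypothesis with the obvious inclusion $\Xedlsg \subseteq \Xedlss$ and the nonemptiness of $\Pi_\mathrm{init}^g(x_t)$.

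The one place you are noticeably more careful than the paper is sufficient condition~2. The paper's proof jumps from $x_{t+1}\in\Xedlss$ directly to $x_{t+1}\in\Xinitg$ without justifying $\Xedlss \subseteq \Xinitg$ for this particular constraint; you close that gap by explicitly establishing $\Xinitg = \Xedlss$ via Theorem~\ref{thm: maximum EFR equivalence} and the concatenation argument. Your explicit isolation of the forward-invariance equivalence is also a clean abstraction that the paper uses only implicitly. These are refinements rather than a genuinely different strategy.
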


\begin{proof}
Necessary conditions:
\begin{enumerate}
    \item This is obvious by noting Definition \ref{def: maximum EFR}.
    \item Since $x_t\in\Xinitg$, it follows that $\Pi_\mathrm{init}^g (x_t )\neq\emptyset$. Considering that we also have $x_t\in\Xedlsg$, any $\pi \in\Pi_\mathrm{init}^g (x_t)$ and $u_t=\pi(x_t )$ will induce an initially feasible successive state.
\end{enumerate}

Sufficient conditions:
\begin{enumerate}
    \item $\forall x_t\in\Xinitg$ and $\pi \in\Pi_\mathrm{init}^g (x_t)$, we have 
    $$
    h(x_{t+i})=h(x_{i|t})\le0,i=0,1,\dots,\infty.
    $$
    Since this goes to infinity, $\pi$ must also be a feasible policy in $x_{t+1}$, i.e., $x_{t+1}\in\Xinitg$. By the arbitrariness of $x_t$ and $\pi$ , we can conclude that $x_t\in\Xedlsg$ and hence $\Xedlsg = \Xinitg$.
    \item $\forall x_t\in\Xinitg$ and $\pi \in\Pi_\mathrm{init}^g (x_t)$, we have $x_{t+1}=x_{1|t}\in\Xedlss$, and thus $x_{t+1}\in\Xinitg$. By the arbitrariness of $x_t$ and $\pi$, it holds that $\forall x \in\Xedlsg$. We can conclude that $\Xedlsg = \Xinitg$.
\end{enumerate}
\end{proof}

Theorem \ref{thm: necessary cond and sufficient cond} has important guiding significance for designing virtual-time constraints. 
One may argue that since the maximum EFR is unknown, the first necessary condition cannot be checked. In fact, we do not need to know the maximum EFR exactly. Instead, we can check if all states in IFR are endlessly feasible for a given policy. If this is true, the first necessary condition is satisfied.
\change{However, it remains intractable to check endless feasibility by definition in high-dimensional spaces, since we cannot traverse the state space. Instead, we can only prove or falsify endless feasibility by theoretical analysis. As we will demonstrate in Chapter \ref{constraint review}, this is achievable for established feasibility functions, such as CBF and Hamilton-Jacobi reachability function.}
The sufficient conditions in Theorem \ref{thm: necessary cond and sufficient cond} can be viewed as two examples of virtual-time constraints satisfying $\Xedlsg = \Xinitg$. When designing virtual-time constraints, one should try to satisfy a sufficient condition under the premise that the necessary conditions are satisfied. Note that other forms of sufficient conditions still exist that can ensure the equivalence of EFR and IFR. We can choose different ones according to the specific constrained OCP we aim to solve.

\chapter{Feasibility Function and Constraint Types}
\label{feasibility function}
Solving constrained OCPs involves finding both the optimal policy and its EFR. Whether the EFR equals the maximum EFR depends on the choice of virtual-time constraints. 
Therefore, constructing a proper virtual-time constraint is an indispensable task in constrained optimal control. 
In this chapter, we first introduce a tool for representing EFR called feasibility function, which helps us construct proper constraints that satisfy the equivalence conditions and enable us to find the maximum EFR.
Next, we review several commonly used virtual-time constraint formulations from the perspective of feasibility function. We point out that these virtual-time constraints are all constructed from feasibility functions and, therefore, satisfy desirable properties. Moreover, we use the tools introduced in the previous chapter to analyze the containment relationships of feasible regions under these virtual-time constraints.

\section{Feasibility function}
Feasibility functions are used for constructing virtual-time constraints and representing EFRs. A feasibility function is a mapping from the state space to a real number, i.e., $F:\mathcal{X}\to\mathbb{R}$. Through proper design, its zero-sublevel set $\mathrm{X}_F=\{x|F(x)\le0\}$ represents an EFR.
There are two basic methods for designing feasibility functions. The first uses a control invariant set and the second resorts to constraint aggregation. A control invariant set is a region where there exists a policy that keeps all successive states still in this region. A constraint aggregation equivalently replaces the infinite-step real-time constraint with a single-step virtual-time constraint. These two types of feasibility functions result in two families of virtual-time constraints. 

For the first type, the zero-sublevel set of the feasibility function is chosen as a control invariant set, which is defined as follows.
\begin{definition}[Control invariant set (CIS)]
\label{def: Control invariant set}
A function $F:\mathcal{X}\to\mathbb{R}$ is a feasibility function if $\mathrm{X}_F\subseteq\Xcstr$ is a control invariant set, i.e., 
$$
\forall x\in\mathrm{X}_F, \exists u, \st x^\prime \in\mathrm{X}_F,
$$
where $x^\prime=f(x,u)$.
\end{definition}
\noindent When using this type of feasibility function, certain inequalities are used as virtual-time constraints, which require the next state to be kept in $\mathrm{X}_F$ starting from any state in $\mathrm{X}_F$. 
Figure \ref{fig: CIS} gives an illustration of a feasibility function and its control invariant set. Any state $x$ inside the control invariant set can be still kept in this set under some action $u$, while a state $\tilde{x}$ outside the control invariant set may not be able to enter this set.

\begin{figure}
    \centering
    \includegraphics[width=0.45\linewidth]{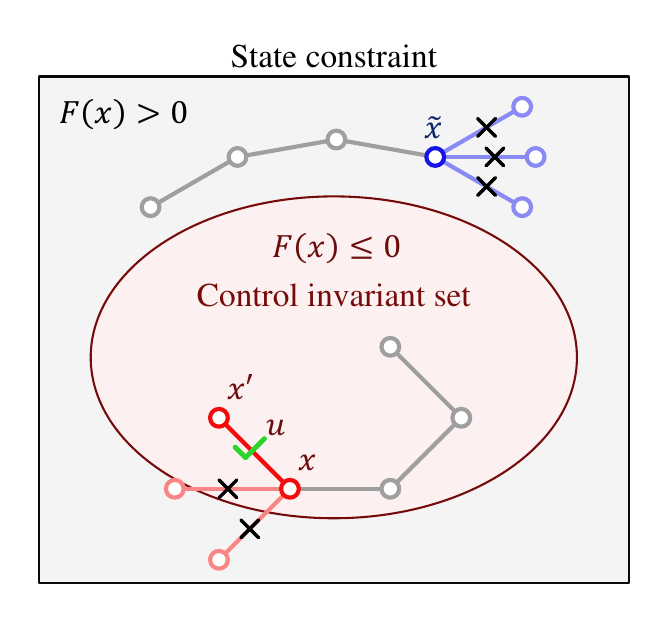}
    \caption{Feasibility function defined through control invariant set.}
    \label{fig: CIS}
\end{figure}

For the second type, the feasibility function is chosen as an aggregation function of the real-time constraint, which is defined as follows.
\begin{definition}[Constraint aggregation (CA)]
\label{def: Constraint aggregation}
A function $F:\mathcal{X}\to\mathbb{R}$ is a feasibility function if
$$
\exists\pi, \st \forall x\in\mathcal{X}, F(x)\le0\iff h(x_t)\le 0, t=0,1,\dots,\infty,
$$
where $x_0=x$, $x_{t+1}=f(x_t,\pi(x_t))$.
\end{definition}
\noindent Figure \ref{fig: CA} gives an illustration of constraint aggregation. For states $x_1$ and $x_3$, although they do not violate the state constraint at the current step, their future states leave the constrained set. Therefore, the feasibility function $F$ is positive on these two states. For state $x_2$, all of its successive states are inside the constrained set, thus its feasibility function is less than or equal to zero.
Definition \ref{def: Constraint aggregation} implies that $F$ is a function of a policy $\pi$. We can also denote it as $F^\pi$ to show its connection with $\pi$. 
One may recall that the state-value and action-value functions in RL also contain infinite future rewards, and we can compute them iteratively by bootstrapping. Similarly, feasibility functions of this type are usually formulated to satisfy self-consistency conditions so that they can be computed iteratively.

\change{The control invariant set method is most suitable for low-dimensional systems with known dynamics, as constructing such sets typically requires explicit knowledge of the system dynamics and remains computationally tractable only in lower dimensions. This approach provides strict, provable safety guarantees, but it may sacrifice optimality—especially when the resulting invariant set is overly conservative. It is thus well-suited for applications where safety is critical and some degree of optimality can be compromised.}

\change{In contrast, the constraint aggregation method can be applied to higher-dimensional systems and those with unknown dynamics. Feasibility functions designed via constraint aggregation generally satisfy self-consistency conditions, enabling them to be solved via fixed-point iteration. When parameterized using neural networks, such functions can be optimized iteratively via dynamic programming, much like value functions in reinforcement learning. A key advantage of this approach is its theoretical ability to recover the maximum feasible region in any system. However, this scalability comes at a cost: the zero-sublevel sets of these feasibility functions are not necessarily forward invariant, which implies that strong safety guarantees may not always be available. This behavior mainly stems from approximation and generalization errors when feasibility functions are represented by neural networks. Imperfect training, limited data, and finite network capacity prevent exact recovery of the true feasibility function.}

\change{The feasibility function has a natural connection to the value function defining the infinite-horizon backward avoidable tube in Hamilton-Jacobi reachability analysis. This reachability value function, whose zero-sublevel set is the maximum EFR, is a quintessential example of a feasibility function. However, the concept of a feasibility function is more general. Its zero-sublevel set is not restricted to the maximum EFR but can represent the infinite-horizon constraint-satisfying set of any policy. Moreover, its functional form is not confined to the specific ``maximum-over-time'' structure of the reachability solution. As we will show in the following two sections, the Hamilton-Jacobi value function is a specific instance within a broader and more flexible class of the feasibility function.}

\begin{figure}
    \centering
    \includegraphics[width=0.45\linewidth]{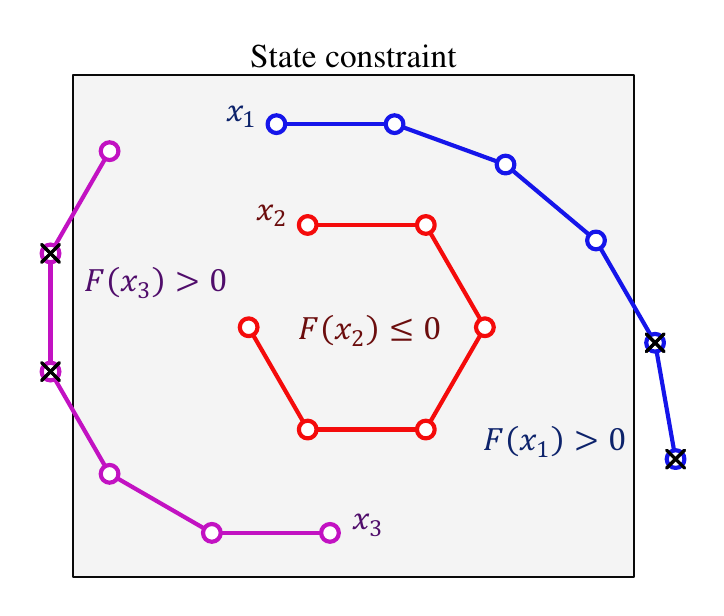}
    \caption{Feasibility function defined through constraint aggregation.}
    \label{fig: CA}
\end{figure}

\section{Type I: control invariant set}
This type of constraint restricts the state in a control invariant set (CIS) represented by the zero-sublevel set of a feasibility function. There are mainly two kinds of feasibility functions of this type: a) CBF and b) SI. They differ in the inequalities that guarantee the invariant property of their zero-sublevel sets.

\subsection{Control barrier function}
\label{sec: CBF}
CBF is a widely used feasibility function for synthesizing safe controllers in constrained OCPs \citep{ames2019control, ames2016control, cheng2019end, robey2020learning}. The control invariance of the zero-sublevel set of CBF is guaranteed by the fact that the time derivative of CBF is always non-positive on the boundary of this set. 
The formal definition of CBF in a discrete-time system is as follows.

\begin{definition}[Control barrier function]
\label{def: CBF}
A function $B:\mathcal{X}\to\mathbb{R}$ is a control barrier function (CBF) if it satisfies:
\begin{enumerate}
    \item[1)] $\forall x\in\mathcal{X}$, if $h(x)>0$, then $B(x)>0$,
    \item[2)] $\forall x\in\mathcal{X}, \exists u\in\mathcal{U}, \st B(x^\prime)-B(x)\le-\alpha(B(x))$,
\end{enumerate}
where $x^\prime=f(x,u)$ and $\alpha:\R\to\R$ is a strictly increasing function and $|\alpha(z)|\le|z|$ for all $z\in\R$.
\end{definition}

If a function satisfies the above two properties, it can be proved that its zero-sublevel set is a CIS \citep{ames2019control}. Therefore, according to Definition \ref{def: Control invariant set}, a CBF is a feasibility function itself, i.e., 
$$F(x)=B(x).$$
Property 2) in Definition \ref{def: CBF} is the most critical element for ensuring control invariance. 
When solving a constrained OCP, we must take it as a virtual-time constraint to ensure that the action satisfies this property. This leads to a one-step constraint imposed on the second state in virtual-time domain, i.e., $x_{1|t}$. In addition, $B(\cdot)\le0$ is imposed on the first state $x_{0|t}$. This is because we need to restrict the domain of the OCP to the CIS. This first-step virtual-time constraint is not weaker than the real-time state constraint because property 1) in Definition \ref{def: CBF} tells us that $B(\cdot)\le0$ implies $h(\cdot)\le0$. Combining these two constraints, we have
\begin{equation}
\label{CBF constraint}
\begin{aligned}
    g(x_{0|t})&=B(x_{0|t})\le0, \\
    g(x_{1|t})&=B(x_{1|t})-B(x_{0|t})+\alpha(B(x_{0|t}))\le0.
\end{aligned}
\end{equation}
The above constraints guarantee that the next state $x_{1|t}$ is still in the zero-sublevel set. If all states in the zero-sublevel set satisfy this constraint, the set is forward invariant, and the real-time state constraint is always satisfied.

In literature, a notion of high order CBF (HOCBF) \citep{xiao2019control, xiao2021high} is used to deal with high relative degree constraints. The relative degree of a function refers to its lowest order of time derivative where the action explicitly appears. When the constraint function $h$ has a high relative degree, directly choosing $h$ as a CBF has no effect on the action because constraint \eqref{CBF constraint} is irrelevant to the action. This necessitates HOCBF, which considers time derivatives of $h$ where the action explicitly appears \citep{xiao2019control}. Although not explicitly stated, Definition \ref{def: CBF} is compatible with HOCBF because $B$ does not have to be chosen as $h$ and can be constructed as some function that makes the action explicitly appear in constraint \eqref{CBF constraint}. 

Let us analyze the IFR and EFR under constraint \eqref{CBF constraint}. 
\change{
For IFR, we examine each state $x$ whether the corresponding virtual-time OCP starting from it has a solution. 
Regarding the first inequality in \eqref{CBF constraint}, since $x_{0|t}=x$, all states in the zero-sublevel set of $B$ naturally satisfy this condition. For notation simplicity, we denote this set as $\rmX_B=\{x\in\mathcal{X}|B(x)\le0\}$. 
Regarding the second inequality in \eqref{CBF constraint}, Definition \ref{def: CBF} guarantees that for every state $x_{0|t}=x\in\mathcal{X}$, there exists an action such that the resulting state $x_{1|t}$ satisfies this inequality. 
Thus, we can conclude that
$$\Xinit^g=\rmX_B.$$
}

For EFR, we look for the state where the OCPs in all successive states have solutions. Since the IFR is $\rmX_B$, the EFR must be a subset of $\rmX_B$. For any state in $\rmX_B$, its successive state is still in $\rmX_B$ if we take an action that satisfies \eqref{CBF constraint}. Therefore, the state can always be kept in $\rmX_B$ as long as an initially feasible policy is applied at every time step. We can conclude that the EFR equals $\rmX_B$, i.e.,
$$\Xedls^g=\rmX_B.$$

We can see that EFR equals IFR under constraint \eqref{CBF constraint}. This shows the benefit of feasibility function for designing virtual-time constraints, i.e., it ensures the equivalence of EFR and IFR. 
In a constrained OCP, we always aim to construct an EFR that equals the maximum EFR. 
With the equivalence of EFR and IFR, we can achieve this by constructing an IFR that equals the maximum EFR. 
The size of IFR depends on the design of CBF. 
For a general CBF, we can only say that $\rmX_B$ is a subset of the maximum EFR, i.e.,
$$\rmX_B\subseteq\Xedls^*.$$
Therefore, the relationship among the feasible regions is as follows:
$$\Xinit^g=\Xedls^g\subseteq\Xedls^*.$$

In conclusion, a CBF is a feasibility function that represents EFR using a CIS, which is its zero-sublevel set. A CBF satisfies two properties: the first requires that the zero-sublevel set is constraint-satisfying, and the second ensures that the zero-sublevel set is control invariant. With a CBF, we can construct a two-step virtual-time constraint under which the IFR and EFR both equal the zero-sublevel set. However, this set may not equal the maximum EFR but only be a subset of it depending on the design of CBF. In systems with high-dimensional state and action spaces or systems without analytical models, it remains challenging to design a CBF with a CIS equal to or close to the maximum EFR.

\subsection{Safety index}
SI, originally proposed by \citet{liu2014control}, is another feasibility function that represents a CIS by its zero-sublevel set, which is called safe set. SI ensures control invariance in a similar way to CBF. In particular, it requires that once the system deviates from the safe set, the control policy will pull it back. This is achieved by setting the time derivative of SI to be negative outside the safe set. SI differs from CBF in that it explicitly considers the relative degree of system in construction. 
It is constructed as the linear combination of the constraint function, its nonlinear substitution, and its time derivatives. The formal definition of SI is as follows.

\begin{definition}[Safety index]
\label{def: safety index}
A function $\phi:\rmX\to\R$ is a safety index (SI) if it satisfies the following three properties:
\begin{enumerate}
    \item[1)] $\phi=h+k_1\dot{h}+\dots+k_n h^{(n)}$, where $k_1,\dots,k_n\in\R$,
    \item[2)] All roots of $1+k_1 s+k_2 s^2+\dots+k_n s^n=0$ are on the negative real line,
    \item[3)] The relative degree from $h^{(n)}$ to $u$ is one,
\end{enumerate}
where $n+1$ is the system order. Moreover, suppose $h^*$ defines the same set as $h$ does, i.e., $\{x\in\calX|h^*(x)\le0\}=\Xcstr$, then
$$\phi^*=\phi-h+h^*$$
is also an SI.
\end{definition}

\change{Property 3) ensures the SI has a relative degree of one. This means that the action has a direct influence on the SI. With the above three properties, it can be proved that for unbounded action space, the safe set is control invariant \citep{liu2014control}. For bounded action space $\calU$, a fourth condition is needed: the parameters $k_i, i=1,2,\dots,n$ are chosen such that there exist $u\in\calU$ for $\dot{\phi}\le0$ when $\phi=0$, which is a necessary and sufficient condition for control invariance of the safe set. The use of $h^*$ is to introduce nonlinearity for shaping the boundary of the feasible region.} With an SI, we can construct the following feasibility function:
$$
F(x)=\max\{\phi(x),h(x)\}.
$$
Here, including $h$ in the maximum operator is to ensure that the feasible region is a subset of the constrained set. When $h^*=h$ (i.e., the SI is linearly defined), the feasible region can be represented by a set of inequalities similar to that of HOCBF \citep{xiao2019control}. When $h^*\neq h$ (i.e., the SI is nonlinearly defined) and $n=1$ (i.e., the dynamic system is second order), the feasible region equals the zero-sublevel set of $F$. For $h^*\neq h$ and $n>1$, there is no conclusion regarding the exact representation of the feasible region with $F$.

The control invariance of the safe set is guaranteed by the inequality $\dot{\phi}\le0$ when $\phi=0$. This is a differential inequality and only applies to continuous-time systems. For discrete-time systems, we can convert it to a difference inequality through a forward Euler discretization:
$$\phi(x_{1|t})-\max\bbr{\phi(x_{0|t})-\eta,0}\le0,$$
Similar to the virtual-time constraint of CBF, SI also needs a first-step constraint that restricts the state in the safe set. Combining these two constraints, we arrive at the virtual-time constraint of SI,
\begin{equation}
\label{safety index constraint}
\begin{aligned}
    g(x_{0|t})&=\max\{\phi(x_{0|t}), h(x_{0|t})\}\le0, \\
    g(x_{1|t})&=\phi(x_{1|t})-\max\bbr{\phi(x_{0|t})-\eta,0}\le0.
\end{aligned}
\end{equation}
The first constraint above is not weaker than the real-time constraint. The second constraint requires that the SI decreases every time step by at least a value of $\eta$ when it is above zero. When the SI is below zero, it is required not to become positive. With a policy satisfying constraint \eqref{safety index constraint}, the zero-sublevel set of SI is forward invariant. 

The relationship between feasible regions under SI constraint is similar to that of CBF, so we omit the analysis here.
One may discover that SI is closely related to CBF in both feasibility function formulation and virtual-time constraint design. This is not a coincidence but rather originates from the construction of CBFs. While CBF guarantees control invariance of its zero-sublevel set by definition, it does not provide any guidance for its construction. HOCBF provides one design rule for obtaining a valid CBF, and SI can be considered as another.
However, we have seen that a limitation of these CIS-based methods is that they may not be able to obtain the maximum EFR. In contrast, the following feasibility function defined by constraint aggregation can lead to the maximum EFR.

\section{Type II: constraint aggregation}
Constraints \change{based on} constraint aggregation (CA) replace the infinite-step real-time constraints with a single-step virtual-time constraint by an aggregation function. There are mainly three kinds of CA-based feasibility functions: cost value function, Hamilton-Jacobi reachability function, and constraint decay function. Each kind of feasibility function further corresponds to two types of virtual-time constraints. One uses the feasibility function of the current policy, i.e., the policy to be optimized, while the other uses the feasibility function of a fixed policy, e.g., the policy from the last iteration.

\change{An advantage of this type of feasibility functions is that they provide a principled way to evaluate the feasible region of a given policy without a known system model. These feasibility functions are typically associated with a self-consistency condition that depends only on the policy to evaluate and on sampled transitions from the environment. This equation induces a contraction mapping, so its fixed point can be learned via standard RL temporal-difference methods without requiring a system model \citep{fisac2019bridging, yang2023feasible}. This enables us to solve these feasibility functions by fixed point iteration, Once the feasibility function is solved, its zero-sublevel set represents the feasible region of the policy.}

\subsection{Cost value function}
Cost value function (CVF) originates from a common formulation for safe RL, constrained Markov decision process (CMDP) \citep{altman1999constrained}, which augments a standard MDP with a cost function. Being the expected cumulative costs, CVF shares the same mathematical form and, therefore, the same properties, with state-value function $V^\pi (x)$ in RL. With a slight effort to define a cost function $c(x)$ with the constraint function $h(x)$, CVF can be well adapted to constrained OCPs. The definition of CVF is as follows.

\begin{definition}[Cost value function]
\label{def: cost value function}
For a constrained OCP with $h(x)$ as a constraint function, $x_t,t=0,1,2,\dots,\infty$ donating the state trajectory starting from state $x$ under policy $\pi$, the cost value function $F^\pi:\mathcal{X}\to\R$ is:
$$
F^\pi (x)=\sum_{t=0}^\infty\gamma^t c(x_t),
$$
where $c(x)=\pmb 1[h(x)>0]$ is the cost signal.
\end{definition}

To see that CVF is a valid feasibility function, note that $c(x)$ is non-negative for all $x$, so that $F^\pi (x)\le0$ is sufficient and necessary for $c(x_t )=0,\forall t=0,1,\dots,\infty$. The close relationship between CVF and state-value function implies some common properties. Specifically, the self-consistency condition and Bellman equation, which play important roles in solving state-value functions, have their counterparts for CVFs. For the sake of distinction, they are termed risky self-consistency condition: 
$$F^\pi (x)=c(x)+\gamma F^\pi (x^\prime),$$
and risky Bellman equation:
$$F^*(x)=c(x)+\gamma\min_u F^*(x^\prime).$$
Similar to the case for state-value function, the right-hand sides of the above two equations are contraction operators on a complete metric space. Hence, a CVF can be solved by fixed-point iteration, just like a state-value function. 

There are two ways to construct virtual-time constraints with a CVF. The first way is to constrain only the first state:
\begin{equation}
\label{CVF constraint 1}
    g(x_{0|t})=F^\pi (x_{0|t})\le0,
\end{equation}
and $x_{i|t},i=1,\dots,n$ are unconstrained. Here, the superscript $\pi$ of $F$ is the policy to be optimized. The dependence of constraint \eqref{CVF constraint 1} on $\pi$ is established through $F^\pi$, which is a function of $\pi$. Technically speaking, for a given $x_{0|t}$, $F^\pi (x_{0|t})$ is a functional of $\pi$. In practice, $F^\pi$ is difficult to compute directly and is usually approximated by importance sampling of trajectories collected by another policy, similar to the way of approximating value function in on-policy RL algorithms. 
The IFR under constraint \eqref{CVF constraint 1} is by definition
$$
\Xinit^g=\{x|\exists\pi,\st F^\pi (x)\le0\}.
$$
Starting from $\Xinit^g$, as long as we apply an initially feasible policy, the successive states will always be in $\Xinit^g$.
This is because $F^\pi (x)\le0$ is already an infinite-horizon constraint-satisfying property and naturally holds for all future steps. 
Therefore, the EFR is still
$$
\Xedls^g=\{x|\exists\pi,\st F^\pi (x)\le0\}.
$$
What's more, an excellent property of constraint \eqref{CVF constraint 1} is that
$$
\Xedls^g=\Xinit^g=\Xedls^*.
$$
To understand this, recall Theorem \ref{thm: maximum EFR equivalence} and that $F^\pi (x)\le0$ is equivalent to $h(x_t)\le0,t=0,1,\dots,\infty$. The equivalence of these three regions is satisfying, with which we can easily check endless feasibility by initial feasibility, and a constructed EFR is automatically the maximum.

The second way of constructing a virtual-time constraint is a little bit complex, involving both the first and the second states:
\begin{equation}
\label{CVF constraint 2}
    g(x_{i|t} )=F^k (x_{i|t} )\le0,i=0,1,
\end{equation}
and $x_{i|t},i=2,\dots,n$ are unconstrained. Unlike the former way, here $F^k$ is the CVF of a fixed policy $\pi_k$, which is irrelevant to the policy $\pi$ to be optimized. The dependence of constraint \eqref{CVF constraint 2} on $\pi$ is established through $x_{1|t}$, which is induced by $\pi$. 
Under this constraint, any $x$ satisfying $F^k (x)\le0$ is initially feasible, with $\pi_k$ being an initially feasible policy. Therefore,
$$
\Xinit^g=\{x|F^k (x)\le0\}.
$$
Because of the same reason analyzed above, we still have
$$
\Xedls^g=\{x|F^k (x)\le0\}.
$$
Unfortunately, under constraint \eqref{CVF constraint 2}, the IFR and EFR depend on the policy $\pi_k$ and may be only subsets of the maximum EFR, i.e.,
$$
\Xedls^g=\Xinit^g\subseteq\Xedls^*.
$$
However, by properly choosing $\pi_{k+1}$ under the guide of $F^{\pi_k}(x)$, we can achieve a sequence of monotonically expanding EFRs, which converge to the maximum EFR \citep{yang2023feasible}.

\subsection{Hamilton-Jacobi reachability function}
\label{sec: HJR}
The Hamilton-Jacobi (HJ) reachability analysis, a technique from robust optimal control theory, is used to guarantee constraint satisfaction in a rigorous way. HJ reachability computes a backward reachable set of a system, i.e., the set of states from which trajectories can reach some given target set \citep{mitchell2005time, bansal2017hamilton}. For feasibility analysis, the target set is chosen as the \change{complement of the constrained set}, and the backward reachable set becomes the infeasible region.
HJ reachability function aggregates infinite-step constraints by taking the maximum of them. In other words, it considers the worst-case constraint violation in the entire trajectory or the ``closest distance'' to the constraint boundary if no violation is going to happen. The definition of HJ reachability function is as follows.

\begin{definition}[HJ reachability function]
\label{def: HJR}
For a constrained OCP with $h(x)$ as a constraint function, $x_t,t=0,1,2,\dots,\infty$ donating the state trajectory starting from state $x$ under policy $\pi$, the HJ reachability function $F^\pi: \calX\to\R$ is:
$$F^\pi(x)=\max_t h(x_t).$$
\end{definition}

It follows directly from the above definition that $F^\pi (x)\le0$ is equivalent to $h(x_t)\le0$ for $t=0,1,\dots,\infty$. 
Similar to CVF, HJ reachability function naturally satisfies a risky self-consistency condition:
$$F^\pi (x)=\max\{h(x), F^\pi(x^\prime)\},$$
and the optimal HJ reachability function satisfies a risky Bellman equation:
$$F^* (x)=\max\{h(x), \min_u F^*(x^\prime)\}.$$
In practical HJ reachability computation, a discount factor is introduced to the right-hand side of the above two equations to obtain contraction mappings for fixed point iteration \citep{fisac2019bridging}.

Similar to CVF, there are two ways to construct a virtual-time constraint with HJ reachability function. One is to use a first-step constraint:
\begin{equation}
\label{HJR constraint 1}
    g(x_{0|t})=F^\pi (x_{0|t})\le0,
\end{equation}
and the other is to use a two-step constraint:
\begin{equation}
\label{HJR constraint 2}
    g(x_{i|t})=F^k (x_{i|t})\le0,i=0,1,
\end{equation}
We omit the analysis of their corresponding feasible regions and containment relationships since it is similar to the case of CVF.

\subsection{Constraint decay function}
Constraint decay function (CDF) uses the remaining steps from the current state to the first constraint violation to aggregate constraints \citep{yang2024synthesizing}. Its basic idea is that if a state is infeasible, it will lead to a constraint violation in finite steps. 
CDF is formally defined as follows.

\begin{definition}[Constraint decay function]
\label{def: CDF}
For a constrained OCP, the constrained decay function $F^\pi:\calX\to\R$ is 
$$F^\pi (x)=\gamma^N(x),$$
where $0<\gamma<1$ is the discount factor, and $N$ is the number of steps to the first constraint violation starting from $x$ under $\pi$.
\end{definition}

As $N$ increases from zero to infinity, $F^\pi (x)$ decreases from one to zero. A larger value of CDF indicates that the current state is more dangerous in the sense that it is closer to constraint violation. A value of one means that the current state already violates the constraint, and a value of zero means that the constraint will never be violated. In other words, $F^\pi (x)\le0$ is a necessary and sufficient condition for infinite-horizon constraint satisfaction. Therefore, CDF is a valid CA-based feasibility function. CDF is similar to a notion called safety critic in safe RL literature \citep{thananjeyan2021recovery}. Their difference is that they are used in different kinds of systems and for different purposes. The safety critic is used in stochastic systems for estimating the probability of future constraint violation. In contrast, CDF is used in deterministic systems as a feasibility function for constructing virtual-time constraints and representing feasible regions.

Similar to other CA-based feasibility functions, CDF naturally satisfies a risky self-consistency condition:
$$F^\pi (x)=c(x)+(1-c(x))\gamma F^\pi (x^\prime),$$
and the optimal CDF satisfies a risky Bellman equation:
$$F^* (x)=c(x)+(1-c(x))\gamma\min_u F^*(x^\prime).$$
There are also two ways to construct virtual-time constraints with CDF. The first is a single-step one:
\begin{equation}
\label{CDF constraint 1}
    g(x_{0|t})=F^\pi (x_{0|t})\le0,
\end{equation}
and the second is a two-step one:
\begin{equation}
\label{CDF constraint 2}
    g(x_{i|t})=F^k (x_{i|t})\le0,i=0,1.
\end{equation}
Their corresponding feasible regions and containment relationships are similar to other CA-based feasibility functions and are thus omitted here.

\chapter{Review of Constraint Formulations}
\label{constraint review}
In this chapter, we review several commonly used virtual-time constraint formulations. In essence, they are applications of different feasibility functions.
\change{Existing surveys have extensively covered specific types in isolation, such as the theory and applications of CBFs \citep{ames2019control} or HJ reachability analysis \citep{bansal2017hamilton, ganai2024hamilton}. Meanwhile, other reviews on safe RL \citep{garcia2015comprehensive, gu2024review} and safe control \citep{brunke2022safe} discuss safety constraints from a broader, algorithmic perspective, often without dissecting the unified mathematical essence of the underlying feasibility functions. In contrast, we provide a synthesizing framework that moves beyond these isolated or high-level discussions. We categorize the constraint formulations according to their core feasibility function (CIS or CA), demonstrating how they all conform to the same general mathematical structure and feasibility analysis presented in our unified framework.}

\section{Control invariant set}
\subsection{Control barrier function}
The concept of CBF originated from the field of control theory and was first proposed by \citet{ames2014control} for continuous-time systems. It was later extended to discrete-time systems by \citet{agrawal2017discrete}. 
A large category of works solves OCPs with CBF constraints through online optimization, i.e., a single optimal action is computed each time the system arrives at a state.
\change{These works formulate the virtual-time OCP online at each step, taking the current state as the initial state, and constraining the action to ensure subsequent states remain in the invariant set of the CBF.}
\citet{ames2016control} unify CBF with control Lyapunov function (CLF) in the context of quadratic program (QP), in which performance objective is expressed by CLF and safety constraint is expressed CBF. The optimal control is obtained by solving the QP online and is ensured to keep the state in the invariant set of CBF. \citet{nguyen2016exponential} propose exponential CBFs that enforce strict satisfaction of high relative degree safety constraints for nonlinear systems. They also develop a method for designing exponential CBFs based on techniques from linear control theory.
\change{Another method to deal with high relative degree is HOCBF \citep{xiao2019control, xiao2021high}, which takes high-order time derivatives of the constraint function so that where the action explicitly appears in the CBF.
To address the over-conservativeness of CBFs, \citet{xiao2023barriernet} replace the set of hard constraints HOCBFs with a set of differentiable constraints. This is achieved by a barrier layer that can be combined with any neural network controller.}
\citet{taylor2020learning} use a learning-based method to reduce model uncertainty in order to enhance the safety of a CBF-cerified controller. Their approach iteratively collects data and updates the controller, ultimately achieving safe behavior.

Other works solve CBF-constrained OCPs in an offline manner using RL, i.e., they learn a policy that maps states to their corresponding optimal actions.
\change{These works integrate CBF constraints into the policy optimization of the RL training process, ensuring that the learned policy renders the zero-sublevel set of the CBF forward invariant.}
\citet{cheng2019end} ensure safety of a model-free RL controller by combining it with a model-based CBF controller and online learning of unknown system dynamics. The CBF controller both guarantees safety and guides the learning process by constraining the set of explorable policies. \citet{ohnishi2019barrier} propose a barrier-certified adaptive RL algorithm, which constrains policy in the invariant set of CBF and optimizes the action-value function in this set. Their solutions to barrier-certified policy optimization are guaranteed to be globally optimal under mild conditions. \citet{ma2021model} use a generalized CBF for systems with high relative degrees and mitigate the infeasibility of constrained policy optimization by an adaptive coefficient mechanism.
\change{\citet{marvi2021safe} augment the cost function with a CBF and solve the resulting problem using a model-free RL algorithm, yielding a controller with the ability of proactive safety planning.
\citet{emam2022safe} frame safety as a differentiable robust CBF layer in a model-based RL framework, within which the underlying reward-driven task is modularly learned independent of safety constraints.}

Most of the above works handcraft CBFs as functions with known forms, while some other works synthesize CBFs using neural networks. \citet{robey2020learning} leverage safe trajectories generated by an expert to optimize a CBF in control affine systems. The learned CBF enjoys provable safety guarantees under Lipschitz smoothness assumptions on system dynamics. \citet{qin2020learning} jointly learn multi-agent control policies and CBFs in a decentralized framework. They propose a spontaneous policy refinement method to further enforce CBF conditions during testing. \citet{yang2023model} propose a safe RL algorithm that learns both policy and CBF in a model-free manner. They extend the CBF invariant loss to a multi-step version, which balances bias and variance and enhances both safety and performance of the policy.
\change{To handle control input limits, \citet{liu2023safe} propose a learner-critic architecture, where the critic finds counterexamples of input saturation and the learner optimizes the neural CBF to eliminate them.
To address the conservativeness of learning-based CBFs, \citet{yang2024synthesizing} propose feasible region iteration that learns the maximum feasible region with a CDF to generate accurate feasibility labels for learning the CBF.}
 
\subsection{Safety index}
SI is commonly used as a safeguard for another (possibly unsafe) controller, e.g., an RL controller that solely maximizes reward performance. \citet{zhao2021model} propose a model-free RL algorithm that leverages an SI to ensure zero constraint violation during training. This is achieved by an implicit safe set algorithm, which searches for safe control only by querying a black-box dynamic function. \citet{ma2022joint} simultaneously synthesize an SI and learn a safe control policy with constrained RL. They learn the SI by minimizing the occurrence of energy increases, which does not rely on knowledge about a prior controller.
\change{To deal with systems with varying dynamics, \citet{yun2025safe} propose an SI adaptation method that guarantees forward invariance and finite convergence of the safe control law.
\citet{chen2024real} leverage determinant gradient ascend and derive a closed-form update to SI parameters once the dynamics are perturbed.
\citet{wei2022safe} propose a method for safety control synthesis with neural network dynamic models. The method first synthesizes an SI offline using evolutionary methods, and then solves a constrained optimization problem online based on an encoding of neural networks.}

In systems with control limits, there may be situations where it is impossible to find an action that satisfies the constraint of an SI. Some works focus on how to synthesize a valid SI under control limits. \citet{wei2022persistently} propose a control-limits-aware SI synthesis method for systems with bounded state-dependent uncertainties. They use convex semi-infinite programming to solve for a robust safe controller so that it is guaranteed to be realizable under control limits. \citet{zhao2023safety} propose a method for synthesizing SI in general systems with control limits. They prove that ensuring the existence of safe control on a safe set boundary is equivalent to sum-of-squares programming. \citet{zhao2023probabilistic} present an integrated dynamic model learning and safe control framework to safeguard any RL agent. They provide a design rule to construct an SI under control limits and a probabilistic safety guarantee under stochastic dynamic models.
\change{\citet{chen2024safety} study the problem of synthesizing feasible safety indices under state-dependent control spaces, and leverage Positivstellensatz to formulate the problem as nonlinear programming.}

\section{Constraint aggregation}
\subsection{Cost value function}
Plenty of existing works, mostly concentrating on the setting of CMDP, have applied CVF to handle constraints.
Generally, most works follow the standard notion of CMDPs, where the constraints are imposed directly on CVF itself.
\change{These works typically consider the cumulative constraint, which requires the sum of costs to be less than a given threshold. When the threshold is set to zero, the cumulative constraint is equivalent to requring all costs to be zero, and becomes the statewise constraint considered in this paper.}
\citet{chow2017risk} build a constraint based on the conditional value-at-risk of CVF and propose Lagrange multiplier methods for the constrained optimization problem.
\citet{ding2020natural} also employ the primal-dual approach but update the primal variable via natural policy gradient and the dual variable via projected sub-gradient. This work is extended to an entropy-regularized case in \citep{ying2022dual}.
An equivalent linear formulation of the objective and the CVF-based constraint \citep{altman1999constrained} is considered in \citep{bai2022achieving} and solved by a stochastic primal-dual algorithm.
To dampen the significant oscillation of state and cost value function during training when using Lagrange multiplier methods, \citet{stooke2020responsive} and \citet{peng2022model} use PID control to update the Lagrange multiplier for a stabler intermediate performance.
\citet{as2022constrained} use Bayesian world models to estimate an optimistic upper bound on task objective and pessimistic upper bounds on safety constraints. They use the augmented Lagrangian method to solve the constrained optimization problem based on these two bounds.

Besides Lagrangian-based methods, other approaches have also been proposed for solving OPCs with CVF constraints.
\citet{liu2022constrained} deal with instability issues of primal-dual style methods by introducing the Expectation-Maximization approach. By adopting a non-parametric variational distribution, the constrained optimization problem in the expectation step becomes convex and can be solved analytically.
\citet{liu2020ipo} introduce the interior-point method to augment the objective with logarithmic barrier functions composed with CVF.
\citet{chow2018lyapunovbased} propose a Lyapunov-based approach for transient MDPs which constructs Lyapunov functions w.r.t an undiscounted version of CVF.
\citet{achiam2017constrained} construct constrained optimization problems on the basis of a novel bound on the difference in cumulative rewards or costs between two policies and solve them with trust region methods. 
However, these optimization problems may be infeasible, which undermines the theoretical monotonicity. \citet{yang2020projection} address this by first applying an unconstrained trust region method and then projecting the policy back onto the constrained set.
Likewise, \citet{zhang2020first} also propose a two-stage algorithm that first searches for a solution of a constrained optimization problem in the non-parameterized policy space and then projects it back into the parametric one.
\citet{yu2022towards} propose a two-policy method where a safety editor, serving as an extension of a safety shield, is trained to reconcile possible constraint violation with minimal influence on the objective.
\change{\citet{liu2023constrained} study the offline safe RL problem and propose the constrained decision transformer. They train the policy in an autoregressive manner, making it adaptable to different thresholds of the cumulative cost constraints during deployment.
\citet{zhang2024cvar} consider the constraint of the conditional value-at-risk (CVaR) of cumulative costs, and propose the CVaR-CPO algorithm, which learns a distributional CVF to provide a quantile-based estimation of the CVaR constraint.}

\change{While the CVF is originally proposed for cumulative constraints in CMDP, the connection between CVF and state-wise constraints has been revealed by existing works. \citet{sootla2022saute} propose a framework that converts the cumulative costs constraint into state-wise constraints by augmenting the state with the remaining safety budget, preserving the Markov property.}
\citet{zhao2023state} introduces the framework of maximum MDP, which is an extension of CMDP that constrains the expected maximum state-wise cost along a trajectory. Under this framework, they propose state-wise constrained policy optimization algorithm, which provides guarantees for state-wise constraint satisfaction in expectation.

\subsection{Hamilton-Jacobi reachability function}
HJ reachability analysis computes the backward reachable set of a constrained system, which is its infeasible region, and safeguards controllers from entering this set.
\change{While both HJ reachability function and CVF are CA-type feasibility functions, HJ reachability function can better leverage continuous constraint functions such as distance to obstacles, while CVF are typically used with binary cost (constraint violation) signals. This makes HJ reachability function more suitable for accurately representing feasible regions in safety-critical control tasks with continuous state constraints.}
\citet{seo2019robust} use the reachable set obtained by HJ reachability analysis to safeguard receding horizon planning against unknown bounded disturbances. They approximate the reachable set using ellipsoidal parameterization and plan a robust trajectory that avoids risky regions under disturbance.
The exact computation of HJ reachability function requires solving an HJ partial differential equation (PDE) on a grid discretization of state space, resulting in an exponential computational complexity with respect to system dimension \citep{bansal2017hamilton}. Many efforts have been made to reduce the computational burden of HJ reachability.
\citet{rubies2019classification} approximates the optimal controller of HJ reachability problem in control-affine systems as a sequence of simple binary classifiers, thus avoiding storing a representation of HJ reachability function.
\citet{herbert2021scalable} propose several techniques, including decomposition, warm-starting, and adaptive grids, to speed up the computation of HJ reachability function. Their methods can update safe sets by one or more orders of magnitude faster than prior work.

Other works further accelerate computation by approximating HJ reachability function with neural networks. 
\citet{fisac2019bridging} introduce a time-discounted modification of HJ reachability function, which induces a contraction mapping and enables solving it by a fixed point iteration method. Their obtained reachability function approximates the maximum safe set and the safest policy.
Based on this work, \citet{yu2022reachability} further consider policy performance optimization in safe RL. They use HJ reachability function to construct virtual-time constraints and solve for the optimal safe policy with the Lagrange method.
\citet{bansal2021deepreach} develop a neural PDE solver for high-dimensional reachability problems. The computational requirements of their method do not scale directly with the state dimension but rather with the complexity of the underlying reachable tube.
\change{To ensure exploration safety under model uncertainty, \citet{yu2023safe} propose a distributional reachability certificate (DRC) and its Bellman equation to characterize robust persistently safe states, and build a safe RL framework to resolve constraints required by the DRC and its corresponding shield policy.
\citet{zheng2024safe} use a neural HJ reachability function to identify the maximum feasible region in offline RL. They decouple safety constraint adherence, reward maximization, and offline policy learning into three processes, where the optimal policy is derived in the form of weighted behavior cloning.}

\chapter{Examples of Classic Control Problems}
\label{sec: experiments}
In this chapter, we illustrate the proposed concepts, including feasible regions, containment relationships between them, and feasibility functions with \change{two classic constrained optimal control problems: emergency braking control and unicycle obstacle avoidance.}
\change{The example code used for all experiments in this chapter is available at our repository\footnote{https://github.com/yangyujie-jack/Feasibility-Tutorial}.}

\section{Emergency braking control}
\begin{figure}
    \centering
    \includegraphics[width=0.7\linewidth]{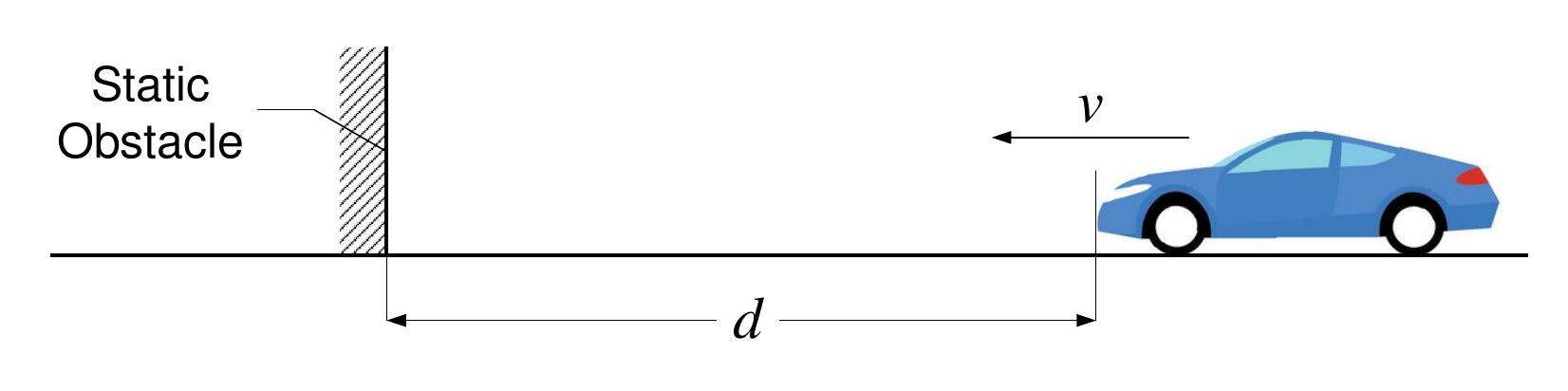}
    \caption{Emergency braking control scenario.}
    \label{fig: emergency braking}
\end{figure}
In emergency braking control, a vehicle is supposed to avoid crash with minimum effort. Figure \ref{fig: emergency braking} gives a schematic of this task.
The vehicle has a two-dimensional state $x_t=[d_t,v_t]^\top$ and a one-dimensional action $u_t=a_t$, where $d$ is the distance to the static obstacle, $v$ is the longitudinal velocity and $a$ is the longitudinal acceleration. It follows a simplified longitudinal dynamics:
\begin{equation}
    \begin{bmatrix}
        d_{t+1} \\
        v_{t+1}
    \end{bmatrix} =
    \begin{bmatrix}
        1 & -\Delta t \\
        0 & 1
    \end{bmatrix}
    \begin{bmatrix}
        d_t \\
        v_t
    \end{bmatrix} +
    \begin{bmatrix}
        0 \\
        \Delta t
    \end{bmatrix} a_t,
\end{equation}
where $\Delta t$ is the time step size, assigned with $0.1\mathrm{s}$. We assume a maximum braking deceleration $a_\mathrm{brk}=-10\mathrm{m/s^2}$ so that the action is bounded in $[a_\mathrm{brk},0]$. The reward function (or oppositely in control terminology, the utility function) is
$$
r(x_t,u_t)=-u_t^2=-a_t^2.
$$
The overall objective to be maximized is
\begin{equation}
\label{eq: objective}
    J=\sum_{i=0}^{\infty} \gamma^i r(x_{t+i},u_{t+i})=-\sum_{i=0}^{\infty} \gamma^i a_{t+i}^2.
\end{equation}
The safety constraint requires that the vehicle-to-obstacle distance be above zero at all times, requiring collision-free control:
\begin{equation}
    h(x_{t+i})=-d_{t+i}\le0, i=0,1,2,\dots,\infty.
\end{equation}
The objective \eqref{eq: objective} follows a general RL setting. To show the applicability of our proposed theoretical framework to both MPC and RL, we also involve an MPC controller in the experiments. The objective of the MPC controller is limited to a finite horizon with length $N=10$ and has no discounting, i.e., $\gamma=1$. \change{The objective for RL is infinite-horizon with $\gamma=0.99$.}

In the following analysis, we build four types of virtual-time constraints: (1) pointwise constraint, (2) CBF constraint, (3) SI constraint, and (4) HJ reachability constraint. Constraint (1) is a non-feasibility-function-based virtual-time constraint, constraints (2) and (3) are CIS-based, and constraint (4) is CA-based. 

\noindent \textbf{Pointwise} constraint is composed of a finite number of real-time constraints, i.e.,
\begin{equation}
    h(x_{i|t})=-d_{i|t}\le0, i=0,1,2,\dots,n,
\end{equation}
where $n<\infty$. Since this constraint is not constructed by a feasibility function, its feasible regions are not readily accessible, and its IFR and EFR may not be equivalent. 

\noindent \textbf{CBF} is handcrafted in the following form:
\begin{equation}
\label{eq: CBF}
    B(x_{i|t})=-d_{i|t}+kv_{i|t}^2,
\end{equation}
where $k$ is a tunable parameter. This form is obtained by considering laws of kinematics. Specifically, when braking with a constant acceleration, the distance-to-go is a quadratic function of the current velocity. This form can also be understood from the perspective of HOCBF. If directly choosing the constraint function $h$ as the CBF, its relative degree is two and we should use an HOCBF that considers the first order time derivative of $h$, which is exactly what \eqref{eq: CBF} does.
The corresponding CBF constraint applies to the first two steps in virtual-time domain:
\begin{equation}
\label{eq: CBF constraint}
\begin{aligned}
    &B(x_{0|t})\le0, \\
    &B(x_{1|t})-(1-\alpha)B(x_{0|t})\le0, \\
\end{aligned}
\end{equation}
where $\alpha$ is a constant assigned with $0.1$. Strictly speaking, the choice of $\alpha$ is not arbitrary when the action is constrained. It should be chosen such that property 2) in Definition \ref{def: CBF} is satisfied. Otherwise, $B$ may not be a valid CBF and its zero-sublevel set may not be control invariant. Here, we choose the value of $\alpha$ empirically without verifying property 2).

\noindent \textbf{SI} follows the form used by \citet{zhao2021model}:
\begin{equation}
    \phi(x_{i|t})=\sigma+d_\mathrm{min}^n-d_{i|t}^n+kv_{i|t},
\end{equation}
where $d_\mathrm{min}$ is the minimum allowable distance to obstacle and $\sigma,n,k$ are tunable parameters. In our experiments, we fix $d_\mathrm{min}=0, \sigma=0.12$ and adjust the values of $n,k$ to see their effects. $\sigma$ is set slightly larger than zero to avoid small amounts of constraint violation caused by numerical issues in optimization. The SI constraint applies to the first two steps in virtual-time domain:
\begin{equation}
\begin{aligned}
    &\phi(x_{0|t})\le0, \\
    &\phi(x_{1|t})-\max\{\phi(x)-\eta,0\}\le0,
\end{aligned}
\end{equation}
where $\eta$ is assigned with $0$. \citet{zhao2021model} point out that for collision-avoidance tasks with action limits, the following design rule ensures forward invariance of the safe set:
\begin{equation}
\label{eq: SI design rule}
\frac{n(\sigma+d_\mathrm{min}^n+kv_\mathrm{max})^{\frac{n-1}{n}}}{k}\le-\frac{a_\mathrm{brk}}{v_\mathrm{max}},
\end{equation}
where $v_\mathrm{max}$ is the maximum velocity.

\noindent \textbf{HJ reachability} function \change{is intrinsically associated with a policy, which is typically the optimal one derived from dynamic programming, but can also be a pre-specified policy}. Here, we consider a safety-oriented policy that takes the maximum braking deceleration at every time step. Its corresponding HJ reachability function is the negative minimum future vehicle-to-obstacle distance starting from the current state:
\begin{equation}
\label{eq: HJR}
    F(x_{i|t})=-d_{i|t}-\frac{v_{i|t}^2}{2a_\mathrm{brk}}.
\end{equation}
Its corresponding constraint also applies to the first two steps in virtual-time domain:
\begin{equation}
    F(x_{i|t})\le0,i=0,1.
\end{equation}
Note that the CBF \eqref{eq: CBF} is very similar in form to the HJ reachability function \eqref{eq: HJR}. In fact, they are equivalent when taking $k=-1/2\cdot a_\mathrm{brk}$ in \eqref{eq: CBF}. However, the equivalence of the two feasibility functions does not imply the equivalence of their virtual-time constraints. \change{
Specifically, the virtual-time constraint constructed by the CBF is more restrictive than that constructed by the HJ reachability function, since the second constraint in the CBF case requires $B(x_{1|t})\le (1-\alpha)B(x_{0|t})$ where $(1-\alpha)B(x_{0|t})$ may be a negative number, while that in the HJ reachability case only requires $F(x_{1|t})\le 0$.
}

We solve MPC and RL controllers under these four virtual-time constraints, resulting in eight combinations. For MPC, we use IPOPT to solve the constrained OCPs and compute the optimal actions online. \change{For RL, we adopt the approximate dynamic programming \citep{li2023reinforcement} algorithm, which trains a policy network and a value network.
Both networks are multilayer perceptrons with two hidden layers of 64 neurons each and ReLU activation functions.
The value network is updated by minimizing the mean squared error between its prediction and the target computed by one-step look-ahead. The loss function is
\begin{equation}
    \mathcal{L}_V(\phi) = \mathbb{E}\left[\left(V_{\phi}(x_t)-\left(r(x_t,u_t)+\gamma V_{\phi}(x_{t+1})\right)\right)^2\right],
\end{equation}
where $\phi$ are the parameters of the value network.
The policy network is updated by feasible policy improvement \citep{yang2023feasible}, which splits the state space into feasible region and infeasible region, and applies different update rules in these two regions.
Inside the feasible region, we maximize the value function:
\begin{equation}
    \mathcal{L}_{\pi,\text{in}}(\theta) = -\mathbb{E}\left[\mathbb{I}_{\Xedlsg}(x_t)\cdot (r(x_t,u_t)+\gamma V_{\phi}(x_{t+1}))\right],
\end{equation}
where $\theta$ are the parameters of the policy network, and $\mathbb{I}_{\Xedlsg}$ is the indicator function of the feasible region.
Outside the feasible region, we minimize the feasibility function:
\begin{equation}
    \mathcal{L}_{\pi,\text{out}}(\theta) = \mathbb{E}\left[(1-\mathbb{I}_{\Xedlsg}(x_t))\cdot F(x_{t+1})\right],
\end{equation}
where $F$ is the feasibility function used in the virtual-time constraint.
The total loss for policy network is
\begin{equation}
    \mathcal{L}_\pi(\theta) = \mathcal{L}_{\pi,\text{in}}(\theta) + \mathcal{L}_{\pi,\text{out}}(\theta).
\end{equation}
It is proved \citep{yang2023feasible} that this method guarantees monotonic expansion of the feasible region, monotonic improvement of the value function, and convergence to the optimal policy.
The optimizer for both networks is Adam, and the learning rate is $10^{-4}$. A mini-batch of size 256 is sampled for each update.
More implementation details can be found in our source code.}

To demonstrate the feasibility and control performance of different policies under different constraints, we visualize state trajectories and feasible regions of MPC and RL policies under the four virtual-time constraints. For MPC, we adjust the parameters of each virtual-time constraint to study their impact on the feasibility of the optimal policy. For RL, we fix the parameters and visualize trajectories and regions of intermediate policies at different iterations to study how they change. This demonstrates the capability of our theory to describe the feasibility of not only the optimal policy but also intermediate non-optimal policies during RL training.

Figure \ref{fig: MPC trajectory PW} shows state trajectories of MPC under pointwise constraints with different horizons. Each state trajectory is represented with a solid point followed by a series of hollow points, indicating the initial and successive states, respectively. The hollow point at the end of a trajectory is substituted with a cross if that state violates the real-time constraint indicated by the grey region. The red dashed line is the boundary of the maximum EFR computed analytically by considering the most cautious policy which always brakes with the maximum deceleration. When the constraint horizon $n$ is short, although starting from a state inside the maximum EFR, some trajectories still end up crashing due to the short-sighted control policy under pointwise constraints. This means that the EFR of the optimal policy under pointwise constraints is only a subset of the maximum EFR. As the horizon becomes longer, more states become endlessly feasible under the optimal policy, which coincides with the fact that the maximum EFR actually corresponds to a pointwise constraint with $n=\infty$, as stated in Theorem \ref{thm: maximum EFR equivalence}.

Figure \ref{fig: MPC region PW} shows feasible regions of MPC under pointwise constraints with different horizons. Since MPC can be viewed as an optimal policy, its EFR equals that of the optimal policy, and its IFR equals that of the constrained OCP (unrelated to any policy), i.e., $\Xinitg(\pi^*)=\Xinitg$. The feasible regions are obtained by running an episode from every state to test its feasibility. Results suggest that short-sighted pointwise constraints fail to recognize the inevitable collision at a distance and, hence, over-optimistically treat some states outside the maximum EFR as initially feasible. At the same time, pointwise constraints are weaker than real-time constraints, so the policy may take too aggressive actions early in the episode, making the EFR of the optimal policy smaller than the maximum EFR. The results also show that as the constraint horizon becomes longer, the EFR of the optimal policy enlarges while the IFR shrinks. When the horizon is long enough, e.g., $n=10$, both regions are almost the same as the maximum EFR. This again supports that with infinite-horizon pointwise constraints, which essentially equal real-time constraints, IFR becomes EFR, and they both equal the maximum EFR.

\begin{figure}[t]
    \centering
    \subfloat{
        \includegraphics[trim=10 20 10 20, width=0.22\linewidth]{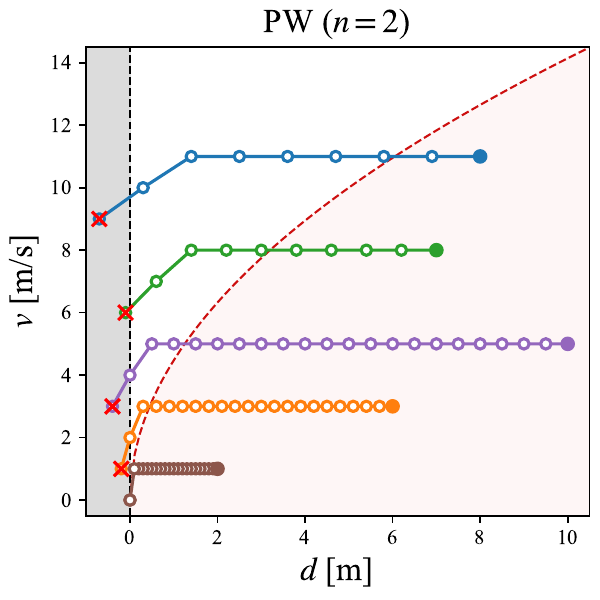}
    }
    \subfloat{
        \includegraphics[trim=10 20 10 20, width=0.22\linewidth]{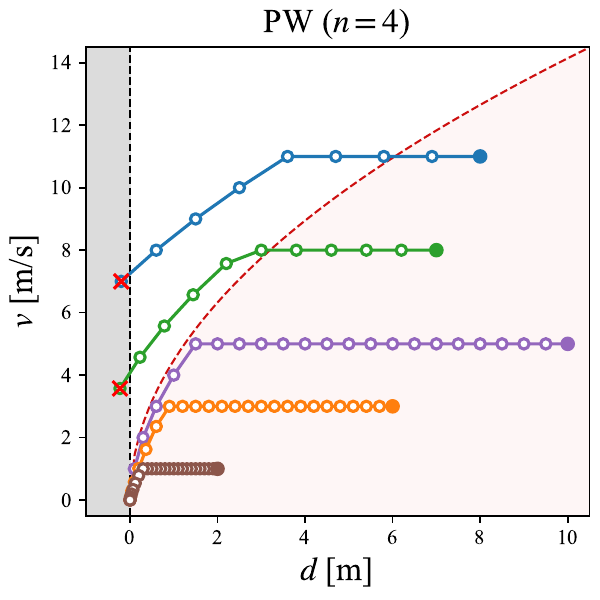}
    }
    \subfloat{
        \includegraphics[trim=10 20 10 20, width=0.22\linewidth]{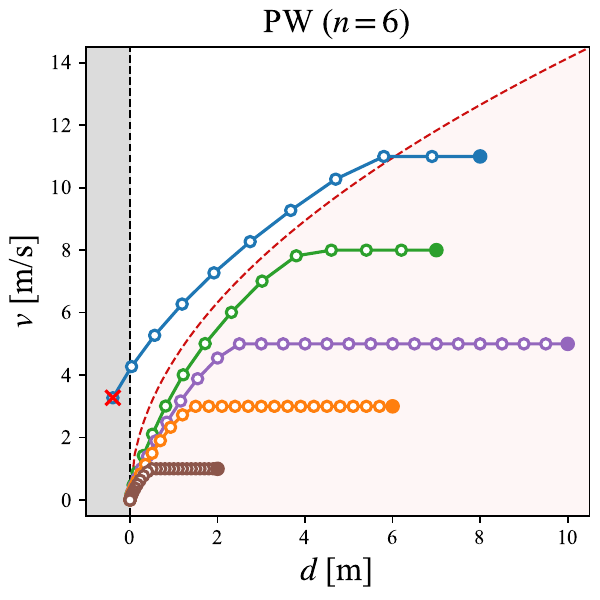}
    }
    \subfloat{
        \includegraphics[trim=10 20 10 20, width=0.22\linewidth]{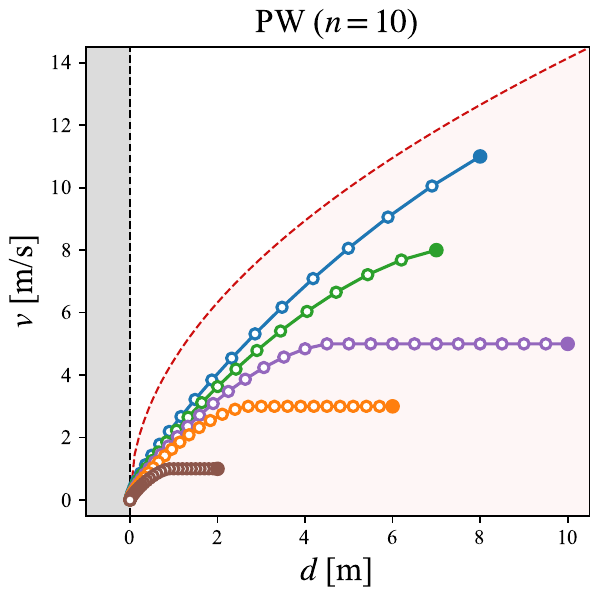}
    }
    \caption{State trajectories of MPC under pointwise constraints with different horizons. ``PW'' stands for ``pointwise''.}
    \label{fig: MPC trajectory PW}
\end{figure}

\begin{figure}[t]
    \centering
    \subfloat{
        \includegraphics[trim=10 20 10 20, width=0.22\linewidth]{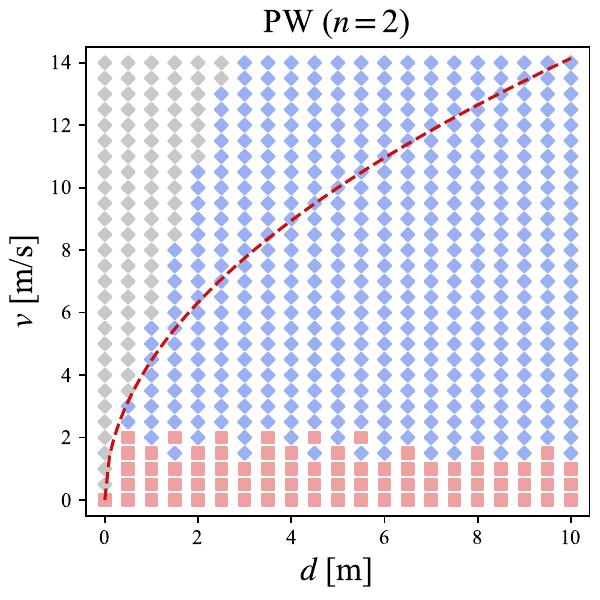}
    }
    \subfloat{
        \includegraphics[trim=10 20 10 20, width=0.22\linewidth]{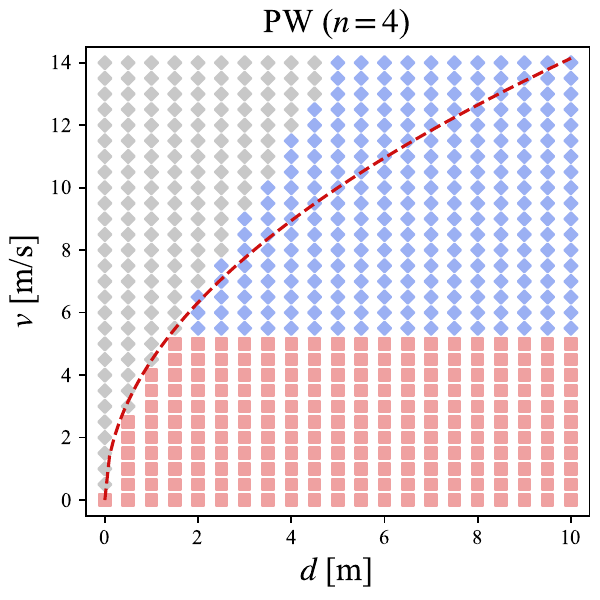}
    }
    \subfloat{
        \includegraphics[trim=10 20 10 20, width=0.22\linewidth]{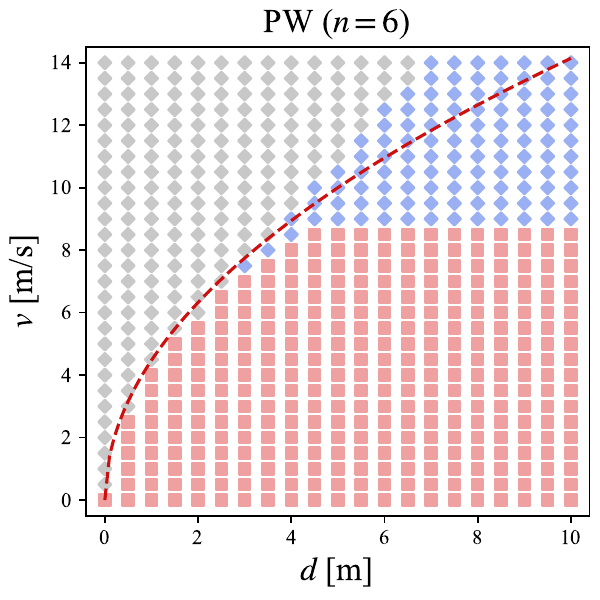}
    }
    \subfloat{
        \includegraphics[trim=10 20 10 20, width=0.22\linewidth]{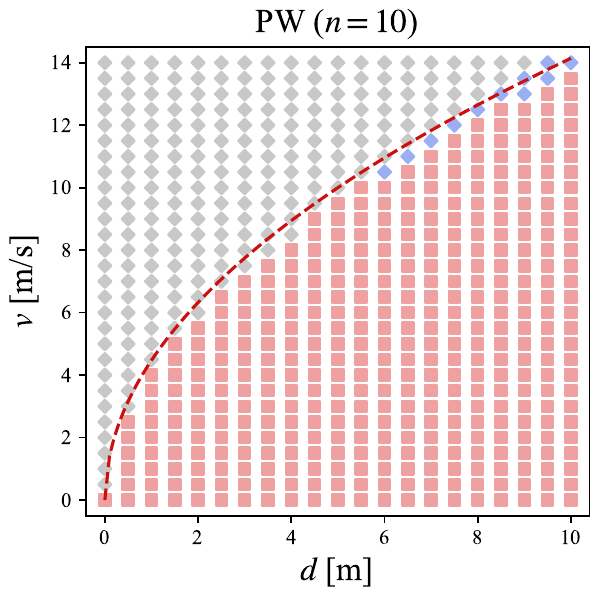}
    }
    \\
    \subfloat{
        \includegraphics[width=0.6\linewidth, trim=130 8 90 188, clip]{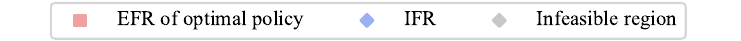}
    }
    \caption{Feasible regions of MPC under pointwise constraints with different horizons.}
    \label{fig: MPC region PW}
\end{figure}

Figure \ref{fig: RL trajectory PW} and \ref{fig: RL region PW} show state trajectories and feasible regions of RL under the same pointwise constraint at different iterations. At iteration 10, the feasible regions of the policy are small, and states with high velocities break out of the maximum EFR, resulting in constraint violations. Note that the EFR is smaller than the IFR because, in some states, constraint violation does not happen in the finite virtual horizon (10 steps) but is inevitable in the long run. Here, this infeasibility phenomenon is mainly due to inadequate policy training. As training proceeds, the feasible regions monotonically expand, and more state trajectories are included in the maximum EFR. At iteration 10000, the feasible regions are close to those of the optimal policy of MPC. These results indicate the effectiveness of feasible policy improvement \citep{yang2023feasible}, whose theoretical foundation is the feasibility analysis tools for intermediate non-optimal policies proposed in this paper.

\begin{figure}
    \centering
    \subfloat{
        \includegraphics[trim=10 20 10 20, width=0.22\linewidth]{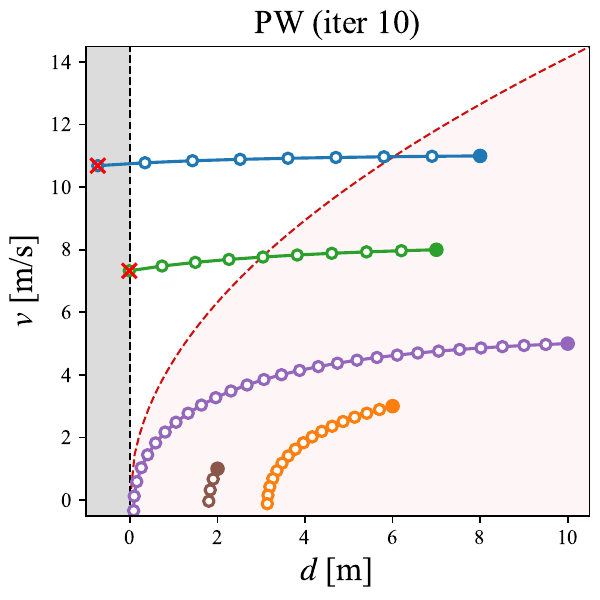}
    }
    \subfloat{
        \includegraphics[trim=10 20 10 20, width=0.22\linewidth]{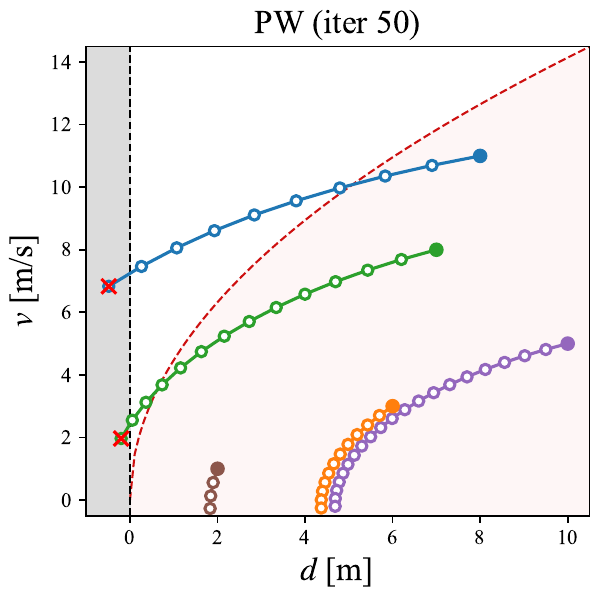}
    }
    \subfloat{
        \includegraphics[trim=10 20 10 20, width=0.22\linewidth]{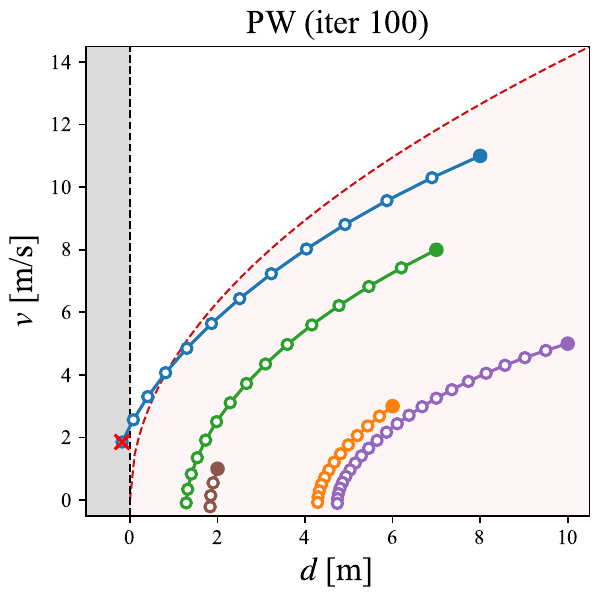}
    }
    \subfloat{
        \includegraphics[trim=10 20 10 20, width=0.22\linewidth]{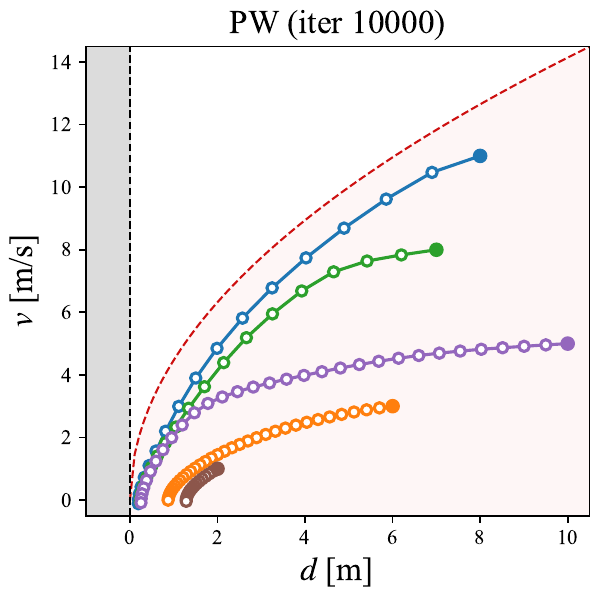}
    }
    \caption{State trajectories of RL under the same pointwise constraint ($n=10$) at different iterations.}
    \label{fig: RL trajectory PW}
\end{figure}

\begin{figure}
    \centering
    \subfloat{
        \includegraphics[trim=10 20 10 20, width=0.22\linewidth]{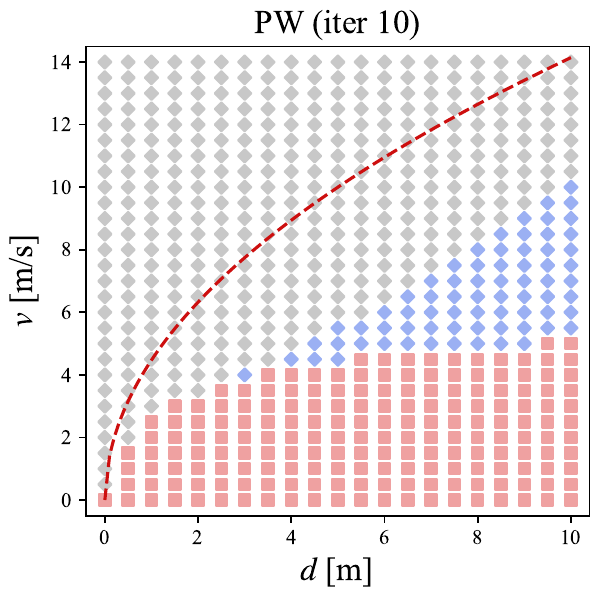}
    }
    \subfloat{
        \includegraphics[trim=10 20 10 20, width=0.22\linewidth]{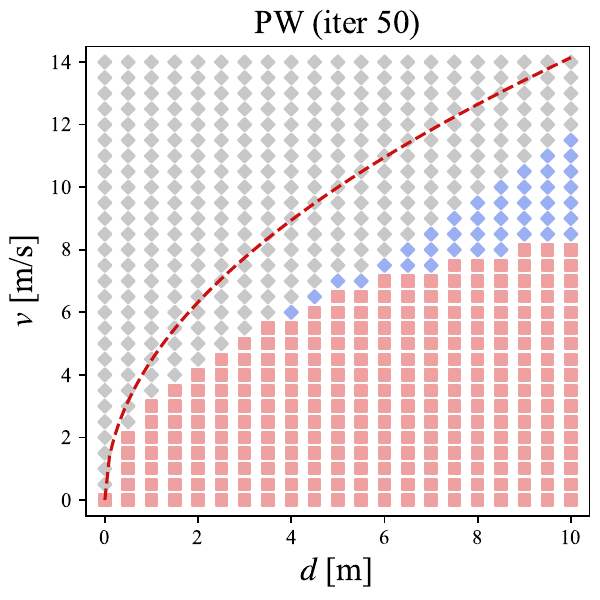}
    }
    \subfloat{
        \includegraphics[trim=10 20 10 20, width=0.22\linewidth]{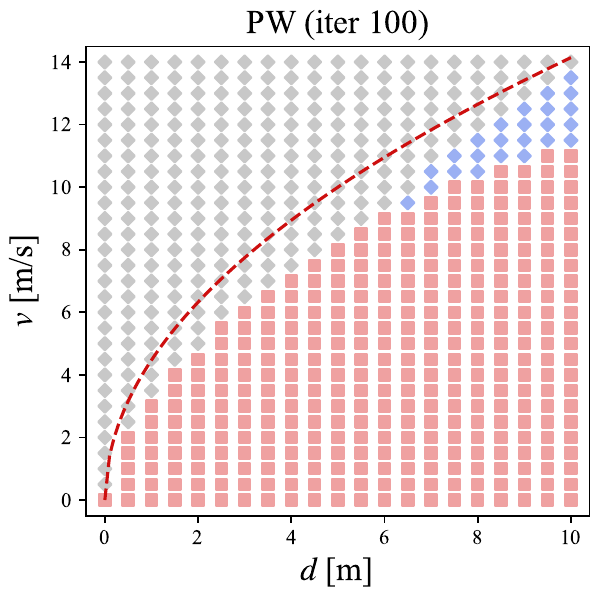}
    }
    \subfloat{
        \includegraphics[trim=10 20 10 20, width=0.22\linewidth]{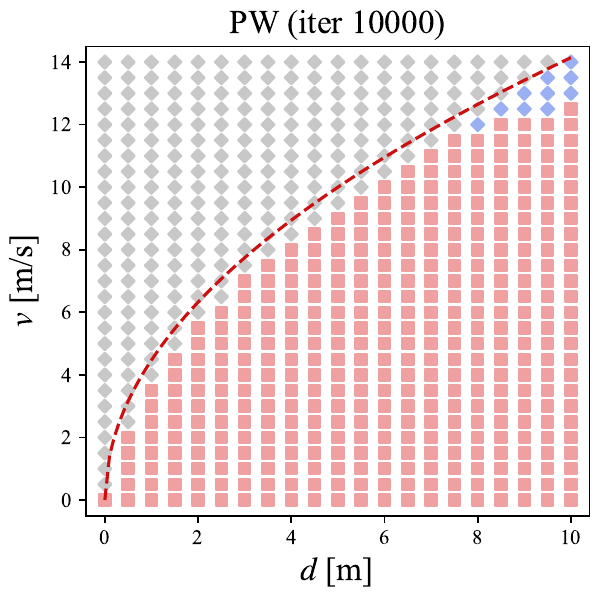}
    }
    \\
    \subfloat{
        \includegraphics[width=0.9\linewidth, trim=30 5 0 185, clip]{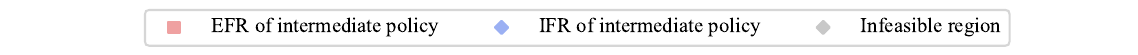}
    }
    \caption{Feasible regions of RL under the same pointwise constraint ($n=10$) at different iterations.}
    \label{fig: RL region PW}
\end{figure}

Figure \ref{fig: MPC trajectory CBF} and \ref{fig: MPC region CBF} show state trajectories and feasible regions of MPC under CBF constraints with different parameters. We observe that the EFR of the optimal policy is always identical to the IFR, which is a common feature of CIS-based virtual-time constraints and is consistent with the analysis in Section \ref{sec: CBF} and Corollary \ref{cor: containment relationship}(3). A smaller value of $k$ results in a larger feasible region, rendering more states safe in the long run. When $k=0.05$, the CBF actually equals the HJ reachability function \eqref{eq: HJR}, and its zero-sublevel set equals the maximum EFR. Another phenomenon is that as the feasible regions expand, restrictions on trajectories inside the feasible regions reduce. This can be seen by comparing the purple or orange trajectory when $k=0.5$ and $k=0.05$, in which the former decelerates immediately while the latter keeps its velocity for some time before decelerating. In other words, trajectories inside the feasible region become less conservative as $k$ decreases.

\begin{figure}
    \centering
    \subfloat{
        \includegraphics[trim=10 20 10 20, width=0.22\linewidth]{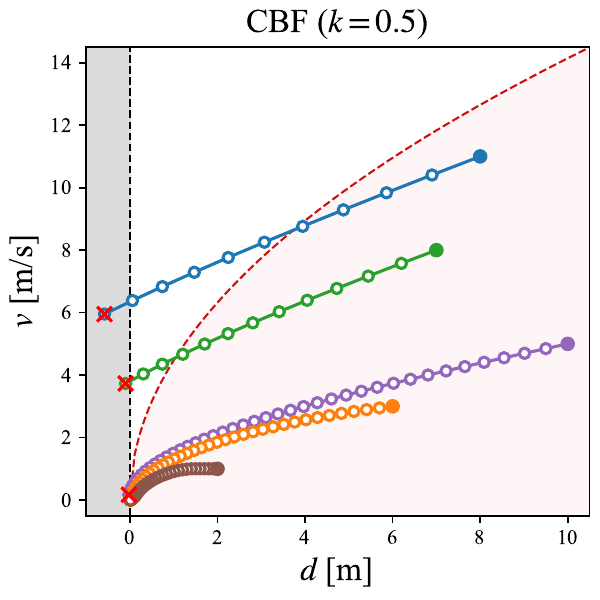}
    }
    \subfloat{
        \includegraphics[trim=10 20 10 20, width=0.22\linewidth]{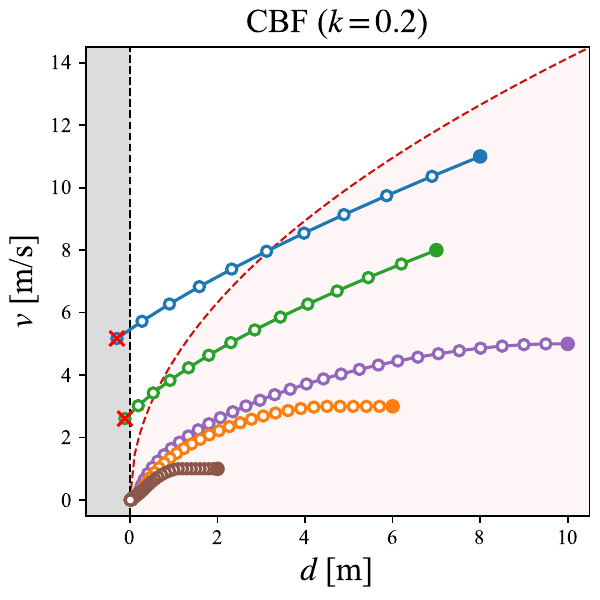}
    }
    \subfloat{
        \includegraphics[trim=10 20 10 20, width=0.22\linewidth]{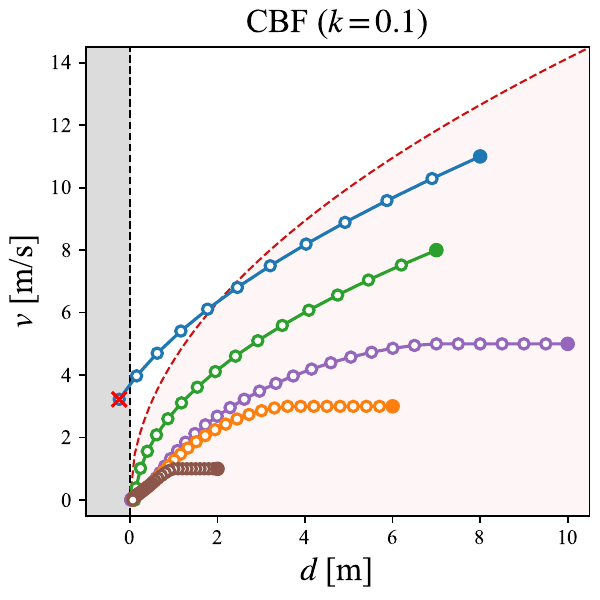}
    }
    \subfloat{
        \includegraphics[trim=10 20 10 20, width=0.22\linewidth]{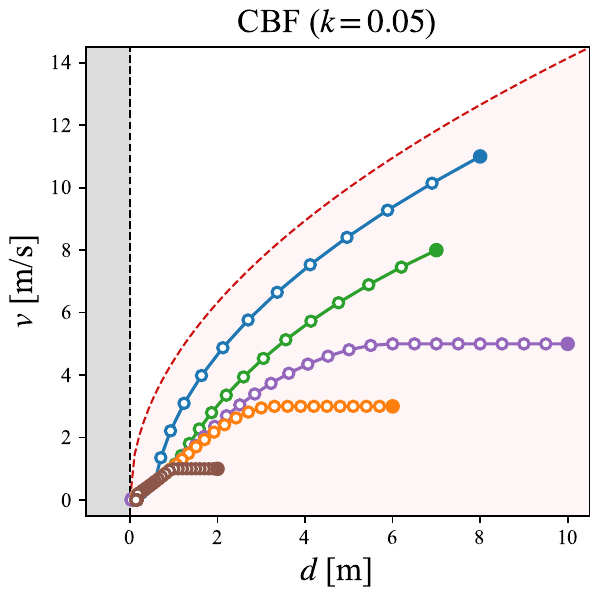}
    }
    \caption{State trajectories of MPC under CBF constraints with different parameters.}
    \label{fig: MPC trajectory CBF}
\end{figure}

\begin{figure}
    \centering
    \subfloat{
        \includegraphics[trim=10 20 10 20, width=0.22\linewidth]{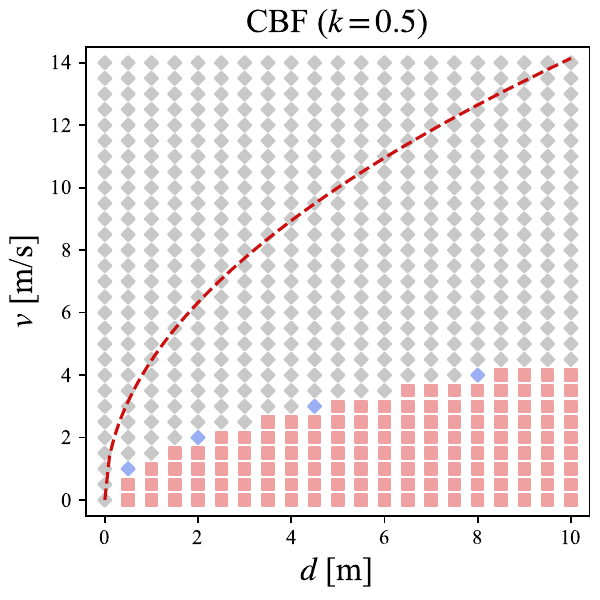}
    }
    \subfloat{
        \includegraphics[trim=10 20 10 20, width=0.22\linewidth]{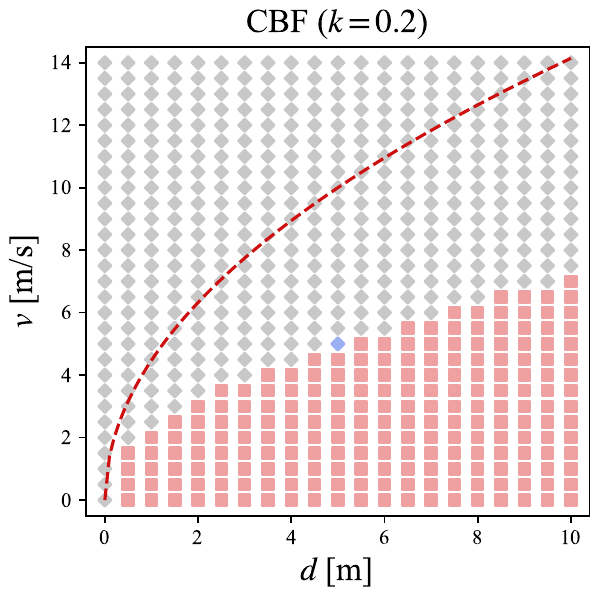}
    }
    \subfloat{
        \includegraphics[trim=10 20 10 20, width=0.22\linewidth]{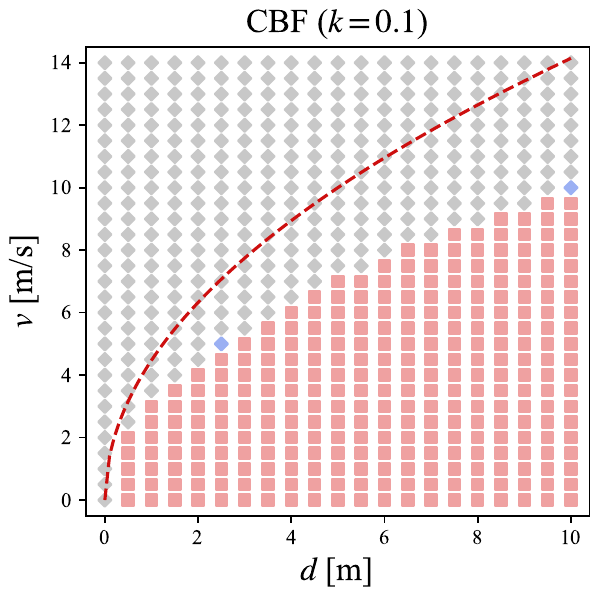}
    }
    \subfloat{
        \includegraphics[trim=10 20 10 20, width=0.22\linewidth]{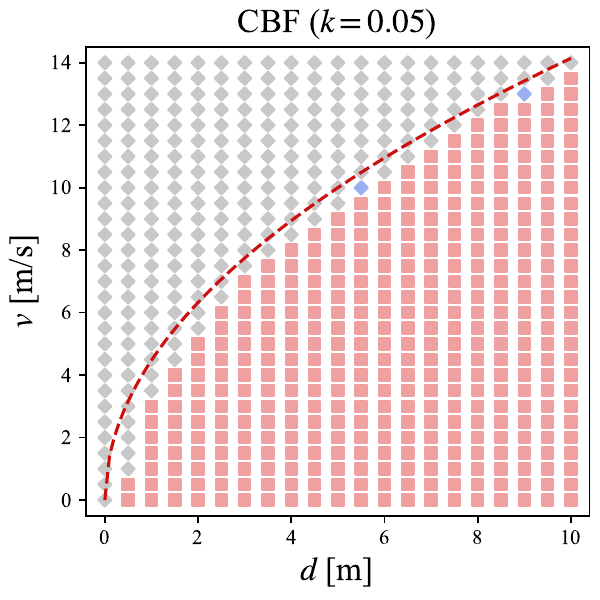}
    }
    \\
    \subfloat{
        \includegraphics[width=0.6\linewidth, trim=130 8 90 188, clip]{Figure/legend_mpc.pdf}
    }
    \caption{Feasible regions of MPC under CBF constraints with different parameters.}
    \label{fig: MPC region CBF}
\end{figure}

Figure \ref{fig: RL trajectory CBF} and \ref{fig: RL region CBF} show state trajectories and feasible regions of RL under the same CBF constraint at different iterations. The feasible regions monotonically expand during training and are close to the maximum EFR at iteration 10000. The state trajectories are either very conservative or infeasible at the beginning and gradually converge to the optimal behavior, i.e., drive the state to the lower left corner. It's worth noting that the EFR and IFR are nearly always the same during training (we believe their mismatch at iteration 10 is due to inadequate training), which is very different from the case of the SI and HJ reachability introduced below. This phenomenon is due to the second constraint in \eqref{eq: CBF constraint}, which not only requires the state to stay in the feasible region but also restricts the increasing rate of CBF.

\begin{figure}
    \centering
    \subfloat{
        \includegraphics[trim=10 20 10 20, width=0.22\linewidth]{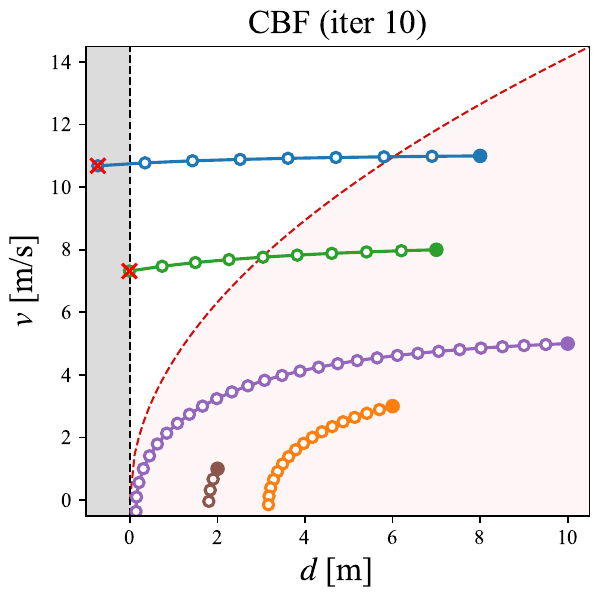}
    }
    \subfloat{
        \includegraphics[trim=10 20 10 20, width=0.22\linewidth]{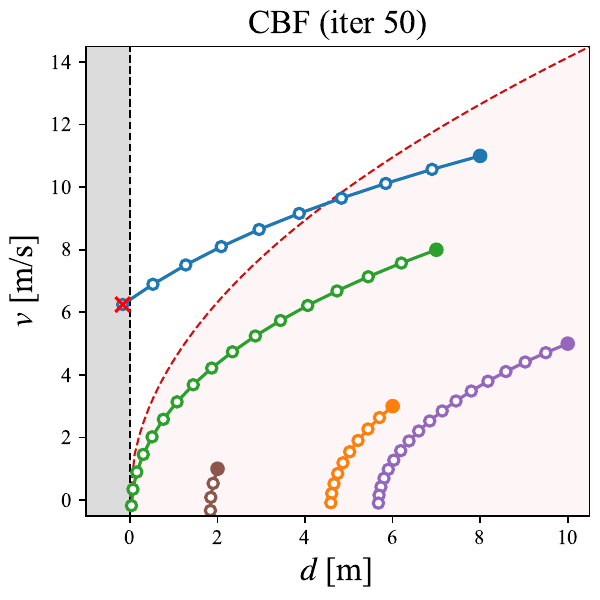}
    }
    \subfloat{
        \includegraphics[trim=10 20 10 20, width=0.22\linewidth]{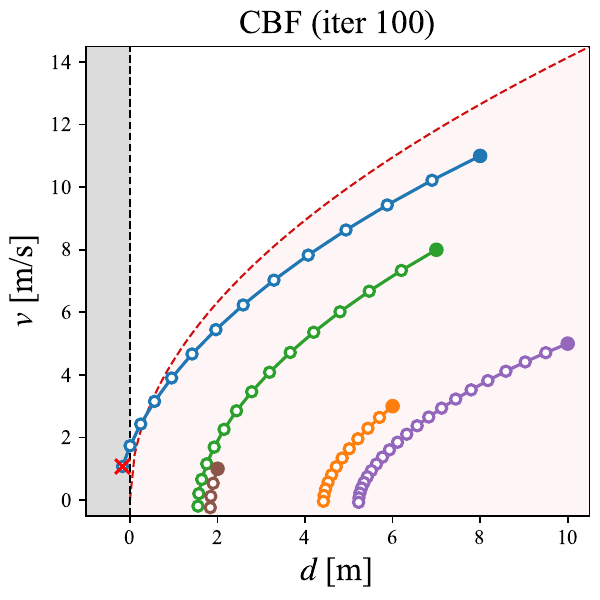}
    }
    \subfloat{
        \includegraphics[trim=10 20 10 20, width=0.22\linewidth]{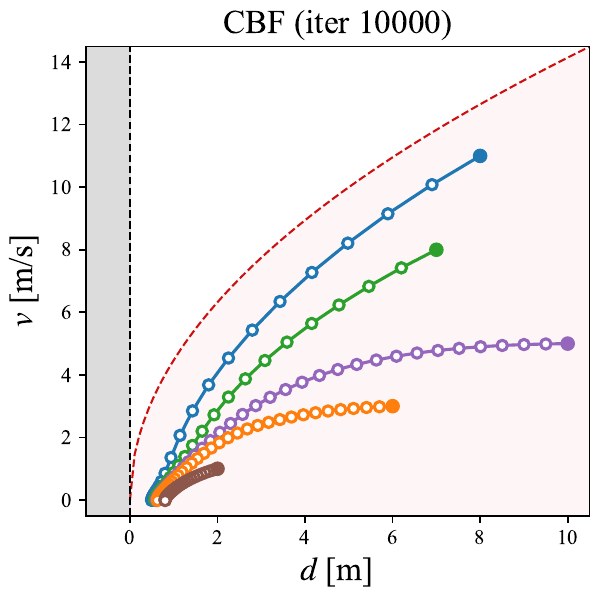}
    }
    \caption{State trajectories of RL under the same CBF constraint ($k=0.05$) at different iterations.}
    \label{fig: RL trajectory CBF}
\end{figure}

\begin{figure}
    \centering
    \subfloat{
        \includegraphics[trim=10 20 10 20, width=0.22\linewidth]{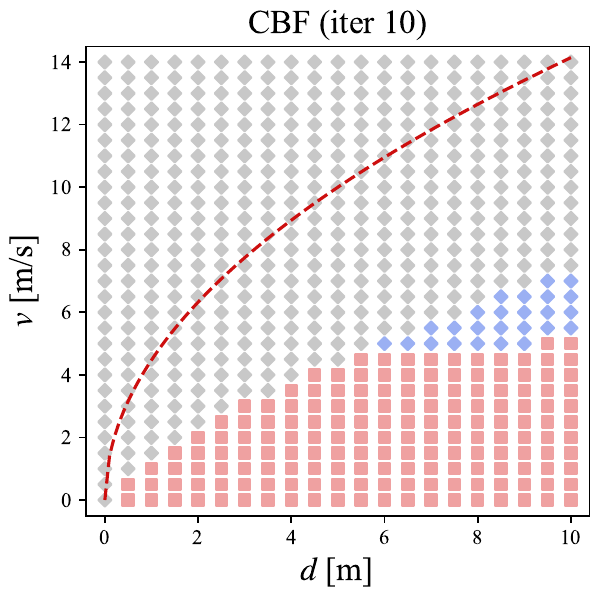}
    }
    \subfloat{
        \includegraphics[trim=10 20 10 20, width=0.22\linewidth]{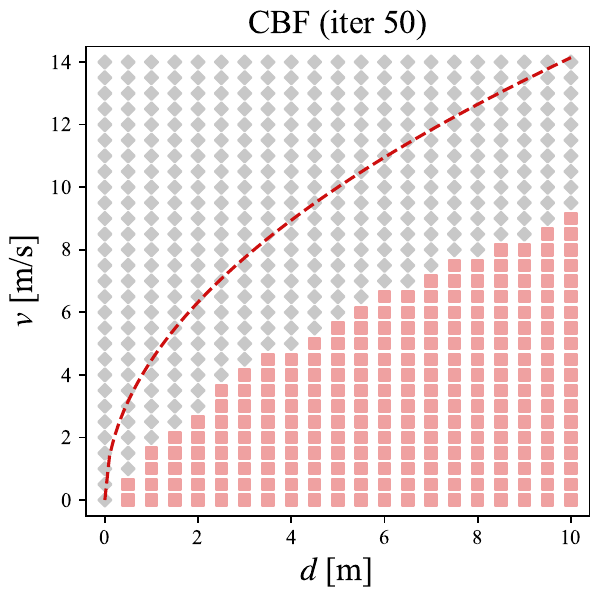}
    }
    \subfloat{
        \includegraphics[trim=10 20 10 20, width=0.22\linewidth]{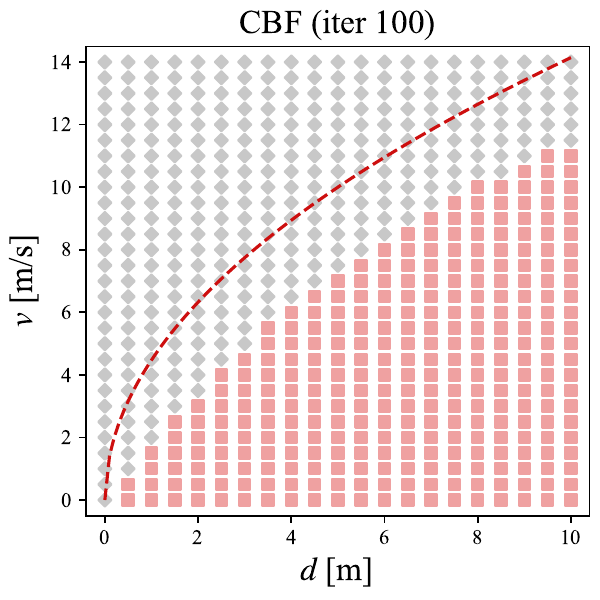}
    }
    \subfloat{
        \includegraphics[trim=10 20 10 20, width=0.22\linewidth]{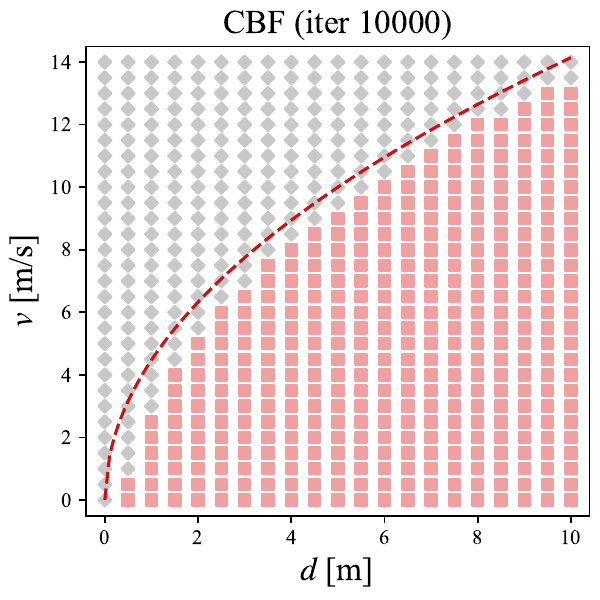}
    }
    \\
    \subfloat{
        \includegraphics[width=0.9\linewidth, trim=30 5 0 185, clip]{Figure/legend_rl.pdf}
    }
    \caption{Feasible regions of RL under the same CBF constraint ($k=0.05$) at different iterations.}
    \label{fig: RL region CBF}
\end{figure}

Figure \ref{fig: MPC trajectory SI} and \ref{fig: MPC region SI} show state trajectories and feasible regions of MPC under SI constraints with different parameters. The feasible regions' boundary shape accurately reflects the SI's function form. When $n=2$, the boundary is a quadratic function with respect to $d$. It becomes a linear function when $n=1$ and a square root function when $n=0.5$. When $n=0.5, k=0.23$, the zero-sublevel set of the SI equals the maximum EFR. Note that when $n=2,k=5$, the EFR is smaller than the IFR because with these parameters, the design rule \eqref{eq: SI design rule} is violated, and thus, forward invariance of the safe set is not guaranteed. Comparing Figure \ref{fig: MPC trajectory SI} with Figure \ref{fig: MPC trajectory CBF}, we observe a difference between SI and CBF: under the SI constraint, the policy does not decelerate until it reaches the boundary of EFR while under the CBF constraint, the policy starts to decelerate before coming close the boundary. This is because of the difference in their constraint design: the SI only requires the next state to stay in the EFR, while CBF further restricts its value-increasing rate.

\begin{figure}
    \centering
    \subfloat{
        \includegraphics[trim=10 20 10 20, width=0.22\linewidth]{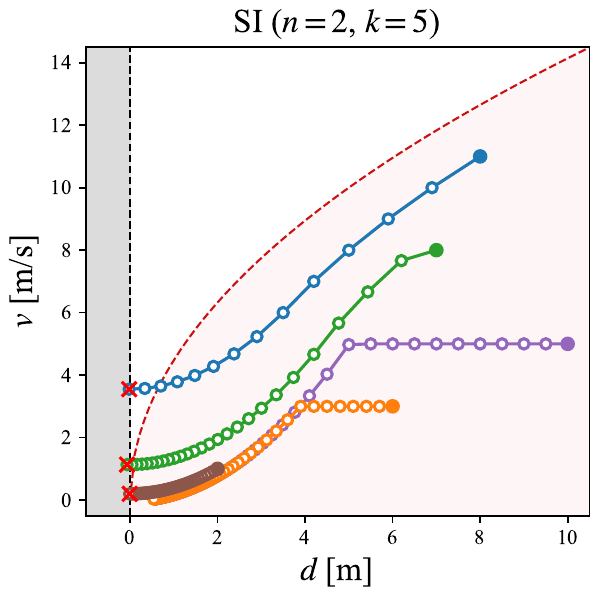}
    }
    \subfloat{
        \includegraphics[trim=10 20 10 20, width=0.22\linewidth]{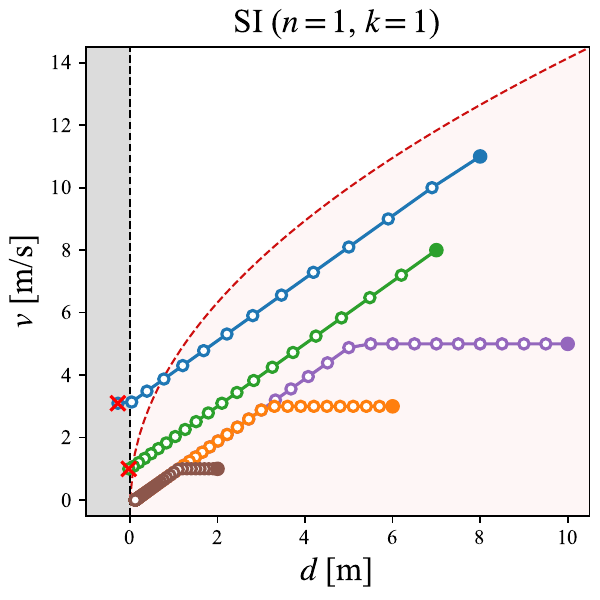}
    }
    \subfloat{
        \includegraphics[trim=10 20 10 20, width=0.22\linewidth]{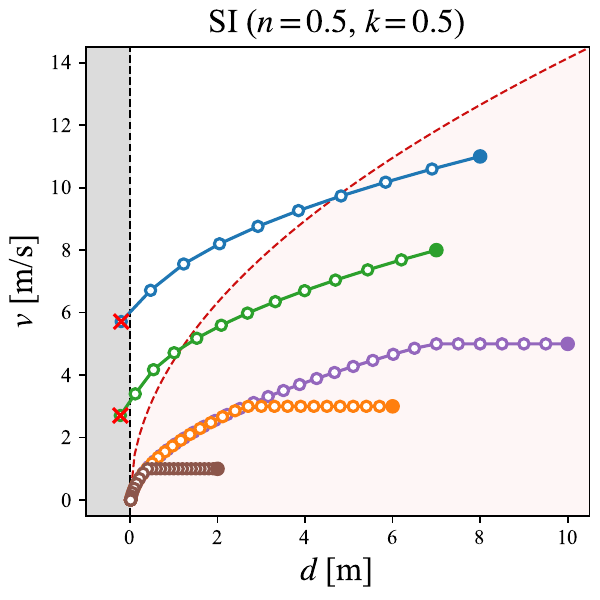}
    }
    \subfloat{
        \includegraphics[trim=10 20 10 20, width=0.22\linewidth]{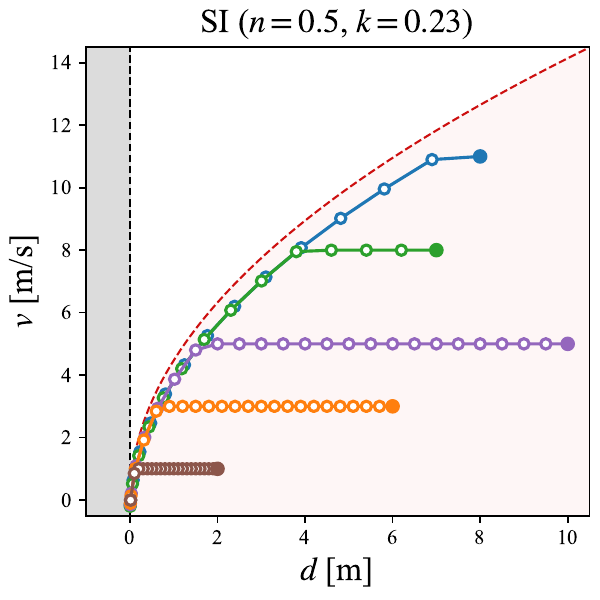}
    }
    \caption{State trajectories of MPC under SI constraints with different parameters.}
    \label{fig: MPC trajectory SI}
\end{figure}

\begin{figure}
    \centering
    \subfloat{
        \includegraphics[trim=10 20 10 20, width=0.22\linewidth]{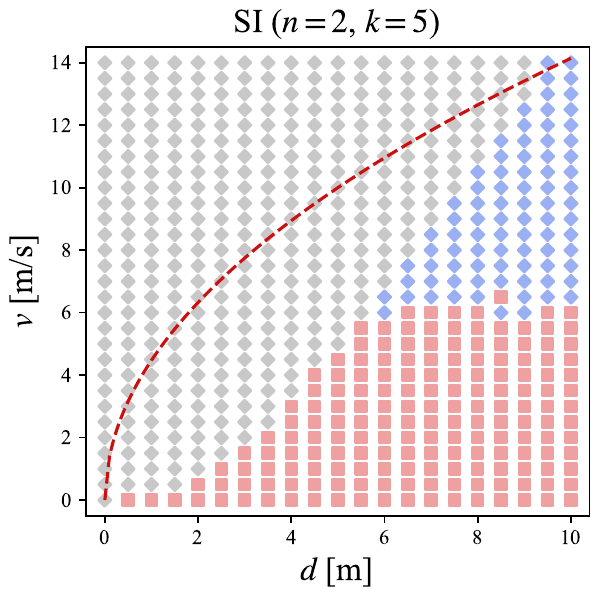}
    }
    \subfloat{
        \includegraphics[trim=10 20 10 20, width=0.22\linewidth]{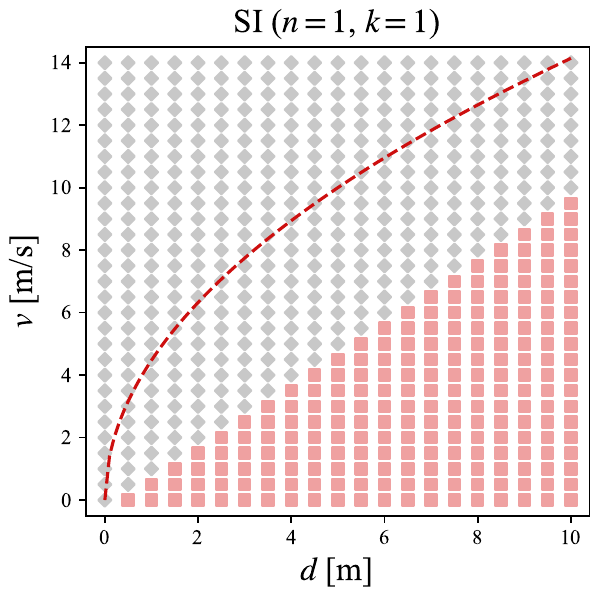}
    }
    \subfloat{
        \includegraphics[trim=10 20 10 20, width=0.22\linewidth]{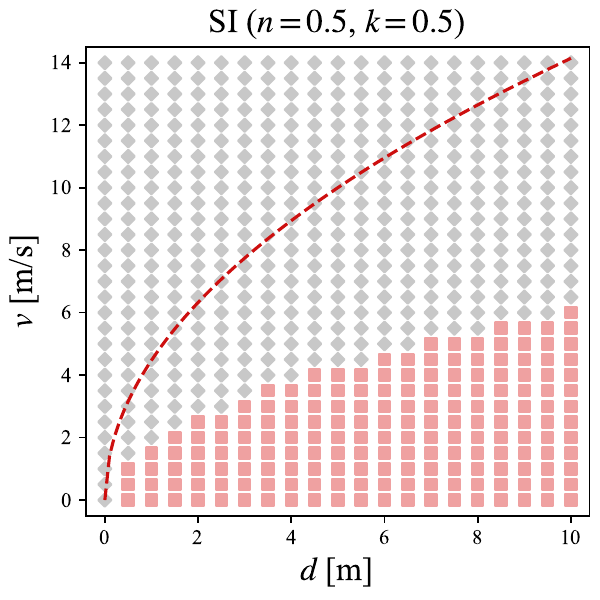}
    }
    \subfloat{
        \includegraphics[trim=10 20 10 20, width=0.22\linewidth]{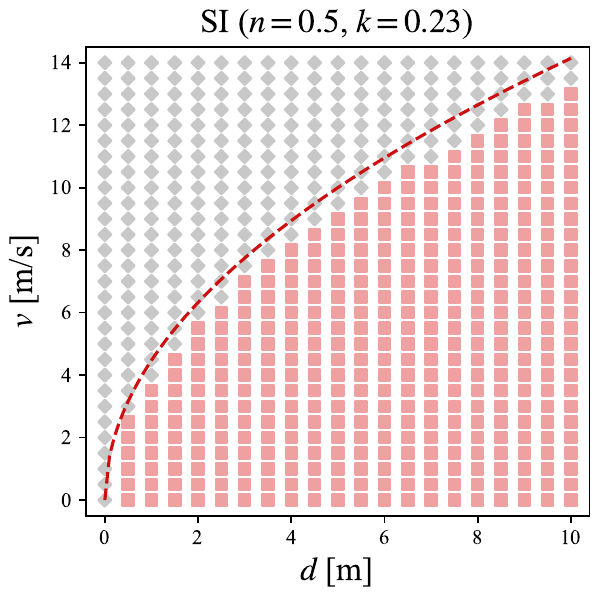}
    }
    \\
    \subfloat{
        \includegraphics[width=0.6\linewidth, trim=130 8 90 188, clip]{Figure/legend_mpc.pdf}
    }
    \caption{Feasible regions of MPC under SI constraints with different parameters.}
    \label{fig: MPC region SI}
\end{figure}

Figure \ref{fig: RL trajectory SI} and \ref{fig: RL region SI} show state trajectories and feasible regions of RL under the same SI constraint at different iterations. The EFR monotonically expands and becomes close to the maximum EFR at iteration 10000, while the state trajectories also gradually converge to those of the optimal policy. We observe that at an early stage of training, the IFR under SI constraint is already quite large, much larger than the EFR. This is different from the case of CBF constraint shown in Figure \ref{fig: RL region CBF}, where EFRs are close to IFRs. The reason is that the SI constraint only requires the next state to be in the safe set without restricting its value-increasing rate as CBF does. At iteration 10 in Figure \ref{fig: RL trajectory SI}, the states with high velocities are initially feasible as long as they do not leave the safe set in one step. Accordingly, states close to the safe set boundary are initially infeasible because their next states will leave the safe set.

\begin{figure}
    \centering
    \subfloat{
        \includegraphics[trim=10 20 10 20, width=0.22\linewidth]{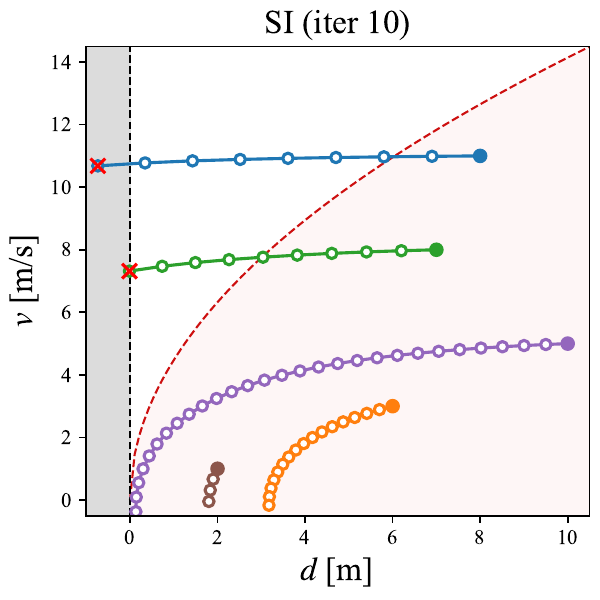}
    }
    \subfloat{
        \includegraphics[trim=10 20 10 20, width=0.22\linewidth]{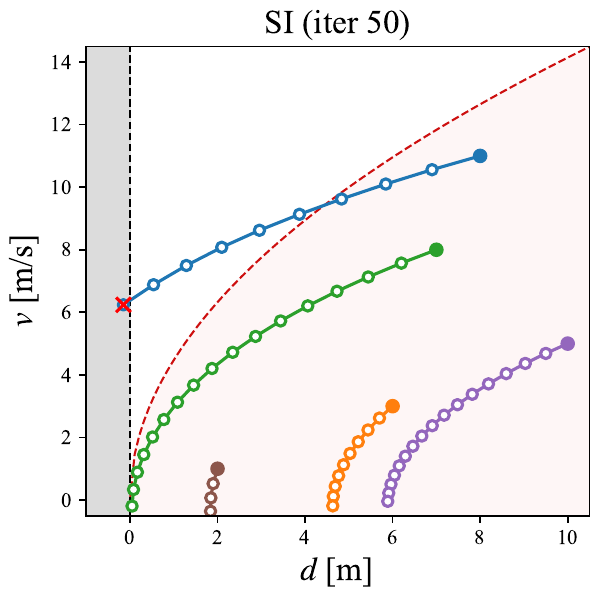}
    }
    \subfloat{
        \includegraphics[trim=10 20 10 20, width=0.22\linewidth]{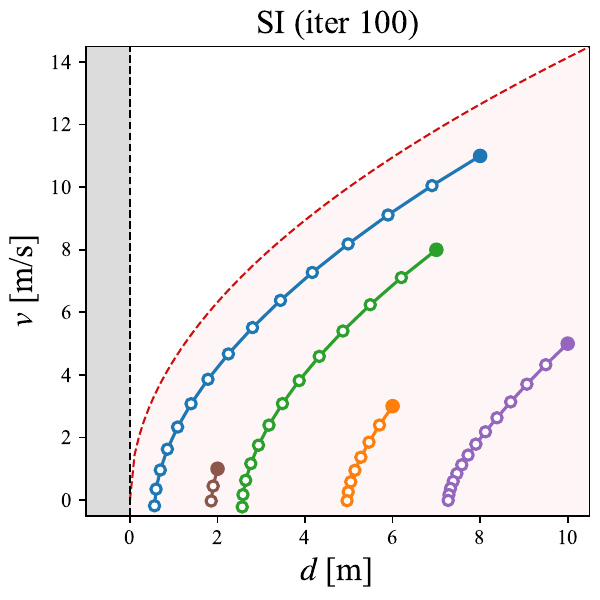}
    }
    \subfloat{
        \includegraphics[trim=10 20 10 20, width=0.22\linewidth]{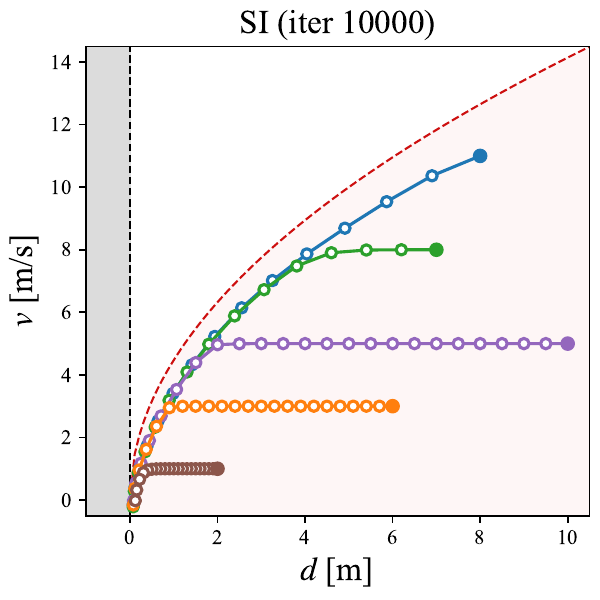}
    }
    \caption{State trajectories of RL under the same SI constraint ($n=0.5, k=0.23$) at different iterations.}
    \label{fig: RL trajectory SI}
\end{figure}

\begin{figure}
    \centering
    \subfloat{
        \includegraphics[trim=10 20 10 20, width=0.22\linewidth]{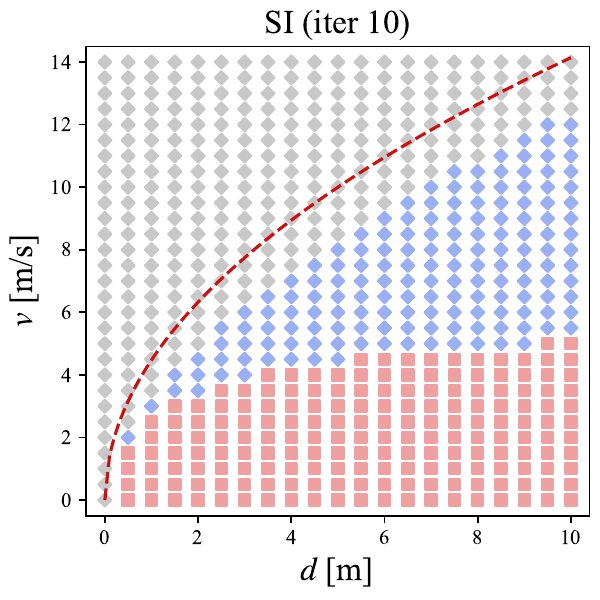}
    }
    \subfloat{
        \includegraphics[trim=10 20 10 20, width=0.22\linewidth]{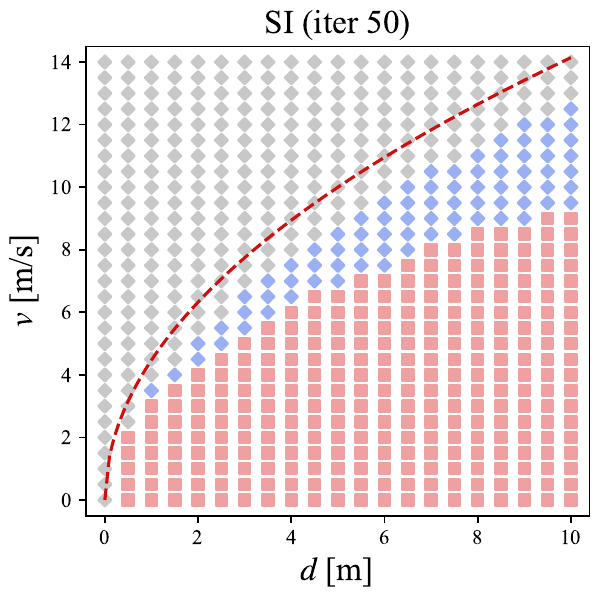}
    }
    \subfloat{
        \includegraphics[trim=10 20 10 20, width=0.22\linewidth]{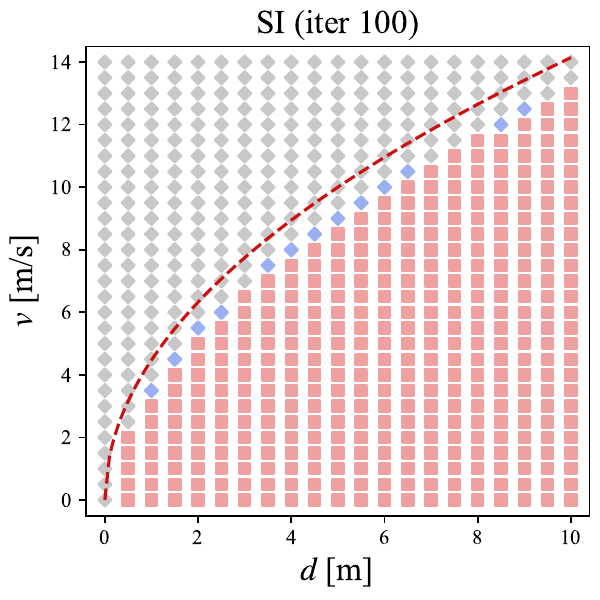}
    }
    \subfloat{
        \includegraphics[trim=10 20 10 20, width=0.22\linewidth]{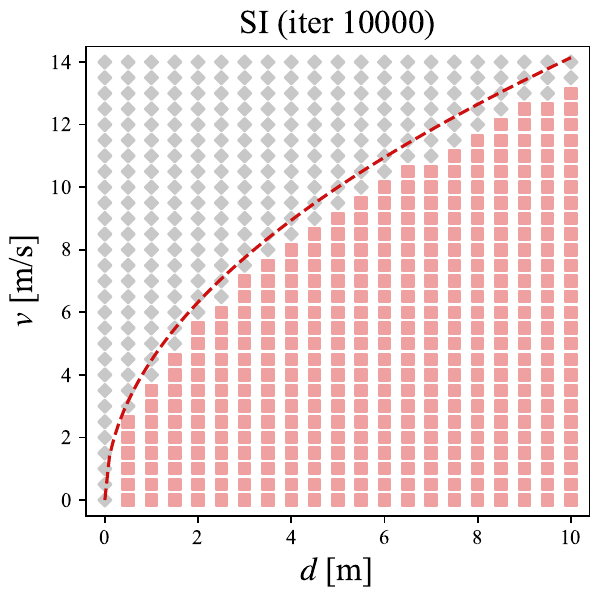}
    }
    \\
    \subfloat{
        \includegraphics[width=0.9\linewidth, trim=30 5 0 185, clip]{Figure/legend_rl.pdf}
    }
    \caption{Feasible regions of RL under the same SI constraint ($n=0.5, k=0.23$) at different iterations.}
    \label{fig: RL region SI}
\end{figure}

Figure \ref{fig: MPC trajectory region HJR} shows state trajectories and feasible regions of MPC under HJ reachability constraint. Being a CA-based virtual-time constraint, the HJ reachability constraint yields identical IFR and EFR of the optimal policy. In addition, since this constraint is built upon the reachability function of a safety-oriented policy, both its IFR and EFR equal the maximum EFR, which coincides with the analysis in Section \ref{sec: HJR}. This indicates that CA-based virtual-time constraints result in the maximum EFR as long as the optimal feasibility function is found. Comparing the right figure in Figure \ref{fig: MPC trajectory region HJR} and the last figure in Figure \ref{fig: MPC trajectory CBF}, we observe that the policy under HJ reachability constraint is less conservative than that under CBF constraint, even though the feasible regions of the two constraints are identical. The reason also lies in the fact that CBF constraint restricts the value-increasing rate while HJ reachability constraint does not.

\begin{figure}
    \centering
    \subfloat{
        \includegraphics[trim=10 20 10 20, width=0.22\linewidth]{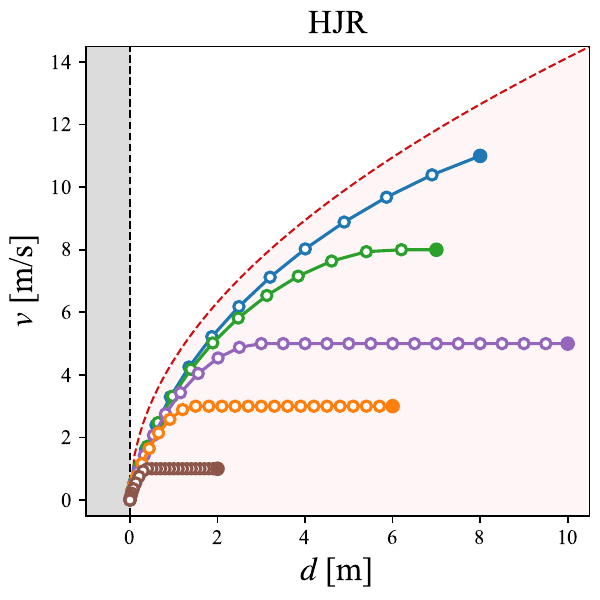}
    }
    \quad
    \subfloat{
        \includegraphics[trim=10 20 10 20, width=0.22\linewidth]{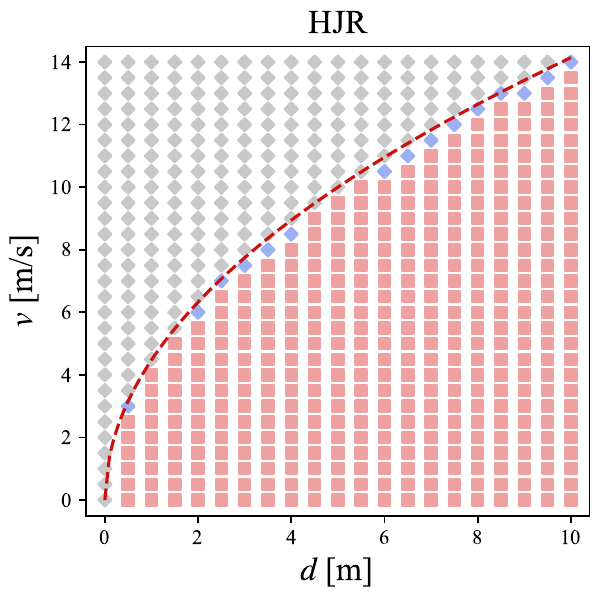}
    }
    \\
    \subfloat{
        \includegraphics[width=0.6\linewidth, trim=130 8 90 188, clip]{Figure/legend_mpc.pdf}
    }
    \caption{State trajectories and feasible regions of MPC under HJ reachability constraint. ``HJR'' stands for ``HJ reachability''.}
    \label{fig: MPC trajectory region HJR}
\end{figure}

Figure \ref{fig: RL trajectory HJR} and \ref{fig: RL region HJR} show state trajectories and feasible regions of RL under HJ reachability constraint at different iterations. As is the case in other constraints, the EFR monotonically expands to the maximum one, and the state trajectories converge to the optimal ones. Like SI, the IFR under HJ reachability constraint is much larger than the EFR at an early stage of training, which is also because of the minimum restrictiveness on the next state.

\begin{figure}
    \centering
    \subfloat{
        \includegraphics[trim=10 20 10 20, width=0.22\linewidth]{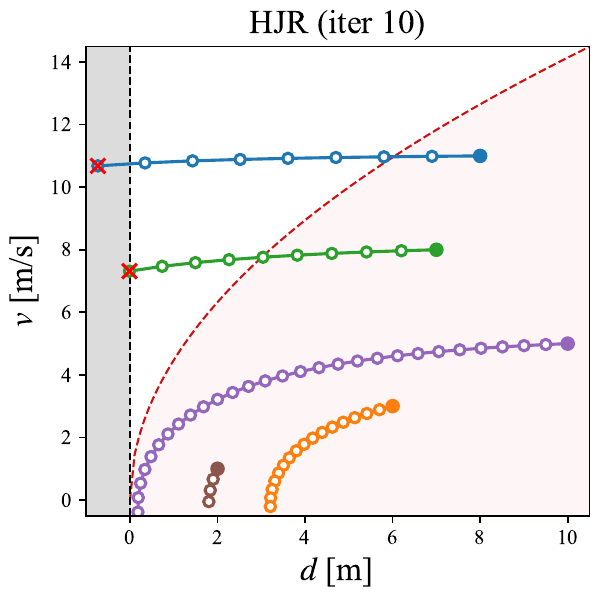}
    }
    \subfloat{
        \includegraphics[trim=10 20 10 20, width=0.22\linewidth]{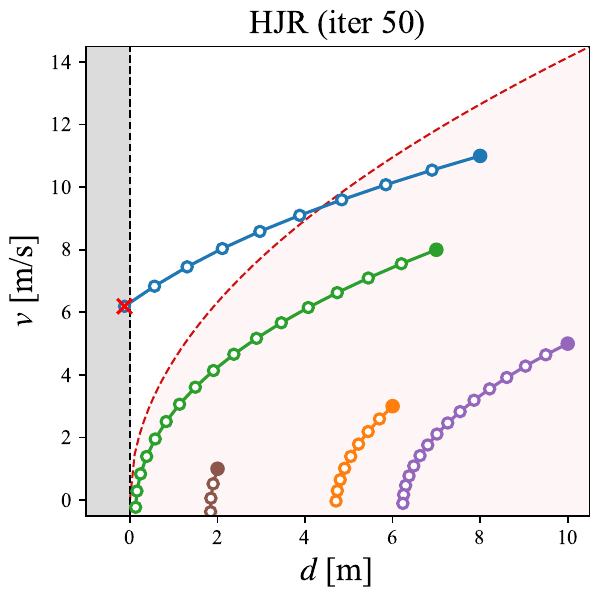}
    }
    \subfloat{
        \includegraphics[trim=10 20 10 20, width=0.22\linewidth]{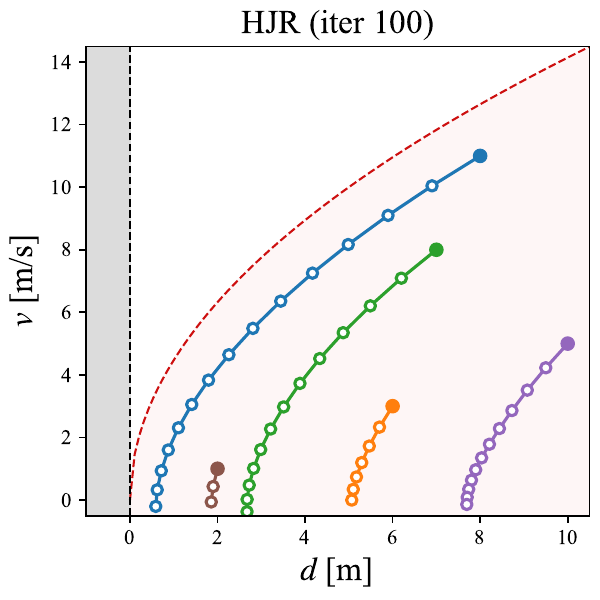}
    }
    \subfloat{
        \includegraphics[trim=10 20 10 20, width=0.22\linewidth]{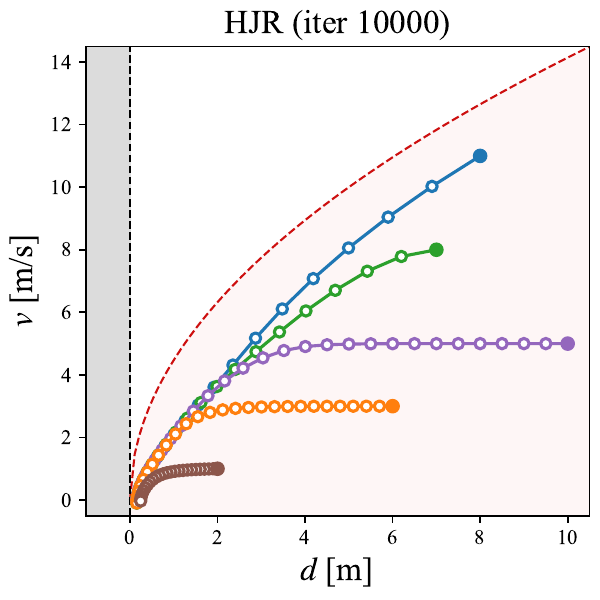}
    }
    \caption{State trajectories of RL under HJR constraint at different iterations.}
    \label{fig: RL trajectory HJR}
\end{figure}

\begin{figure}
    \centering
    \subfloat{
        \includegraphics[trim=10 20 10 20, width=0.22\linewidth]{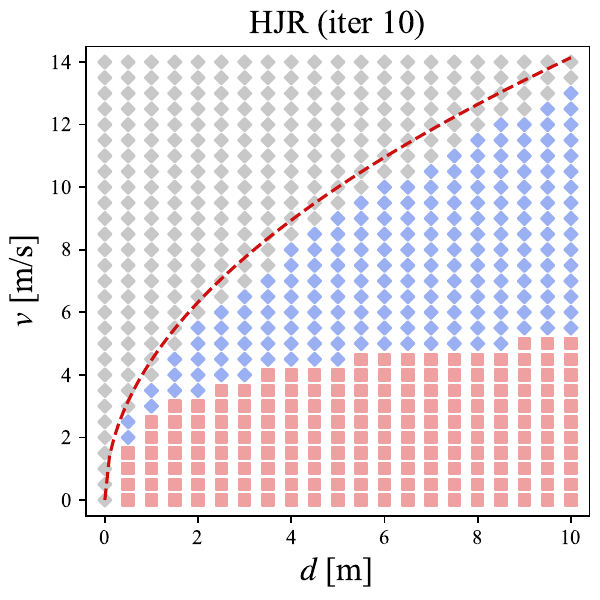}
    }
    \subfloat{
        \includegraphics[trim=10 20 10 20, width=0.22\linewidth]{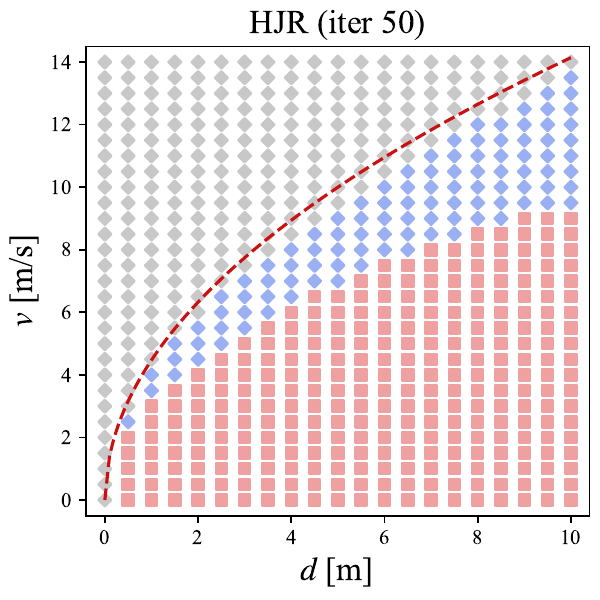}
    }
    \subfloat{
        \includegraphics[trim=10 20 10 20, width=0.22\linewidth]{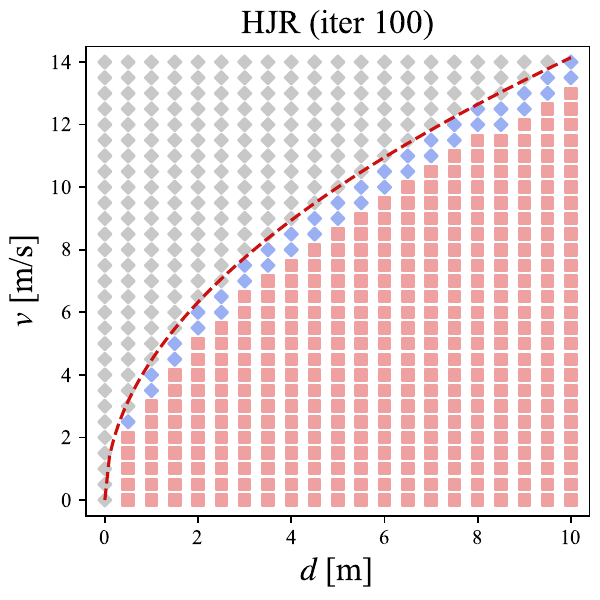}
    }
    \subfloat{
        \includegraphics[trim=10 20 10 20, width=0.22\linewidth]{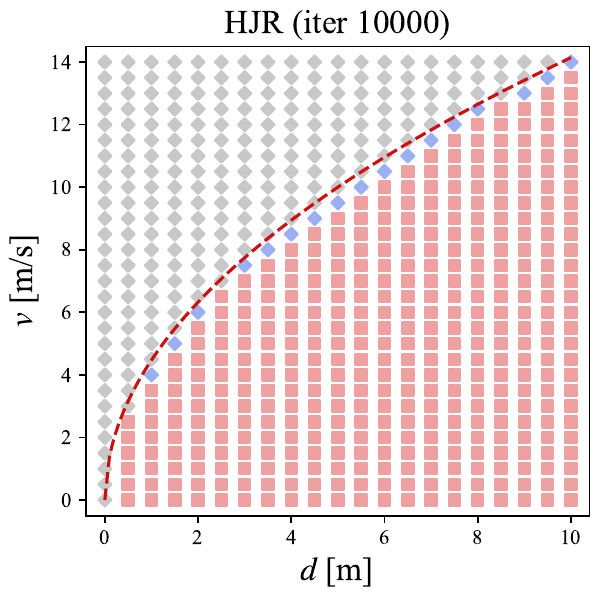}
    }
    \\
    \subfloat{
        \includegraphics[width=0.9\linewidth, trim=30 5 0 185, clip]{Figure/legend_rl.pdf}
    }
    \caption{Feasible regions of RL under HJR constraint at different iterations.}
    \label{fig: RL region HJR}
\end{figure}

\section{\change{Unicycle obstacle avoidance}}
\change{
In this task, a unicycle aims to move forward while avoiding a circular obstacle in the 2D plane, as shown in Figure \ref{fig: unicycle}.
\begin{figure}[h]
    \centering
    \includegraphics[width=0.4\linewidth]{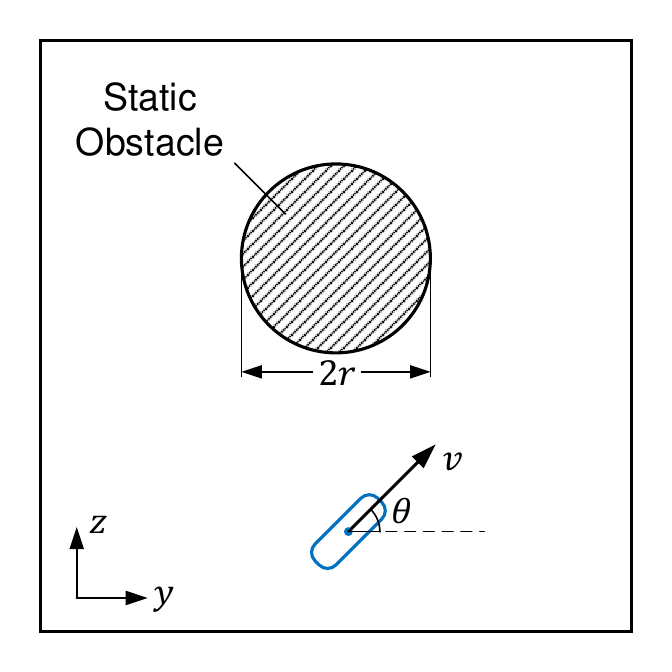}
    \caption{Unicycle obstacle avoidance scenario.}
    \label{fig: unicycle}
\end{figure}
The unicycle has a four-dimensional state $x_t=[y_t,z_t,v_t,\theta_t]^\top$ and a two-dimensional action $u_t=[a_t,\omega_t]^\top$, where $(y,z)$ is the coordinate on the 2D plane, $v$ is the velocity, $\theta$ is the heading angle, $a$ is the acceleration, and $\omega$ is the angular velocity. The system dynamics are given as follows:
\begin{equation}
\begin{aligned}
    y_{t+1} &= y_t + \Delta t \cdot v_t \cos \theta_t \\
    z_{t+1} &= z_t + \Delta t \cdot v_t \sin \theta_t \\
    v_{t+1} &= v_t + \Delta t \cdot a_t \\
    \theta_{t+1} &= \theta_t + \Delta t \cdot \omega_t
\end{aligned}
\end{equation}
The time step is $\Delta t = 0.1\mathrm{s}$.
The actions are bounded as $a \in [-1, 1] \ \mathrm{m/s^2}$ and $\omega \in [-\pi/4, \pi/4] \ \mathrm{rad/s}$.
The reward function is defined as
\begin{equation}
    r(x_t,u_t) = z_t,
\end{equation}
which encourages forward progress in the $z$-direction.
The constraint is defined to avoid collision with the obstacle:
\begin{equation}
    h(x_t) = r - \left\lVert [y_t, z_t]^\top - p_{\mathrm{obs}} \right\rVert_2 \le 0,
\end{equation}
where $r=0.5\mathrm{m}$ and $p_{\mathrm{obs}}=[0,0]^\top$ is the diameter and the center of the obstacle, respectively.
}

\change{
We consider three types of virtual-time constraints: (1) pointwise constraint, (2) CBF constraint, and (3) HJ reachability constraint. The pointwise constraint follows its standard formulation. The CBF is designed as
\begin{equation}
    B(x_{i|t}) = 0.7 - d_{i|t} - k \cdot \dot d_{i|t},
\end{equation}
where $k>0$ is a tunable parameter, and $d_{i|t}$ is the distance between the unicycle and the obstacle, with its rate of change computed as
\begin{equation}
    \dot d_{i|t} = v_{i|t}\cos(\theta_{i|t} - \arctan(z_{i|t}/y_{i|t})).
\end{equation}
The HJ reachability function cannot be computed analytically for this system. Instead, we use a neural network to approximate it and jointly train it with an RL policy using the feasible policy iteration \citep{yang2023feasible} framework. Training the HJ reachability function relies on the risky self-consistenty condition introduced in Section \ref{sec: HJR}. We follow the practice of \citet{fisac2019bridging}, which introduces a discount factor $\gamma\in(0,1)$ to the self-consistency condition to construct a contraction mapping:
\begin{equation}
    F^\pi(x) = (1-\gamma)h(x) + \gamma\max\{h(x),F^\pi(x')\}.
\end{equation}
Iteratively updating $F^\pi(x)$ using this equation converges to the fixed point, which approaches the true HJ reachability function as $\gamma\to1$ \citep{fisac2019bridging}.
}

\change{
In the following, we present the trajectories and feasible regions for MPC under the pointwise and CBF constraints, as well as the results for RL under the HJ reachability constraint.
To visualize the results on the 2D plane, we fix the initial velocity as $v_0=1\mathrm{m/s}$ and the initial heading angle as $\theta_0=\pi/2$.
Figure \ref{fig: unicycle MPC PW} shows the state trajectories and feasible regions of MPC under pointwise constraints with different horizons. Similar to the emergency braking task, short-sighted pointwise constraints lead to smaller EFRs, with the trajectories colliding with the obstacle. As the horizon increases, more trajectories become safe, the EFR expands, and the IFR approaches the EFR.
\begin{figure}[ht]
    \centering
    \subfloat{
        \includegraphics[trim=10 20 10 20, width=0.22\linewidth]{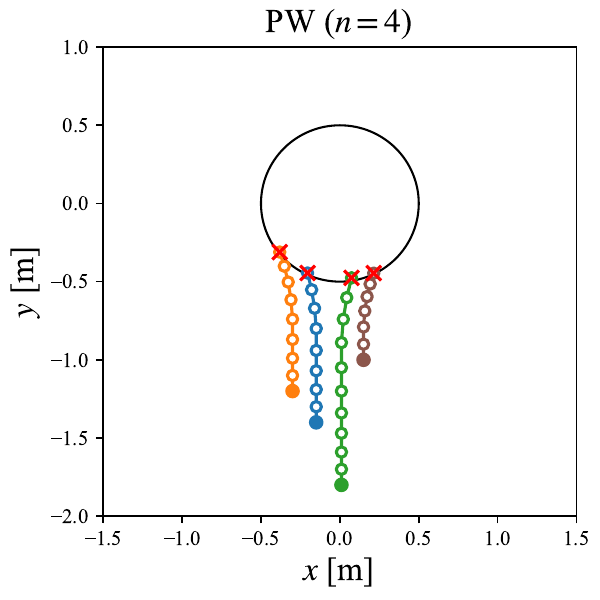}
    }
    \subfloat{
        \includegraphics[trim=10 20 10 20, width=0.22\linewidth]{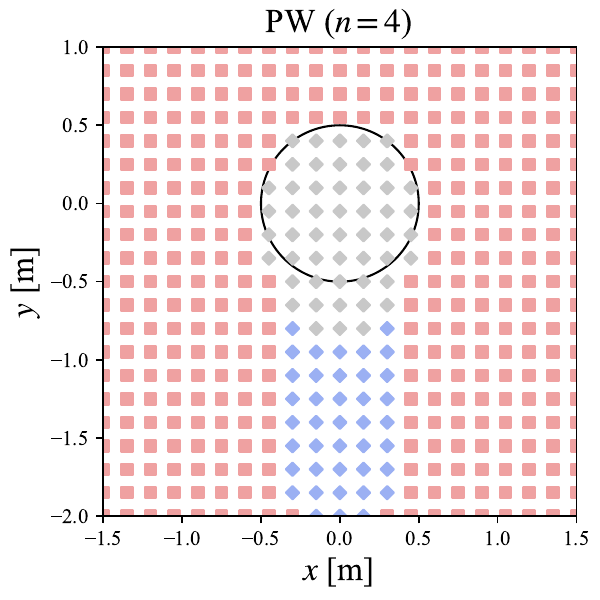}
    }
    \subfloat{
        \includegraphics[trim=10 20 10 20, width=0.22\linewidth]{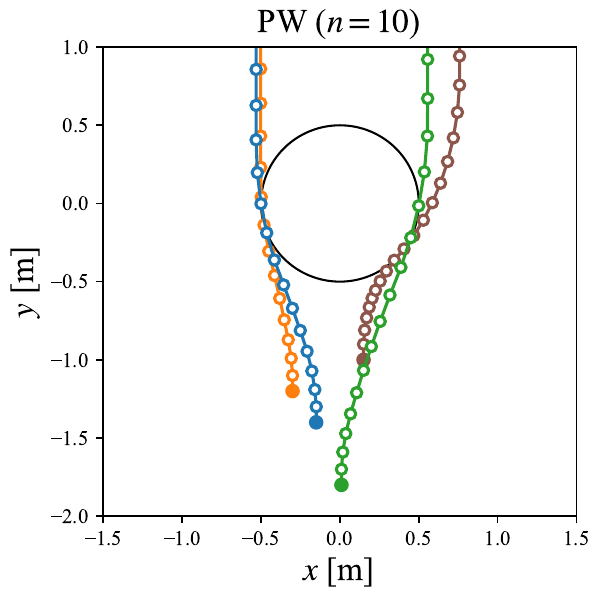}
    }
    \subfloat{
        \includegraphics[trim=10 20 10 20, width=0.22\linewidth]{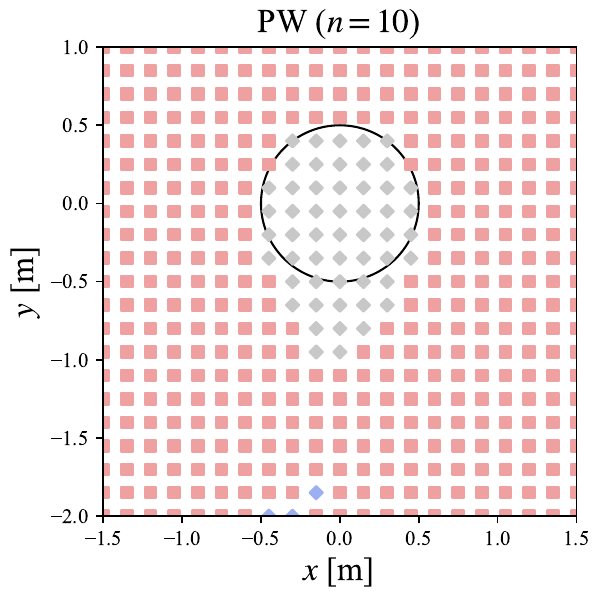}
    }
    \\
    \subfloat{
        \includegraphics[width=0.6\linewidth, trim=130 8 90 188, clip]{Figure/legend_mpc.pdf}
    }
    \caption{State trajectories and feasible regions of MPC under pointwise constraints.}
    \label{fig: unicycle MPC PW}
\end{figure}
}

\change{
Figure \ref{fig: unicycle MPC CBF} shows the state trajectories and feasible regions of MPC under CBF constraints with different parameters. As the parameter $k$ decreases, the EFR expands. It is worth noting that the IFR always equals the EFR under CBF constraints, consistent with our analysis in Section \ref{sec: CBF}. 
An interesting phenomenon is that although the initial states of some trajectories are infeasible, these trajectories can still avoid collision with the obstacle. This is because the virtual CBF constraint is more conservative than the real-time constraint, and violating the CBF constraint does not necessarily lead to collision in the real-time domain. This conservativeness can also be observed from the smaller EFR under CBF constraints compared to pointwise constraints with long horizons in Figure \ref{fig: unicycle MPC PW}.
\begin{figure}[ht]
    \centering
    \subfloat{
        \includegraphics[trim=10 20 10 20, width=0.22\linewidth]{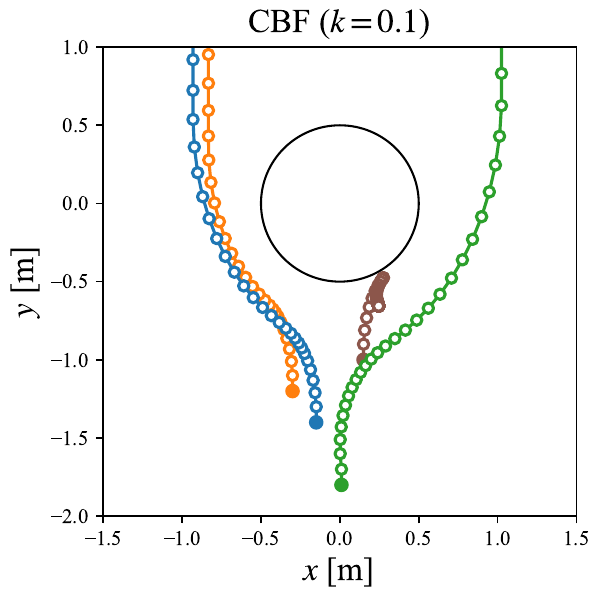}
    }
    \subfloat{
        \includegraphics[trim=10 20 10 20, width=0.22\linewidth]{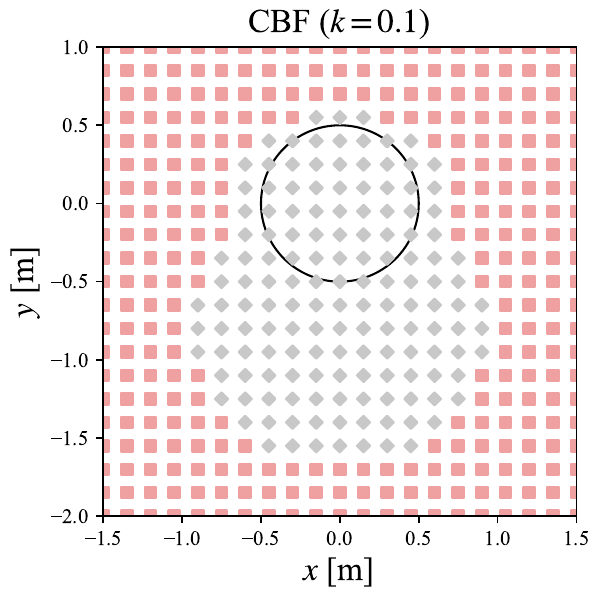}
    }
    \subfloat{
        \includegraphics[trim=10 20 10 20, width=0.22\linewidth]{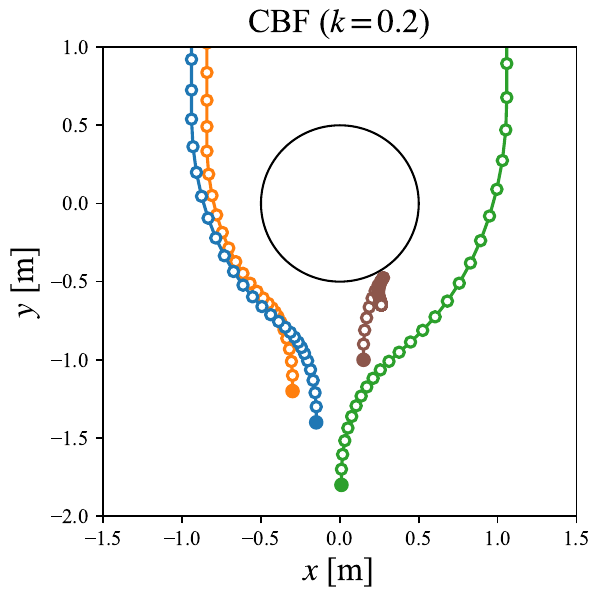}
    }
    \subfloat{
        \includegraphics[trim=10 20 10 20, width=0.22\linewidth]{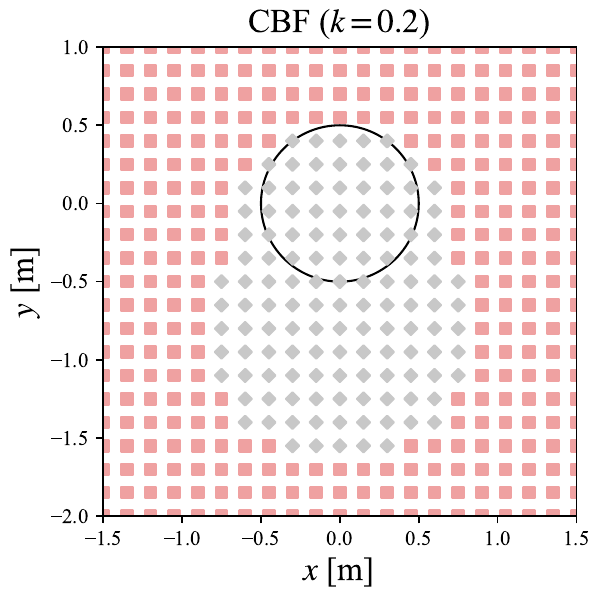}
    }
    \\
    \subfloat{
        \includegraphics[width=0.6\linewidth, trim=130 8 90 188, clip]{Figure/legend_mpc.pdf}
    }
    \caption{State trajectories and feasible regions of RL under HJR constraints.}
    \label{fig: unicycle MPC CBF}
\end{figure}
}

\change{
Figure \ref{fig: unicycle RL} shows the state trajectories and feasible regions of RL under the HJ reachability constraint at convergence. The EFR and IFR are almost identical, consistent with our analysis in Section \ref{sec: HJR}. The trajectories successfully avoid collision with the obstacle and the EFR is significantly larger than that under CBF constraints, close to that of the pointwise constraint with a long horizon. This demonstrates the effectiveness of the learning-based method for approximating the optimal feasibility function. It also shows the advantage of CA-type learning-based feasibility functions: they can achieve larger feasible regions with less conservativeness compared to CIS-type analytical feasibility functions while avoiding the computational complexity of long-horizon pointwise constraints.
\begin{figure}[ht]
    \centering
    \subfloat{
        \includegraphics[trim=10 20 10 20, width=0.22\linewidth]{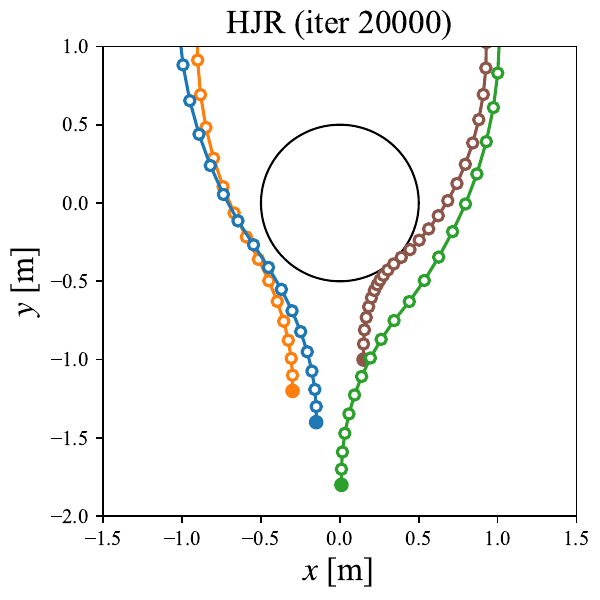}
    }
    \quad
    \subfloat{
        \includegraphics[trim=10 20 10 20, width=0.22\linewidth]{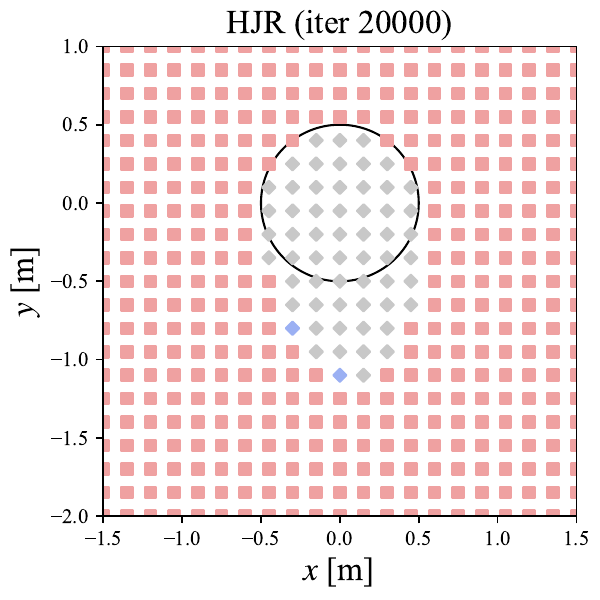}
    }
    \\
    \subfloat{
        \includegraphics[width=0.9\linewidth, trim=30 5 0 185, clip]{Figure/legend_rl.pdf}
    }
    \caption{State trajectories and feasible regions of MPC under CBF constraints.}
    \label{fig: unicycle RL}
\end{figure}
}

\chapter{Conclusion}
\label{conclusion}
This paper proposes a feasibility theory that applies to both MPC and RL by decoupling states, constraints, and policies in constrained OCPs. We reveal that feasibility depends on the combination of these three elements, extending the traditional viewpoint of OCP-specific feasibility theories in MPC. Based on separated definitions of initial and endless, state and policy feasibility, we analyze the containment relationships between different feasible regions, which enables us to describe feasibility under arbitrary combinations of states, constraints, and policies. After that, we provide virtual-time constraint design rules along with a practical design tool called feasibility function that helps achieve the maximum feasible region. The feasibility function is categorized into control invariant set and constraint aggregation, which lead to different kinds of virtual-time constraints. We review most existing constraint formulations and point out that they are essentially applications of feasibility functions in different forms. Finally, we demonstrate our feasibility theory in an emergency braking control task by visualizing initially and endlessly feasible regions of MPC and RL policies under different kinds of virtual-time constraints.

\begin{acknowledgements}
Yujie Yang, Zhilong Zheng, and Shengbo Eben Li are supported by the National Natural Science Foundation of China (Grant No. 92582205) and the Beijing Natural Science Foundation (Grant No. L257002).
\end{acknowledgements}

\backmatter  

\printbibliography

\end{document}